\documentclass[12pt]{amsart}
\usepackage{amsfonts}
\usepackage{wasysym}
\usepackage{amsmath}
\usepackage{amssymb}
\usepackage{color}
\usepackage{graphicx}
\usepackage{xspace}
\usepackage{axodraw4j}
\usepackage{enumerate}
\usepackage{tikz}
\usepackage{pstricks}

 \font \eightrm=cmr8
 
\input xy
\xyoption{all}

 \newcommand{\nc}{\newcommand}

 \setlength{\textheight}{9in}
 \setlength{\topmargin}{-10pt}
 \setlength{\textwidth}{6.9in}
 \setlength{\oddsidemargin}{-17pt}
 \setlength{\evensidemargin}{-17pt}

\nc{\surj}{\to\hskip -3mm \to}
\nc{\smallsq}{\mathop{\scriptscriptstyle\square}\nolimits}
\nc{\butcher}{{\scriptstyle\circleright}}
\nc{\rbutcher}{{\scriptstyle\circleleft}}
\nc\sqbullet{\mathop{\scriptscriptstyle\blacksquare}\nolimits}

\nc{\smop}[1]{\mathop{\hbox {\eightrm #1} }\nolimits}

\nc{\un}{{\mathbf 1}}
\nc{\Ab}{{\mathbb A}}
\nc{\Kb}{{\mathbb K}}
\nc{\Db}{{\mathbb D}}
\nc{\Rb}{{\mathbb R}}
\nc{\la}{{\langle}}
\nc{\ra}{{\rangle}}
\nc{\sh}{{\,\llcorner\!\llcorner\!\!\!\lrcorner\,}}
\nc{\sq}{{\,\llcorner\!\!\!\lrcorner\,}}
\nc{\Hc}{{\mathcal H}}
\nc{\Lc}{{\mathcal L}}
\nc{\Ec}{{E}}
\nc{\Gc}{{\mathrm g}}
\nc{\zsf}{{\mathrm z}}
\nc{\psf}{{\mathrm d}}
\nc{\nsf}{{\mathrm n}}
\nc{\unit}{{1}}
\nc{\R}{{\mathbb R}}
\nc{\CC}{{\mathbb{C}}}
\nc{\T}{{\mathbb{T}}}
\nc{\Hb}{{\mathbb H}}
\nc{\Ub}{{\mathbb U}}
\nc{\Vb}{{\mathbb V}}
\nc{\Sb}{{\mathbb S}}
\nc{\Nb}{{\mathbb N}}
\nc{\Zb}{{\mathbb Z}}
\nc{\Lb}{{\mathbb L}}
\nc{\Xc}{{\mathcal X}}
\nc{\conc}{{\mathrm{co}}}
\nc{\shuf}{{\mathrm{sh}}}
\nc{\Ap}{{\mathbb A^+}}
\nc{\ind}{{\pi}}

%
%

\def\baarr{
 \begin{tikzpicture}
\draw[ultra thick] (0.1,-0.4) -- (0.1,0); 
 \end{tikzpicture}}
 
\def\nci{
 \begin{tikzpicture}
\draw[thick] (0,-1) -- (0,0); 
 \end{tikzpicture}}

\def\ncii{
 \begin{tikzpicture}
\draw[thick]  (0,0) -- (1,0);
\draw[thick] (0,-1) -- (0,0); 
\draw[thick]  (1,-1) -- (1,0); 
 \end{tikzpicture}}

\def\nciiib{
 \begin{tikzpicture}
\draw[thick]  (0,0) -- (1,0);
\draw[thick] (0,-1) -- (0,0); 
\draw[thick]  (0.5,-1) -- (0.5,-0.2); 
\draw[thick] (1,-1) -- (1,0); 
\draw[thick]  (0.5,-0.2) -- (1.5,-0.2);
\draw[thick]  (1.5,-1) -- (1.5,-0.2); 
 \end{tikzpicture}}

\def\nciiic{
 \begin{tikzpicture}
\draw[thick]  (0,0) -- (2,0);
\draw[thick] (0,-1) -- (0,0); 
\draw[thick]  (0.5,-1) -- (0.5,-0.2); 
\draw[thick]  (0.5,-0.2) -- (1.5,-0.2);
\draw[thick]  (1.5,-1) -- (1.5,-0.2); 
\draw[thick] (2,-1) -- (2,0); 
 \end{tikzpicture}}

\def\nciiia{
 \begin{tikzpicture}
\draw[thick]  (0,0) -- (2,0);
\draw[thick] (0,-1) -- (0,0); 
\draw[thick]  (1,-1) -- (1,-0.2); 
\draw[thick]  (2,-1) -- (2,0); 
 \end{tikzpicture}}

\def\nciiiiia{
 \begin{tikzpicture}
\draw[thick]  (0,0) -- (3,0);
\draw[thick] (0,-1) -- (0,0); 
\draw[thick]  (0.75,-1) -- (0.75,-.2); 
\draw[thick]  (1.5,-.2) -- (2.5,-.2); 
\draw[thick]  (1.5,-1) -- (1.5,-.2); 
\draw[thick]  (2.5,-1) -- (2.5,-.2); 
\draw[thick]  (3,-1) -- (3,0); 
 \end{tikzpicture}}

\def\nciiiiiia{
 \begin{tikzpicture}
\draw[thick]  (0,0) -- (4,0);
\draw[thick] (0,-1) -- (0,0); 
\draw[thick]  (0.5,-1) -- (0.5,-.2); 
\draw[thick]  (1,-1) -- (1,0); 
\draw[thick]  (1.5,-.2) -- (3.5,-.2); 
\draw[thick]  (1.5,-1) -- (1.5,-.2); 
\draw[thick]  (3.5,-1) -- (3.5,-.2); 
\draw[thick]  (2,-1) -- (2,-.4); 
\draw[thick]  (2.5,-1) -- (2.5,-.2); 
\draw[thick]  (4,-1) -- (4,0); 
\draw[thick]  (4.5,0) -- (5.5,0); 
\draw[thick]  (4.5,-1) -- (4.5,0); 
\draw[thick]  (5.5,-1) -- (5.5,0); 

 \end{tikzpicture}}

\def\aFPxxyyFPb{\scalebox{0.3}{
\fcolorbox{white}{white}{
  \begin{picture}(224,66) (35,-63)
    \SetWidth{1.0}
    \SetColor{Black}
    \GOval(128,-30)(32,32)(0){0.882}
    \SetWidth{1.5}
    \Line(48,-30)(96,-30)
    \GOval(128,-30)(32,32)(0){0.882}
    \Line(160,-30)(208,-30)
    \Text(32,-30)[lb]{\Large{\Black{$1$}}}
    \Text(215,-30)[lb]{\Large{\Black{$2$}}}
  \end{picture}
}
}}

\def\aFPxxyyFPbAMPU{\scalebox{0.3}{
\fcolorbox{white}{white}{
  \begin{picture}(224,66) (35,-63)
    \SetWidth{1.0}
    \SetColor{Black}
    \GOval(128,-30)(32,32)(0){0.882}
    \SetWidth{1.5}
    \Line(48,-30)(96,-30)
    \SetWidth{1.5}
    \GOval(128,-30)(32,32)(0){0.882}
    \SetWidth{1.5} 
    \Line(160,-30)(178,-30)
    \Text(32,-30)[lb]{\Large{\Black{$1$}}}
    \Text(185,-30)[lb]{\Large{\Black{$2$}}}
  \end{picture}
}
}}

\def\aFPxxyyFPzzuuFPb{\scalebox{0.3}{
\fcolorbox{white}{white}{
  \begin{picture}(368,66) (35,-63)
    \SetWidth{1.0}
    \SetColor{Black}
    \GOval(128,-30)(32,32)(0){0.882}
    \SetWidth{1.5}
    \GOval(272,-30)(32,32)(0){0.882}
    \Line(160,-30)(240,-30)
    \Line(48,-30)(96,-30)
    \GOval(128,-30)(32,32)(0){0.882}
    \Line(304,-30)(352,-30)
    \Text(32,-30)[lb]{\Large{\Black{$1$}}}
    \Text(360,-30)[lb]{\Large{\Black{$2$}}}
  \end{picture}
}
}}

\def\gTgTgTg{\scalebox{0.3}{
\fcolorbox{white}{white}{
  \begin{picture}(512,66) (35,-63)
    \SetWidth{1.0}
    \SetColor{Black}
    \GOval(128,-30)(32,32)(0){0.882}
    \SetWidth{1.5}
    \GOval(416,-30)(32,32)(0){0.882}
    \Line(160,-30)(240,-30)
    \Line(48,-30)(96,-30)
    \GOval(128,-30)(32,32)(0){0.882}
    \Line(448,-30)(496,-30)
    \Text(32,-30)[lb]{\Large{\Black{$1$}}}
    \Text(512,-30)[lb]{\Large{\Black{$2$}}}
    \GOval(272,-30)(32,32)(0){0.882}
    \Line(304,-30)(384,-30)
  \end{picture}
}
}}

\def\aFPxxyyFPzzuvuFPbvFPc{\scalebox{0.3}{
\fcolorbox{white}{white}{
  \begin{picture}(368,124) (35,-43)
    \SetWidth{1.0}
    \SetColor{Black}
    \GOval(128,28)(32,32)(0){0.882}
    \SetWidth{1.5}
    \GOval(272,28)(32,32)(0){0.882}
    \Line(160,28)(240,28)
    \Line(48,28)(96,28)
    \GOval(128,28)(32,32)(0){0.882}
    \Line(272,-4)(272,-52)
    \Line(304,28)(352,28)
    \Text(32,28)[lb]{\Large{\Black{$1$}}}
    \Text(360,28)[lb]{\Large{\Black{$2$}}}
    \Text(272,-68)[lb]{\Large{\Black{$3$}}}
  \end{picture}
}
}}

\def\aFPxbFPyxyzzFPuuvvFPc{\scalebox{0.3}{
\fcolorbox{white}{white}{
  \begin{picture}(214,212) (35,80)
    \SetWidth{1.0}
    \SetColor{Black}
    \GOval(128,156)(32,32)(0){0.882}
    \SetWidth{1.5}
    \GOval(128,28)(32,32)(0){0.882}
    \Line(128,124)(128,60)
    \Line(48,156)(96,156)
    \GOval(128,156)(32,32)(0){0.882}
    \Line(208,156)(160,156)
    \Line(128,-4)(128,-52)
    \Text(32,156)[lb]{\Large{\Black{$1$}}}
    \Text(128,-68)[lb]{\Large{\Black{$3$}}}
    \Text(215,156)[lb]{\Large{\Black{$2$}}}
  \end{picture}
}
}}

\def\aFPxcFPzxyzyFPuuvvFPc{\scalebox{0.3}{
\fcolorbox{white}{white}{
  \begin{picture}(368,124) (35,-43)
    \SetWidth{1.0}
    \SetColor{Black}
    \GOval(128,28)(32,32)(0){0.882}
    \SetWidth{1.5}
    \GOval(272,28)(32,32)(0){0.882}
    \Line(160,28)(240,28)
    \Line(48,28)(96,28)
    \GOval(128,28)(32,32)(0){0.882}
    \Line(128,-4)(128,-52)
    \Line(304,28)(352,28)
    \Text(32,28)[lb]{\Large{\Black{$1$}}}
    \Text(360,28)[lb]{\Large{\Black{$2$}}}
    \Text(128,-68)[lb]{\Large{\Black{$3$}}}
  \end{picture}
}
}}

\def\kk{\scalebox{0.3}{
\fcolorbox{white}{white}{
  \begin{picture}(448,124) (35,0)
    \SetWidth{1.0}
    \SetColor{Black}
    \GOval(128,28)(32,32)(0){0.882}
    \SetWidth{1.5}
    \GOval(352,28)(32,32)(0){0.882}
    \Line(160,28)(208,28)
    \Line(48,28)(96,28)
    \GOval(128,28)(32,32)(0){0.882}
    \Line(352,-4)(352,-52)
    \Line(384,28)(432,28)
    \Text(32,28)[lb]{\Large{\Black{$1$}}}
    \Text(352,-68)[lb]{\Large{\Black{$3$}}}
    \Text(440,28)[lb]{\Large{\Black{$2$}}}
    \GOval(240,28)(32,32)(0){0.882}
    \Line(272,28)(320,28)
  \end{picture}
}
}}
  
\def\ll{\scalebox{0.3}{
\fcolorbox{white}{white}{
  \begin{picture}(224,48) (47,227)
    \SetWidth{1.0}
    \SetColor{Black}
    \GOval(128,252)(32,32)(0){0.882}
    \SetWidth{1.5}
    \GOval(128,28)(32,32)(0){0.882}
    \Line(128,220)(128,172)
    \Line(48,252)(96,252)
    \GOval(128,252)(32,32)(0){0.882}
    \Line(208,252)(160,252)
    \Line(128,-4)(128,-52)
    \Text(32,252)[lb]{\Large{\Black{$1$}}}
    \Text(128,-68)[lb]{\Large{\Black{$3$}}}
    \Text(215,252)[lb]{\Large{\Black{$2$}}}
    \GOval(128,140)(32,32)(0){0.882}
    \Line(128,108)(128,60)
  \end{picture}
}
}}
 
\def\mm{\scalebox{0.3}{
\fcolorbox{white}{white}{
  \begin{picture}(448,124) (35,0)
    \SetWidth{1.0}
    \SetColor{Black}
    \GOval(128,28)(32,32)(0){0.882}
    \SetWidth{1.5}
    \GOval(352,28)(32,32)(0){0.882}
    \Line(160,28)(208,28)
    \Line(48,28)(96,28)
    \GOval(128,28)(32,32)(0){0.882}
    \Line(128,-4)(128,-52)
    \Line(384,28)(432,28)
    \Text(32,28)[lb]{\Large{\Black{$1$}}}
    \Text(440,28)[lb]{\Large{\Black{$2$}}}
    \Text(128,-68)[lb]{\Large{\Black{$3$}}}
    \GOval(240,28)(32,32)(0){0.882}
    \Line(272,28)(320,28)
  \end{picture}
}
}}

\def\nn{\scalebox{0.3}{
\fcolorbox{white}{white}{
  \begin{picture}(336,236) (35,115)
    \SetWidth{1.0}
    \SetColor{Black}
    \GOval(128,140)(32,32)(0){0.882}
    \SetWidth{1.5}
    \GOval(240,28)(32,32)(0){0.882}
    \Line(160,140)(208,140)
    \Line(48,140)(96,140)
    \GOval(128,140)(32,32)(0){0.882}
    \Line(320,140)(272,140)
    \Line(240,-4)(240,-52)
    \Text(32,140)[lb]{\Large{\Black{$1$}}}
    \Text(240,-68)[lb]{\Large{\Black{$3$}}}
    \Text(330,140)[lb]{\Large{\Black{$2$}}}
    \GOval(240,140)(32,32)(0){0.882}
    \Line(240,108)(240,60)
  \end{picture}
}
}}
 
\def\oo{\scalebox{0.3}{
\fcolorbox{white}{white}{
  \begin{picture}(448,124) (35,0)
    \SetWidth{1.0}
    \SetColor{Black}
    \GOval(128,28)(32,32)(0){0.882}
    \SetWidth{1.5}
    \GOval(352,28)(32,32)(0){0.882}
    \Line(160,28)(208,28)
    \Line(48,28)(96,28)
    \GOval(128,28)(32,32)(0){0.882}
    \Line(240,-4)(240,-52)
    \Line(384,28)(432,28)
    \Text(32,28)[lb]{\Large{\Black{$1$}}}
    \Text(240,-68)[lb]{\Large{\Black{$3$}}}
    \Text(440,28)[lb]{\Large{\Black{$2$}}}
    \GOval(240,28)(32,32)(0){0.882}
    \Line(272,28)(320,28)
  \end{picture}
}
}}

\def\pp{\scalebox{0.3}{
\fcolorbox{white}{white}{
  \begin{picture}(336,36) (35,120)
    \SetWidth{1.0}
    \SetColor{Black}
    \GOval(128,140)(32,32)(0){0.882}
    \SetWidth{1.5}
    \GOval(128,28)(32,32)(0){0.882}
    \Line(160,140)(208,140)
    \Line(48,140)(96,140)
    \GOval(128,140)(32,32)(0){0.882}
    \Line(128,108)(128,60)
    \Line(128,-4)(128,-52)
    \Text(32,140)[lb]{\Large{\Black{$1$}}}
    \Text(128,-68)[lb]{\Large{\Black{$3$}}}
    \Text(330,140)[lb]{\Large{\Black{$2$}}}
    \GOval(240,140)(32,32)(0){0.882}
    \Line(272,140)(320,140)
  \end{picture}
}
}}

\def\abI{\scalebox{0.4}{
\fcolorbox{white}{white}{
  \begin{picture}(192,130) (19,-31)
    \SetWidth{1.5}
    \SetColor{Black}
    \COval(96,34)(64,64)(0){Black}{White}
    \SetWidth{1.0}
    \Vertex(160,34){5}
    \Vertex(32,34){5}
    \SetWidth{1.5}
    \Line(32,34)(160,34)
    \Text(16,34)[lb]{\Large{\Black{$1$}}}
    \Text(170,34)[lb]{\Large{\Black{$2$}}}
  \end{picture}
}
}}

\def\abII{\scalebox{0.4}{
\fcolorbox{white}{white}{
  \begin{picture}(192,130) (19,-31)
    \SetWidth{1.5}
    \SetColor{Black}
    \COval(96,34)(64,64)(0){Black}{White}
    \SetWidth{1.0}
    \Vertex(160,34){5}
    \Vertex(32,34){5}
    \SetWidth{1.5}
    \Line(32,34)(160,34)
    \SetWidth{1.0}
    \Vertex(96,34){5}
    \Text(16,34)[lb]{\Large{\Black{$1$}}}
    \Text(170,34)[lb]{\Large{\Black{$2$}}}
  \end{picture}
}
}}

\def\abIII{\scalebox{0.4}{
\fcolorbox{white}{white}{
  \begin{picture}(192,130) (19,-31)
    \SetWidth{1.5}
    \SetColor{Black}
    \COval(96,34)(64,64)(0){Black}{White}
    \SetWidth{1.0}
    \Vertex(160,34){5}
    \Vertex(32,34){5}
    \SetWidth{1.5}
    \Line(32,34)(160,34)
    \SetWidth{1.0}
    \Vertex(118,34){5}
    \Vertex(74,34){5}
    \Text(16,34)[lb]{\Large{\Black{$1$}}}
    \Text(170,34)[lb]{\Large{\Black{$2$}}}
    \Text(90,40)[lb]{\Large{\Black{$x_1$}}}
  \end{picture}
}
}}

\def\abIV{\scalebox{0.4}{
\fcolorbox{white}{white}{
  \begin{picture}(192,130) (19,-31)
    \SetWidth{1.5}
    \SetColor{Black}
    \COval(96,34)(64,64)(0){Black}{White}
    \SetWidth{1.0}
    \Vertex(160,34){5}
    \Vertex(32,34){5}
    \SetWidth{1.5}
    \Line(32,34)(160,34)
    \SetWidth{1.0}
    \Vertex(128,34){5}
    \Vertex(64,34){5}
    \Text(16,34)[lb]{\Large{\Black{$1$}}}
    \Text(170,34)[lb]{\Large{\Black{$2$}}}
    \Text(75,40)[lb]{\Large{\Black{$x_2$}}}
    \Text(109,40)[lb]{\Large{\Black{$x_1$}}}
    \Vertex(96,34){5}
  \end{picture}
}
}}

\def\abci{\scalebox{0.4}{
\fcolorbox{white}{white}{
  \begin{picture}(189,141) (22,15)
    \SetWidth{1.5}
    \SetColor{Black}
    \COval(96,45)(64,64)(0){Black}{White}
    \SetWidth{1.0}
    \Vertex(159,59){5}
    \Vertex(34,61){5}
    \Vertex(96,61){5}
    \Vertex(96,-19){5}
    \Text(19,60)[lb]{\Large{\Black{$1$}}}
    \Text(170,61)[lb]{\Large{\Black{$2$}}}
    \Vertex(64,61){5}
    \SetWidth{1.5}
    \Line(160,61)(32,61)
    \Line(96,61)(96,-19)
    \Text(95,-36)[lb]{\Large{\Black{$3$}}}
    \Text(75,67)[lb]{\Large{\Black{$x_1$}}}
  \end{picture}
}
}}

\def\abcii{\scalebox{0.4}{
\fcolorbox{white}{white}{
  \begin{picture}(189,141) (22,15)
    \SetWidth{1.5}
    \SetColor{Black}
    \COval(96,45)(64,64)(0){Black}{White}
    \SetWidth{1.0}
    \Vertex(159,59){5}
    \Vertex(34,61){5}
    \Vertex(96,61){5}
    \Vertex(96,-19){5}
    \Text(19,60)[lb]{\Large{\Black{$1$}}}
    \Text(170,61)[lb]{\Large{\Black{$2$}}}
    \Vertex(96,29){5}
    \SetWidth{1.5}
    \Line(160,61)(32,61)
    \Line(96,61)(96,-19)
    \Text(95,-36)[lb]{\Large{\Black{$3$}}}
    \Text(103,36)[lb]{\Large{\Black{$x_1$}}}
  \end{picture}
}
}}

\def\abciii{\scalebox{0.4}{
\fcolorbox{white}{white}{
  \begin{picture}(189,141) (22,15)
    \SetWidth{1.5}
    \SetColor{Black}
    \COval(96,45)(64,64)(0){Black}{White}
    \SetWidth{1.0}
    \Vertex(159,59){5}
    \Vertex(34,61){5}
    \Vertex(96,61){5}
    \Vertex(96,-19){5}
    \Text(19,60)[lb]{\Large{\Black{$1$}}}
    \Text(170,61)[lb]{\Large{\Black{$2$}}}
    \Vertex(128,61){5}
    \SetWidth{1.5}
    \Line(160,61)(32,61)
    \Line(96,61)(96,-19)
    \Text(95,-36)[lb]{\Large{\Black{$3$}}}
    \Text(108,65)[lb]{\Large{\Black{$x_1$}}}
  \end{picture}
}
}}

\def\abcI{\scalebox{0.4}{
\fcolorbox{white}{white}{
  \begin{picture}(192,140) (19,-31)
    \SetWidth{1.5}
    \SetColor{Black}
    \COval(96,44)(64,64)(0){Black}{White}
    \SetWidth{1.0}
    \Vertex(159,58){5}
    \Vertex(34,60){5}
    \Vertex(83,43){5}
    \Vertex(96,-20){5}
    \Text(16,60)[lb]{\Large{\Black{$1$}}}
    \Text(170,60)[lb]{\Large{\Black{$2$}}}
    \Vertex(107,33){5}
    \SetWidth{1.5}
    \Line(35,60)(104,34)
    \Line(159,58)(107,32)
    \Line(106,32)(97,-19)
    \Text(96,-36)[lb]{\Large{\Black{$3$}}}
    \SetWidth{1.0}
    \Vertex(60,51){5}
    \Text(70,50)[lb]{\Large{\Black{$x_2$}}}
    \Text(95,40)[lb]{\Large{\Black{$x_1$}}}
  \end{picture}
}
}}

\def\abcII{\scalebox{0.4}{
\fcolorbox{white}{white}{
  \begin{picture}(189,141) (22,-31)
    \SetWidth{1.5}
    \SetColor{Black}
    \COval(96,45)(64,64)(0){Black}{White}
    \SetWidth{1.0}
    \Vertex(34,61){5}
    \Vertex(96,-19){5}
    \Text(19,60)[lb]{\Large{\Black{$1$}}}
    \Text(170,61)[lb]{\Large{\Black{$2$}}}
    \Text(95,-36)[lb]{\Large{\Black{$3$}}}
    \Vertex(160,61){5}
    \Vertex(96,61){5}
    \Vertex(96,8){5}
    \Vertex(96,34){5}
    \SetWidth{1.5}
    \Line(32,61)(160,61)
    \Line(96,61)(96,-19)
    \Text(104,40)[lb]{\Large{\Black{$x_2$}}}
    \Text(104,13)[lb]{\Large{\Black{$x_1$}}}
  \end{picture}
}
}}

\def\abcIII{\scalebox{0.4}{
\fcolorbox{white}{white}{
  \begin{picture}(189,141) (22,-31)
    \SetWidth{1.5}
    \SetColor{Black}
    \COval(96,45)(64,64)(0){Black}{White}
    \SetWidth{1.0}
    \Vertex(159,59){5}
    \Vertex(34,61){5}
    \Vertex(64,61){5}
    \Vertex(96,-19){5}
    \Text(19,60)[lb]{\Large{\Black{$1$}}}
    \Text(170,61)[lb]{\Large{\Black{$2$}}}
    \Vertex(128,61){5}
    \SetWidth{1.5}
    \Line(32,61)(160,61)
    \Line(64,61)(96,-19)
    \Text(95,-36)[lb]{\Large{\Black{$3$}}}
    \SetWidth{1.0}
    \Vertex(96,61){5}
    \Text(75,67)[lb]{\Large{\Black{$x_2$}}}
    \Text(107,67)[lb]{\Large{\Black{$x_1$}}}
  \end{picture}
}
}}

\def\abcIV{\scalebox{0.4}{
\fcolorbox{white}{white}{
  \begin{picture}(189,141) (22,-31)
    \SetWidth{1.5}
    \SetColor{Black}
    \COval(96,45)(64,64)(0){Black}{White}
    \SetWidth{1.0}
    \Vertex(160,61){5}
    \Vertex(96,29){5}
    \Vertex(32,61){5}
    \Vertex(96,-19){5}
    \Text(19,60)[lb]{\Large{\Black{$1$}}}
    \Text(170,61)[lb]{\Large{\Black{$2$}}}
    \Vertex(96,61){5}
    \SetWidth{1.5}
    \Line(96,-19)(96,61)
    \Line(32,61)(160,61)
    \Text(95,-36)[lb]{\Large{\Black{$3$}}}
    \SetWidth{1.0}
    \Vertex(64,61){5}
    \Text(75,67)[lb]{\Large{\Black{$x_2$}}}
    \Text(102,40)[lb]{\Large{\Black{$x_1$}}}
  \end{picture}
}
}}

\def\abcV{\scalebox{0.4}{
\fcolorbox{white}{white}{
  \begin{picture}(189,141) (22,-31)
    \SetWidth{1.5}
    \SetColor{Black}
    \COval(96,45)(64,64)(0){Black}{White}
    \SetWidth{1.0}
    \Vertex(159,59){5}
    \Vertex(34,61){5}
    \Vertex(96,61){5}
    \Vertex(96,-19){5}
    \Text(19,60)[lb]{\Large{\Black{$1$}}}
    \Text(170,61)[lb]{\Large{\Black{$2$}}}
    \Vertex(128,61){5}
    \SetWidth{1.5}
    \Line(160,61)(32,61)
    \Line(96,61)(96,-19)
    \Text(95,-36)[lb]{\Large{\Black{$3$}}}
    \SetWidth{1.0}
    \Vertex(64,61){5}
    \Text(75,67)[lb]{\Large{\Black{$x_2$}}}
    \Text(107,67)[lb]{\Large{\Black{$x_1$}}}
  \end{picture}
}
}}

\def\abcVI{\scalebox{0.4}{
\fcolorbox{white}{white}{
  \begin{picture}(189,141) (22,-31)
    \SetWidth{1.5}
    \SetColor{Black}
    \COval(96,45)(64,64)(0){Black}{White}
    \SetWidth{1.0}
    \Vertex(160,61){5}
    \Vertex(128,61){5}
    \Vertex(32,61){5}
    \Vertex(96,-19){5}
    \Text(19,60)[lb]{\Large{\Black{$1$}}}
    \Text(170,61)[lb]{\Large{\Black{$2$}}}
    \Vertex(96,61){5}
    \SetWidth{1.5}
    \Line(96,-19)(96,61)
    \Line(32,61)(160,61)
    \Text(95,-36)[lb]{\Large{\Black{$3$}}}
    \SetWidth{1.0}
    \Vertex(96,29){5}
    \Text(102,35)[lb]{\Large{\Black{$x_1$}}}
    \Text(107,67)[lb]{\Large{\Black{$x_2$}}}
  \end{picture}
}
}}

\def\abcVII{\scalebox{0.4}{
\fcolorbox{white}{white}{
  \begin{picture}(189,141) (22,-31)
    \SetWidth{1.5}
    \SetColor{Black}
    \COval(96,45)(64,64)(0){Black}{White}
    \SetWidth{1.0}
    \Vertex(160,61){5}
    \Vertex(128,61){5}
    \Vertex(32,61){5}
    \Vertex(96,-19){5}
    \Text(19,60)[lb]{\Large{\Black{$1$}}}
    \Text(170,61)[lb]{\Large{\Black{$2$}}}
    \Vertex(96,61){5}
    \SetWidth{1.5}
    \Line(96,-19)(96,61)
    \Line(32,61)(160,61)
    \Text(95,-36)[lb]{\Large{\Black{$3$}}}
    \SetWidth{1.0}
    \Vertex(96,29){5}
    \Text(102,35)[lb]{\Large{\Black{$x_2$}}}
    \Text(107,67)[lb]{\Large{\Black{$x_1$}}}
  \end{picture}
}
}}

\def\treeB{\scalebox{0.4}{
\fcolorbox{white}{white}{
  \begin{picture}(42,80) (57,-43)
    \SetWidth{1.0}
    \SetColor{Black}
    \Vertex(64,-32){5.657}
    \SetWidth{1.5}
    \Line(64,-32)(64,0)
    \SetWidth{1.0}
    \Vertex(64,0){5.657}
    \Text(64,-52)[lb]{\Large{\Black{$1$}}}
    \Text(64,8)[lb]{\Large{\Black{$2$}}}
  \end{picture}
}
}}

\def\treeC{\scalebox{0.4}{
\fcolorbox{white}{white}{
  \begin{picture}(58,112) (57,-11)
    \SetWidth{1.0}
    \SetColor{Black}
    \Vertex(64,0){5.657}
    \SetWidth{1.5}
    \Line(64,0)(64,32)
    \SetWidth{1.0}
    \Vertex(64,32){5.657}
    \SetWidth{1.5}
    \Line(64,32)(64,64)
    \SetWidth{1.0}
    \Vertex(64,64){5.657}
    \Text(74,32)[lb]{\Large{\Black{$x_1$}}}
    \Text(64,72)[lb]{\Large{\Black{$2$}}}
    \Text(64,-20)[lb]{\Large{\Black{$1$}}}
  \end{picture}
}
}}

\def\treeD{\scalebox{0.4}{
\fcolorbox{white}{white}{
  \begin{picture}(58,144) (57,-11)
    \SetWidth{1.0}
    \SetColor{Black}
    \Vertex(64,0){5.657}
    \SetWidth{1.5}
    \Line(64,32)(64,64)
    \SetWidth{1.0}
    \Vertex(64,64){5.657}
    \SetWidth{1.5}
    \Line(64,64)(64,96)
    \SetWidth{1.0}
    \Vertex(64,96){5.657}
    \Text(74,64)[lb]{\Large{\Black{$x_1$}}}
    \Text(64,104)[lb]{\Large{\Black{$2$}}}
    \Text(64,-20)[lb]{\Large{\Black{$1$}}}
    \SetWidth{1.5}
    \Line(64,0)(64,32)
    \SetWidth{1.0}
    \Vertex(64,32){5.657}
    \Text(74,32)[lb]{\Large{\Black{$x_2$}}}
  \end{picture}
}
}}

\def\treeDa{\scalebox{0.4}{
\fcolorbox{white}{white}{
  \begin{picture}(74,112) (41,-11)
    \SetWidth{1.0}
    \SetColor{Black}
    \Vertex(64,0){5.657}
    \SetWidth{1.5}
    \Line(64,0)(64,32)
    \SetWidth{1.0}
    \Vertex(64,32){5.657}
    \SetWidth{1.5}
    \Line(64,32)(48,64)
    \SetWidth{1.0}
    \Vertex(48,64){5.657}
    \SetWidth{1.5}
    \Line(64,32)(80,64)
    \SetWidth{1.0}
    \Vertex(80,64){5.657}
    \Text(74,32)[lb]{\Large{\Black{$x_1$}}}
    \Text(64,-20)[lb]{\Large{\Black{$1$}}}
    \Text(48,74)[lb]{\Large{\Black{$2$}}}
    \Text(80,74)[lb]{\Large{\Black{$3$}}}
  \end{picture}
}
}}

\def\treeDb{\scalebox{0.4}{
\fcolorbox{white}{white}{
  \begin{picture}(90,112) (41,-11)
    \SetWidth{1.0}
    \SetColor{Black}
    \Vertex(64,0){5.657}
    \SetWidth{1.5}
    \Line(64,0)(80,32)
    \SetWidth{1.0}
    \Vertex(80,32){5.657}
    \SetWidth{1.5}
    \Line(64,0)(48,32)
    \SetWidth{1.0}
    \Vertex(48,32){5.657}
    \SetWidth{1.5}
    \Line(80,32)(80,64)
    \SetWidth{1.0}
    \Vertex(80,64){5.657}
    \Text(90,32)[lb]{\Large{\Black{$x_1$}}}
    \Text(64,-20)[lb]{\Large{\Black{$1$}}}
    \Text(48,40)[lb]{\Large{\Black{$2$}}}
    \Text(80,74)[lb]{\Large{\Black{$3$}}}
  \end{picture}
}
}}

\def\treeDc{\scalebox{0.4}{
\fcolorbox{white}{white}{
  \begin{picture}(80,112) (35,-11)
    \SetWidth{1.0}
    \SetColor{Black}
    \Vertex(64,0){5.657}
    \SetWidth{1.5}
    \Line(64,0)(80,32)
    \SetWidth{1.0}
    \Vertex(80,32){5.657}
    \SetWidth{1.5}
    \Line(64,0)(48,32)
    \SetWidth{1.0}
    \Vertex(48,32){5.657}
    \SetWidth{1.5}
    \Line(48,32)(48,64)
    \SetWidth{1.0}
    \Vertex(48,64){5.657}
    \Text(26,32)[lb]{\Large{\Black{$x_1$}}}
    \Text(64,-20)[lb]{\Large{\Black{$1$}}}
    \Text(48,74)[lb]{\Large{\Black{$2$}}}
    \Text(80,40)[lb]{\Large{\Black{$3$}}}
  \end{picture}
}
}}

\def\treeEa{\scalebox{0.4}{
\fcolorbox{white}{white}{
  \begin{picture}(74,144) (41,-11)
    \SetWidth{1.0}
    \SetColor{Black}
    \Vertex(64,32){5.657}
    \SetWidth{1.5}
    \Line(64,32)(64,64)
    \SetWidth{1.0}
    \Vertex(80,96){5.657}
    \SetWidth{1.5}
    \Line(64,32)(64,0)
    \SetWidth{1.0}
    \Vertex(64,64){5.657}
    \SetWidth{1.5}
    \Line(64,64)(48,96)
    \SetWidth{1.0}
    \Vertex(48,96){5.657}
    \Text(74,64)[lb]{\Large{\Black{$x_1$}}}
    \Text(64,-20)[lb]{\Large{\Black{$1$}}}
    \Text(48,104)[lb]{\Large{\Black{$2$}}}
    \Text(80,104)[lb]{\Large{\Black{$3$}}}
    \SetWidth{1.5}
    \Line(64,64)(80,96)
    \SetWidth{1.0}
    \Vertex(64,0){5.657}
    \Text(74,32)[lb]{\Large{\Black{$x_2$}}}
  \end{picture}
}
}}

\def\treeEb{\scalebox{0.4}{
\fcolorbox{white}{white}{
  \begin{picture}(90,144) (41,-11)
    \SetWidth{1.0}
    \SetColor{Black}
    \Vertex(64,32){5.657}
    \SetWidth{1.5}
    \Line(64,32)(80,64)
    \SetWidth{1.0}
    \Vertex(80,96){5.657}
    \SetWidth{1.5}
    \Line(64,32)(64,0)
    \SetWidth{1.0}
    \Vertex(80,64){5.657}
    \SetWidth{1.5}
    \Line(64,32)(48,64)
    \SetWidth{1.0}
    \Vertex(48,64){5.657}
    \Text(90,64)[lb]{\Large{\Black{$x_1$}}}
    \Text(64,-20)[lb]{\Large{\Black{$1$}}}
    \Text(48,72)[lb]{\Large{\Black{$2$}}}
    \Text(80,104)[lb]{\Large{\Black{$3$}}}
    \SetWidth{1.5}
    \Line(80,64)(80,96)
    \SetWidth{1.0}
    \Vertex(64,0){5.657}
    \Text(74,32)[lb]{\Large{\Black{$x_2$}}}
  \end{picture}
}
}}

\def\treeEc{\scalebox{0.4}{
\fcolorbox{white}{white}{
  \begin{picture}(80,144) (35,-11)
    \SetWidth{1.0}
    \SetColor{Black}
    \Vertex(64,32){5.657}
    \SetWidth{1.5}
    \Line(64,32)(80,64)
    \SetWidth{1.0}
    \Vertex(48,96){5.657}
    \SetWidth{1.5}
    \Line(64,32)(64,0)
    \SetWidth{1.0}
    \Vertex(80,64){5.657}
    \SetWidth{1.5}
    \Line(64,32)(48,64)
    \SetWidth{1.0}
    \Vertex(48,64){5.657}
    \Text(25,64)[lb]{\Large{\Black{$x_1$}}}
    \Text(64,-20)[lb]{\Large{\Black{$1$}}}
    \Text(48,104)[lb]{\Large{\Black{$2$}}}
    \Text(80,72)[lb]{\Large{\Black{$3$}}}
    \SetWidth{1.5}
    \Line(48,64)(48,96)
    \SetWidth{1.0}
    \Vertex(64,0){5.657}
    \Text(40,32)[lb]{\Large{\Black{$x_2$}}}
  \end{picture}
}
}}

\def\treeEdi{\scalebox{0.4}{
\fcolorbox{white}{white}{
  \begin{picture}(96,112) (35,-43)
    \SetWidth{1.0}
    \SetColor{Black}
    \Vertex(48,0){5.657}
    \SetWidth{1.5}
    \Line(48,0)(48,32)
    \SetWidth{1.0}
    \Vertex(48,32){5.657}
    \SetWidth{1.5}
    \Line(80,0)(64,-32)
    \SetWidth{1.0}
    \Vertex(80,0){5.657}
    \SetWidth{1.5}
    \Line(64,-32)(48,0)
    \SetWidth{1.0}
    \Vertex(80,32){5.657}
    \Text(90,0)[lb]{\Large{\Black{$x_2$}}}
    \Text(64,-52)[lb]{\Large{\Black{$1$}}}
    \Text(48,40)[lb]{\Large{\Black{$2$}}}
    \Text(80,40)[lb]{\Large{\Black{$3$}}}
    \SetWidth{1.5}
    \Line(80,0)(80,32)
    \SetWidth{1.0}
    \Vertex(64,-32){5.657}
    \Text(25,0)[lb]{\Large{\Black{$x_1$}}}
  \end{picture}
}
}}

\def\treeEdii{\scalebox{0.4}{
\fcolorbox{white}{white}{
  \begin{picture}(96,112) (35,-43)
    \SetWidth{1.0}
    \SetColor{Black}
    \Vertex(48,0){5.657}
    \SetWidth{1.5}
    \Line(48,0)(48,32)
    \SetWidth{1.0}
    \Vertex(48,32){5.657}
    \SetWidth{1.5}
    \Line(80,0)(64,-32)
    \SetWidth{1.0}
    \Vertex(80,0){5.657}
    \SetWidth{1.5}
    \Line(64,-32)(48,0)
    \SetWidth{1.0}
    \Vertex(80,32){5.657}
    \Text(25,0)[lb]{\Large{\Black{$x_2$}}}
    \Text(64,-52)[lb]{\Large{\Black{$1$}}}
    \Text(48,40)[lb]{\Large{\Black{$2$}}}
    \Text(80,40)[lb]{\Large{\Black{$3$}}}
    \SetWidth{1.5}
    \Line(80,0)(80,32)
    \SetWidth{1.0}
    \Vertex(64,-32){5.657}
    \Text(90,0)[lb]{\Large{\Black{$x_1$}}}
  \end{picture}
}
}}

\def\treeEe{\scalebox{0.4}{
\fcolorbox{white}{white}{
  \begin{picture}(90,144) (41,-11)
    \SetWidth{1.0}
    \SetColor{Black}
    \Vertex(80,32){5.657}
    \SetWidth{1.5}
    \Line(80,32)(80,64)
    \SetWidth{1.0}
    \Vertex(80,96){5.657}
    \SetWidth{1.5}
    \Line(80,32)(64,0)
    \SetWidth{1.0}
    \Vertex(80,64){5.657}
    \SetWidth{1.5}
    \Line(64,0)(48,32)
    \SetWidth{1.0}
    \Vertex(48,32){5.657}
    \Text(90,64)[lb]{\Large{\Black{$x_1$}}}
    \Text(64,-20)[lb]{\Large{\Black{$1$}}}
    \Text(48,40)[lb]{\Large{\Black{$2$}}}
    \Text(80,104)[lb]{\Large{\Black{$3$}}}
    \SetWidth{1.5}
    \Line(80,64)(80,96)
    \SetWidth{1.0}
    \Vertex(64,0){5.657}
    \Text(90,32)[lb]{\Large{\Black{$x_2$}}}
  \end{picture}
}
}}

\def\treeEf{\scalebox{0.4}{
\fcolorbox{white}{white}{
  \begin{picture}(80,144) (35,-11)
    \SetWidth{1.0}
    \SetColor{Black}
    \Vertex(80,32){5.657}
    \SetWidth{1.5}
    \Line(48,32)(48,64)
    \SetWidth{1.0}
    \Vertex(48,96){5.657}
    \SetWidth{1.5}
    \Line(80,32)(64,0)
    \SetWidth{1.0}
    \Vertex(48,64){5.657}
    \SetWidth{1.5}
    \Line(64,0)(48,32)
    \SetWidth{1.0}
    \Vertex(48,32){5.657}
    \Text(25,64)[lb]{\Large{\Black{$x_1$}}}
    \Text(64,-20)[lb]{\Large{\Black{$1$}}}
    \Text(48,104)[lb]{\Large{\Black{$2$}}}
    \Text(80,40)[lb]{\Large{\Black{$3$}}}
    \SetWidth{1.5}
    \Line(48,64)(48,96)
    \SetWidth{1.0}
    \Vertex(64,0){5.657}
    \Text(25,32)[lb]{\Large{\Black{$x_2$}}}
  \end{picture}
}
}}

\def\treeAsym{\scalebox{0.4}{
\fcolorbox{white}{white}{
  \begin{picture}(122,176) (57,30)
    \SetWidth{1.0}
    \SetColor{Black}
    \Vertex(128,0){5.657}
    \Vertex(144,32){5.657}
    \Vertex(112,32){5.657}
    \Vertex(96,64){5.657}
    \Vertex(80,96){5.657}
    \Vertex(64,128){5.657}
    \Vertex(128,64){5.657}
    \Vertex(112,96){5.657}
    \SetWidth{1.5}
    \Line(64,128)(128,0)
    \Line(112,96)(96,64)
    \Line(144,96)(112,32)
    \Line(144,32)(128,0)
    \SetWidth{1.0}
    \Vertex(144,96){5.657}
    \Text(128,-20)[lb]{\Large{\Black{$1$}}}
    \Text(144,40)[lb]{\Large{\Black{$5$}}}
    \Text(144,104)[lb]{\Large{\Black{$4$}}}
    \Text(112,104)[lb]{\Large{\Black{$3$}}}
    \Text(64,136)[lb]{\Large{\Black{$2$}}}
    \Text(55,96)[lb]{\Large{\Black{$x_1$}}}
    \Text(71,64)[lb]{\Large{\Black{$x_3$}}}
    \Text(85,32)[lb]{\Large{\Black{$x_4$}}}
    \Text(106,64)[lb]{\Large{\Black{$x_2$}}}
  \end{picture}
}
}}

\def\treeBsym{\scalebox{0.4}{
\fcolorbox{white}{white}{
  \begin{picture}(122,176) (57,30)
    \SetWidth{1.0}
    \SetColor{Black}
    \Vertex(128,0){5.657}
    \Vertex(144,32){5.657}
    \Vertex(112,32){5.657}
    \Vertex(96,64){5.657}
    \Vertex(80,96){5.657}
    \Vertex(64,128){5.657}
    \Vertex(128,64){5.657}
    \Vertex(112,96){5.657}
    \SetWidth{1.5}
    \Line(64,128)(128,0)
    \Line(112,96)(96,64)
    \Line(144,96)(112,32)
    \Line(144,32)(128,0)
    \SetWidth{1.0}
    \Vertex(144,96){5.657}
    \Text(128,-20)[lb]{\Large{\Black{$1$}}}
    \Text(144,40)[lb]{\Large{\Black{$5$}}}
    \Text(144,104)[lb]{\Large{\Black{$4$}}}
    \Text(112,104)[lb]{\Large{\Black{$3$}}}
    \Text(64,136)[lb]{\Large{\Black{$2$}}}
    \Text(55,96)[lb]{\Large{\Black{$x_2$}}}
    \Text(71,64)[lb]{\Large{\Black{$x_3$}}}
    \Text(85,32)[lb]{\Large{\Black{$x_4$}}}
    \Text(106,64)[lb]{\Large{\Black{$x_1$}}}
  \end{picture}
}
}}

\def\FeynGraphPart{\scalebox{0.4}{
\fcolorbox{white}{white}{
  \begin{picture}(402,120) (31,-9)
    \SetWidth{1.5}
    \SetColor{Black}
    \Line(32,-8)(32,88)
    \Line(32,88)(208,88)
    \Line(208,88)(208,-8)
    \Line(64,-8)(64,40)
    \Line(64,40)(176,40)
    \Line(176,40)(176,-8)
    \Arc[clock](120,48)(56.569,135,45)
    \Arc(120,128)(56.569,-135,-45)
    \Line(256,88)(256,-8)
    \Line(256,88)(432,88)
    \Arc(344,96)(40.792,-168.69,-11.31)
    \Arc[clock](344,48)(56.569,135,45)
    \Line(432,88)(432,-8)
    \Arc[clock](345,53.933)(26.026,157.245,22.755)
    \Arc[clock](345,-0.875)(68.235,110.593,69.407)
  \end{picture}
}
}}

\def\ExpTreeA{\scalebox{0.4}{
\fcolorbox{white}{white}{
  \begin{picture}(154,144) (73,-43)
    \SetWidth{1.0}
    \SetColor{Black}
    \Vertex(128,-32){5.657}
    \Vertex(160,0){5.657}
    \Vertex(96,0){5.657}
    \Vertex(192,64){5.657}
    \Vertex(80,32){5.657}
    \Vertex(112,32){5.657}
    \Vertex(128,0){5.657}
    \Vertex(176,32){5.657}
    \SetWidth{1.5}
    \Line(176,32)(160,0)
    \Line(128,0)(128,-32)
    \Line(128,-32)(96,0)
    \Line(160,0)(128,-32)
    \SetWidth{1.0}
    \Vertex(144,32){5.657}
    \Text(144,40)[lb]{\Large{\Black{$5$}}}
    \Text(80,40)[lb]{\Large{\Black{$2$}}}
    \Text(73,0)[lb]{\Large{\Black{$x_1$}}}
    \Text(128,8)[lb]{\Large{\Black{$4$}}}
    \Text(112,40)[lb]{\Large{\Black{$3$}}}
    \Text(192,72)[lb]{\Large{\Black{$7$}}}
    \Text(160,72)[lb]{\Large{\Black{$6$}}}
    \Text(170,0)[lb]{\Large{\Black{$x_3$}}}
    \Text(128,-54)[lb]{\Large{\Black{$1$}}}
    \Vertex(160,64){5.657}
    \SetWidth{1.5}
    \Line(96,0)(80,32)
    \Line(192,64)(176,32)
    \Line(96,0)(112,32)
    \Line(160,0)(144,32)
    \Line(176,32)(160,64)
    \Text(186,32)[lb]{\Large{\Black{$x_2$}}}
  \end{picture}
}
}}

\def\ExpTreeBa{\scalebox{0.4}{
\fcolorbox{white}{white}{
  \begin{picture}(154,144) (57,-8)
    \SetWidth{1.0}
    \SetColor{Black}
    \Vertex(128,-32){5.657}
    \Vertex(160,0){5.657}
    \Vertex(96,0){5.657}
    \Vertex(80,32){5.657}
    \Vertex(64,64){5.657}
    \Vertex(112,32){5.657}
    \SetWidth{1.5}
    \Line(112,32)(96,0)
    \Line(128,-32)(96,0)
    \Line(160,0)(128,-32)
    \SetWidth{1.0}
    \Vertex(176,32){5.657}
    \Text(64,74)[lb]{\Large{\Black{$2$}}}
    \Text(112,40)[lb]{\Large{\Black{$3$}}}
    \Text(128,-52)[lb]{\Large{\Black{$1$}}}
    \SetWidth{1.5}
    \Line(96,0)(80,32)
    \Line(80,32)(64,64)
    \Line(160,0)(176,32)
    \Text(176,40)[lb]{\Large{\Black{$4$}}}
  \end{picture}
}
}}

\def\ExpTreeBb{\scalebox{0.4}{
\fcolorbox{white}{white}{
  \begin{picture}(122,112) (73,-50)
    \SetWidth{1.0}
    \SetColor{Black}
    \Vertex(128,-64){5.657}
    \Vertex(160,-32){5.657}
    \Vertex(96,-32){5.657}
    \Vertex(80,0){5.657}
    \Vertex(112,0){5.657}
    \SetWidth{1.5}
    \Line(112,0)(96,-32)
    \Line(128,-64)(96,-32)
    \Line(160,-32)(128,-64)
    \Text(80,8)[lb]{\Large{\Black{$2$}}}
    \Text(112,8)[lb]{\Large{\Black{$3$}}}
    \Text(128,-84)[lb]{\Large{\Black{$1$}}}
    \Line(96,-32)(80,0)
    \Text(160,-24)[lb]{\Large{\Black{$4$}}}
  \end{picture}
}
}}

\def\ExpTreeC{\scalebox{0.4}{
\fcolorbox{white}{white}{
  \begin{picture}(138,112) (73,-75)
    \SetWidth{1.0}
    \SetColor{Black}
    \Vertex(128,-64){5.657}
    \Vertex(160,-32){5.657}
    \Vertex(96,-32){5.657}
    \Vertex(80,0){5.657}
    \Vertex(112,0){5.657}
    \Vertex(128,-32){5.657}
    \Vertex(176,0){5.657}
    \SetWidth{1.5}
    \Line(176,0)(160,-32)
    \Line(128,-32)(128,-64)
    \Line(128,-64)(96,-32)
    \Line(160,-32)(128,-64)
    \SetWidth{1.0}
    \Vertex(144,0){5.657}
    \Text(144,8)[lb]{\Large{\Black{$5$}}}
    \Text(80,8)[lb]{\Large{\Black{$2$}}}
    \Text(74,-32)[lb]{\Large{\Black{$x_1$}}}
    \Text(128,-24)[lb]{\Large{\Black{$4$}}}
    \Text(112,8)[lb]{\Large{\Black{$3$}}}
    \Text(176,8)[lb]{\Large{\Black{$6$}}}
    \Text(170,-32)[lb]{\Large{\Black{$x_2$}}}
    \Text(128,-83)[lb]{\Large{\Black{$1$}}}
    \SetWidth{1.5}
    \Line(96,-32)(80,0)
    \Line(96,-32)(112,0)
    \Line(160,-32)(144,0)
  \end{picture}
}
}}

\def\ExpTreeCa{\scalebox{0.4}{
\fcolorbox{white}{white}{
  \begin{picture}(122,104) (89,-79)
    \SetWidth{1.0}
    \SetColor{Black}
    \Vertex(128,-72){5.657}
    \Vertex(160,-40){5.657}
    \Text(170,-46)[lb]{\Large{\Black{$x_1$}}} 
    \Vertex(96,-40){5.657}
    \Vertex(144,-8){5.657}
    \Vertex(176,-8){5.657}
    \SetWidth{1.5}
    \Line(176,-8)(160,-40)
    \Line(128,-72)(96,-40)
    \Line(160,-40)(128,-72)
    \Text(96,-31)[lb]{\Large{\Black{$2$}}}
    \Text(144,1)[lb]{\Large{\Black{$3$}}}
    \Text(128,-91)[lb]{\Large{\Black{$1$}}}
    \Line(160,-40)(144,-8)
    \Text(176,1)[lb]{\Large{\Black{$4$}}}
  \end{picture}
}
}}

\def\ExpTreeCaa{\scalebox{0.4}{
\fcolorbox{white}{white}{
  \begin{picture}(122,104) (89,-79)
    \SetWidth{1.0}
    \SetColor{Black}
    \Vertex(128,-72){7.657}
    \Vertex(160,-40){7.657}
    \Text(170,-46)[lb]{\Large{\Black{$x_1$}}} 
    \Vertex(96,-40){5.657}
    \Vertex(144,-8){5.657}
    \Vertex(176,-8){7.657}
    \SetWidth{3.5}
    \Line(176,-8)(160,-40)
    \SetWidth{1.5}
    \Line(128,-72)(96,-40)
    \SetWidth{3.5}
    \Line(160,-40)(128,-72)
    \Text(96,-31)[lb]{\Large{\Black{$2$}}}
    \Text(144,1)[lb]{\Large{\Black{$3$}}}
    \Text(128,-94)[lb]{\Large{\Black{$1$}}}
    \SetWidth{1.5}
    \Line(160,-40)(144,-8)
    \Text(176,1)[lb]{\Large{\Black{$4$}}}
  \end{picture}
}
}}

\def\ExpTreeCba{\scalebox{0.4}{
\fcolorbox{white}{white}{
  \begin{picture}(130,104) (89,-79)
    \SetWidth{1.0}
    \SetColor{Black}
    \Vertex(128,-72){7.657}
    \Vertex(160,-40){7.657}
    \Vertex(96,-40){5.657}
    \Vertex(136,-8){5.657}
    \Vertex(160,-8){5.657}
        \Text(170,-46)[lb]{\Large{\Black{$x_1$}}} 
    \SetWidth{3.5}
    \Line(184,-8)(160,-40)
   \SetWidth{1.5}
    \Line(128,-72)(96,-40)
   \SetWidth{3.5}
    \Line(160,-40)(128,-72)
    \Text(96,-31)[lb]{\Large{\Black{$2$}}}
    \Text(136,1)[lb]{\Large{\Black{$3$}}}
    \Text(128,-95)[lb]{\Large{\Black{$1$}}}
   \SetWidth{1.5}
    \Line(160,-40)(140,-12)
    \Text(160,1)[lb]{\Large{\Black{$4$}}}
    \SetWidth{1.0}
    \Vertex(184,-8){7.657}
    \SetWidth{1.5}
    \Line(160,-44)(160,-12)
    \Text(184,1)[lb]{\Large{\Black{$5$}}}
  \end{picture}
}
}}

\def\ExpTreeCca{\scalebox{0.4}{
\fcolorbox{white}{white}{
  \begin{picture}(138,140) (89,-43)
    \SetWidth{1.0}
    \SetColor{Black}
    \Vertex(128,-36){7.657}
    \Vertex(160,-4){7.657}
    \Vertex(96,-4){5.657}
    \Vertex(160,28){5.657}
     \SetWidth{3.5}
    \Line(192,28)(160,-4)
     \SetWidth{1.5}
    \Line(128,-36)(96,-4)
      \SetWidth{3.5}
    \Line(160,-4)(128,-36)
    \Text(96,4)[lb]{\Large{\Black{$2$}}}
    \Text(144,70)[lb]{\Large{\Black{$3$}}}
    \Text(128,-59)[lb]{\Large{\Black{$1$}}}
      \SetWidth{1.5}
    \Line(160,28)(144,60)
    \Text(176,70)[lb]{\Large{\Black{$4$}}}
      \SetWidth{1.5}
    \Line(160,-8)(160,24)
    \Text(192,38)[lb]{\Large{\Black{$5$}}}
      \SetWidth{1.5}
    \Line(176,60)(160,28)
    \SetWidth{1.0}
    \Vertex(144,60){5.657}
    \Vertex(176,60){5.657}
    \Vertex(192,28){7.657}
    \Text(170,-10)[lb]{\Large{\Black{$x_2$}}}
    \Text(135,24)[lb]{\Large{\Black{$x_1$}}}
  \end{picture}
}
}}

\def\ExpTreeCda{\scalebox{0.4}{
\fcolorbox{white}{white}{
  \begin{picture}(154,140) (89,-43)
    \SetWidth{1.0}
    \SetColor{Black}
    \Vertex(128,-36){7.657}
    \Vertex(160,-4){7.657}
    \Vertex(96,-4){5.657}
    \Vertex(128,28){5.657}
     \SetWidth{3.5}
    \Line(192,28)(160,-4)
     \SetWidth{1.5}
    \Line(128,-36)(96,-4)
     \SetWidth{3.5}
    \Line(160,-4)(128,-36)
    \Text(96,4)[lb]{\Large{\Black{$2$}}}
    \Text(128,36)[lb]{\Large{\Black{$3$}}}
    \Text(128,-60)[lb]{\Large{\Black{$1$}}}
     \SetWidth{1.5}
    \Line(192,28)(176,60)
    \Text(176,70)[lb]{\Large{\Black{$4$}}}
     \SetWidth{1.5}
    \Line(160,-8)(128,28)
    \Text(208,70)[lb]{\Large{\Black{$5$}}}
     \SetWidth{3.5}
    \Line(208,60)(192,28)
    \SetWidth{1.0}
    \Vertex(176,60){5.657}
    \Vertex(208,60){7.657}
    \Vertex(192,28){7.657}
    \Text(170,-10)[lb]{\Large{\Black{$x_2$}}}
    \Text(202,24)[lb]{\Large{\Black{$x_1$}}}
  \end{picture}
}
}}

\def\ExpTreeCea{\scalebox{0.4}{
\fcolorbox{white}{white}{
  \begin{picture}(106,108) (89,-75)
    \SetWidth{1.0}
    \SetColor{Black}
    \Vertex(128,-68){7.657}
    \Vertex(128,-36){5.657}
    \Vertex(96,-36){5.657}
    \Vertex(144,-4){5.657}
        \Text(136,-38)[lb]{\Large{\Black{$x_1$}}}
    \Vertex(112,-4){5.657}
    \SetWidth{1.5}
    \Line(144,-4)(128,-36)
    \Line(128,-68)(96,-36)
        \SetWidth{3.5}
    \Line(160,-36)(128,-68)
    \Text(96,-28)[lb]{\Large{\Black{$2$}}}
    \Text(112,4)[lb]{\Large{\Black{$3$}}}
    \Text(128,-92)[lb]{\Large{\Black{$1$}}}
     \SetWidth{1.5}
    \Line(128,-36)(112,-4)
    \Text(144,4)[lb]{\Large{\Black{$4$}}}
    \SetWidth{1.0}
    \Vertex(160,-36){7.657}
    \SetWidth{1.5}
    \Line(128,-36)(128,-68)
    \Text(160,-28)[lb]{\Large{\Black{$5$}}}
  \end{picture}
}
}}

\def\ExpTreeCfa{\scalebox{0.4}{
\fcolorbox{white}{white}{
  \begin{picture}(138,140) (89,-79)
    \SetWidth{1.0}
    \SetColor{Black}
    \Vertex(128,-36){5.657}
    \Vertex(160,-4){5.657}
    \Vertex(96,-4){5.657}
    \Vertex(144,28){5.657}
    \Vertex(176,28){5.657}
    \SetWidth{1.5}
    \Line(176,28)(160,-4)
    \SetWidth{1.5}
    \Line(128,-36)(96,-4)
     \SetWidth{1.5}
    \Line(160,-4)(128,-36)
    \Text(96,4)[lb]{\Large{\Black{$2$}}}
    \Text(144,35)[lb]{\Large{\Black{$3$}}}
    \Text(160,-89)[lb]{\Large{\Black{$1$}}}
     \SetWidth{1.5}
    \Line(160,-4)(144,28)
    \Text(176,35)[lb]{\Large{\Black{$4$}}}
     \SetWidth{3.5}
    \Line(192,-36)(160,-68)
    \SetWidth{1.0}
    \Vertex(192,-36){7.657}
    \SetWidth{1.5}
    \Line(160,-68)(128,-36)
    \SetWidth{1.0}
    \Vertex(160,-68){7.657}
    \Text(192,-26)[lb]{\Large{\Black{$5$}}}
    \Text(104,-39)[lb]{\Large{\Black{$x_2$}}}
    \Text(170,-10)[lb]{\Large{\Black{$x_1$}}}
  \end{picture}
}
}}

\def\ExampleTreeA{\scalebox{0.4}{
\fcolorbox{white}{white}{
  \begin{picture}(74,91) (73,-75)
    \SetWidth{1.0}
    \SetColor{Black}
    \Vertex(96,-53){5.657}
    \Vertex(80,-21){5.657}
    \Vertex(112,-21){5.657}
    \Text(80,-10)[lb]{\Large{\Black{$2$}}}
    \Text(112,-10)[lb]{\Large{\Black{$3$}}}
    \SetWidth{1.5}
    \Line(96,-53)(80,-21)
    \Line(96,-53)(112,-21)
    \Text(95,-74)[lb]{\Large{\Black{$1$}}}
  \end{picture}
}
}}

\def\ExampleTreeAa{\scalebox{0.4}{
\fcolorbox{white}{white}{
  \begin{picture}(74,91) (73,-75)
    \SetWidth{1.0}
    \SetColor{Black}
    \Vertex(96,-53){7.657}
    \Vertex(80,-21){5.657}
    \Vertex(112,-21){7.657}
    \Text(80,-10)[lb]{\Large{\Black{$2$}}}
    \Text(112,-10)[lb]{\Large{\Black{$3$}}}
    \SetWidth{1.5}
    \Line(96,-53)(80,-21)
   \SetWidth{3.5}
    \Line(96,-53)(112,-21)
    \Text(95,-74)[lb]{\Large{\Black{$1$}}}
  \end{picture}
}
}}

\def\ExampleTreeB{\scalebox{0.4}{
\fcolorbox{white}{white}{
  \begin{picture}(106,91) (57,-75)
    \SetWidth{1.0}
    \SetColor{Black}
    \Vertex(96,-53){5.657}
    \Vertex(64,-21){5.657}
    \Vertex(128,-21){5.657}
    \Text(64,-12)[lb]{\Large{\Black{$2$}}}
    \Text(96,-12)[lb]{\Large{\Black{$3$}}}
    \SetWidth{1.5}
    \Line(96,-53)(64,-21)
    \Line(96,-53)(128,-21)
    \Text(95,-74)[lb]{\Large{\Black{$1$}}}
    \SetWidth{1.0}
    \Vertex(96,-21){5.657}
    \SetWidth{1.5}
    \Line(96,-53)(96,-21)
    \Text(128,-12)[lb]{\Large{\Black{$4$}}}
  \end{picture}
}
}}

\def\ExampleTreeBb{\scalebox{0.4}{
\fcolorbox{white}{white}{
  \begin{picture}(106,91) (57,-75)
    \SetWidth{1.0}
    \SetColor{Black}
    \Vertex(96,-53){7.657}
    \Vertex(64,-21){5.657}
    \Vertex(128,-21){7.657}
    \Text(64,-12)[lb]{\Large{\Black{$2$}}}
    \Text(96,-12)[lb]{\Large{\Black{$3$}}}
    \SetWidth{1.5}
    \Line(96,-53)(64,-21)
    \SetWidth{3.5}
    \Line(96,-53)(128,-21)
    \Text(95,-74)[lb]{\Large{\Black{$1$}}}
    \SetWidth{1.0}
    \Vertex(96,-21){5.657}
    \SetWidth{1.5}
    \Line(96,-53)(96,-21)
    \Text(128,-12)[lb]{\Large{\Black{$4$}}}
  \end{picture}
}
}}

\def\ExampleTreeC{\scalebox{0.4}{
\fcolorbox{white}{white}{
  \begin{picture}(90,123) (73,-43)
    \SetWidth{1.0}
    \SetColor{Black}
    \Vertex(96,-21){5.657}
    \Vertex(80,11){5.657}
    \Vertex(112,11){5.657}
    \Text(80,20)[lb]{\Large{\Black{$2$}}}
    \Text(96,52)[lb]{\Large{\Black{$3$}}}
    \SetWidth{1.5}
    \Line(96,-21)(80,11)
    \Line(96,-21)(112,11)
    \Text(95,-42)[lb]{\Large{\Black{$1$}}}
    \Line(112,11)(128,43)
    \Line(112,11)(96,43)
    \SetWidth{1.0}
    \Vertex(128,43){5.657}
    \Vertex(96,43){5.657}
    \Text(128,52)[lb]{\Large{\Black{$4$}}}
    \Text(120,11)[lb]{\Large{\Black{$x_1$}}}
  \end{picture}
}
}}

\def\ExampleTreeCc{\scalebox{0.4}{
\fcolorbox{white}{white}{
  \begin{picture}(90,123) (73,-43)
    \SetWidth{1.0}
    \SetColor{Black}
    \Vertex(96,-21){7.657}
    \Vertex(80,11){5.657}
    \Vertex(112,11){7.657}
    \Text(80,20)[lb]{\Large{\Black{$2$}}}
    \Text(96,52)[lb]{\Large{\Black{$3$}}}
    \SetWidth{1.5}
    \Line(96,-21)(80,11)
        \SetWidth{3.5}
    \Line(96,-21)(112,11)
    \Text(95,-42)[lb]{\Large{\Black{$1$}}}
    \SetWidth{3.5}
    \Line(112,11)(128,43)
      \SetWidth{1.5}
    \Line(112,11)(96,43)
    \SetWidth{1.0}
    \Vertex(128,43){7.657}
    \Vertex(96,43){5.657}
    \Text(128,52)[lb]{\Large{\Black{$4$}}}
    \Text(122,11)[lb]{\Large{\Black{$x_1$}}}
  \end{picture}
}
}}

\def\ExampleTreeD{\scalebox{0.4}{
\fcolorbox{white}{white}{
  \begin{picture}(90,123) (57,-43)
    \SetWidth{1.0}
    \SetColor{Black}
    \Vertex(96,-21){5.657}
    \Vertex(80,11){5.657}
    \Vertex(112,11){5.657}
    \Text(64,52)[lb]{\Large{\Black{$2$}}}
    \Text(96,52)[lb]{\Large{\Black{$3$}}}
    \SetWidth{1.5}
    \Line(96,-21)(80,11)
    \Line(96,-21)(112,11)
    \Text(95,-42)[lb]{\Large{\Black{$1$}}}
    \Line(80,11)(96,43)
    \Line(80,11)(64,43)
    \SetWidth{1.0}
    \Vertex(96,43){5.657}
    \Vertex(64,43){5.657}
    \Text(112,20)[lb]{\Large{\Black{$4$}}}
    \Text(58,11)[lb]{\Large{\Black{$x_1$}}}
  \end{picture}
}
}}

\def\ExampleTreeDd{\scalebox{0.4}{
\fcolorbox{white}{white}{
  \begin{picture}(90,123) (57,-43)
    \SetWidth{1.0}
    \SetColor{Black}
    \Vertex(96,-21){7.657}
    \Vertex(80,11){5.657}
    \Vertex(112,11){7.657}
    \Text(64,52)[lb]{\Large{\Black{$2$}}}
    \Text(96,52)[lb]{\Large{\Black{$3$}}}
    \SetWidth{1.5}
    \Line(96,-21)(80,11)
     \SetWidth{3.5}
    \Line(96,-21)(112,11)
    \Text(95,-42)[lb]{\Large{\Black{$1$}}}
     \SetWidth{1.5}
    \Line(80,11)(96,43)
    \Line(80,11)(64,43)
    \SetWidth{1.0}
    \Vertex(96,43){5.657}
    \Vertex(64,43){5.657}
    \Text(112,20)[lb]{\Large{\Black{$4$}}}
    \Text(58,11)[lb]{\Large{\Black{$x_1$}}}
  \end{picture}
}
}}



\hfuzz5pt \vfuzz5pt

\newtheorem{thm}{Theorem}

\newtheorem{cor}[thm]{Corollary}

\newtheorem{lem}[thm]{Lemma}
\newtheorem{prop}[thm]{Proposition}
\newtheorem{defn}{Definition}
\newtheorem{rmk}[thm]{Remark}

\nc{\id}{{\mathrm{id}}}

\def\shuff#1#2{\mathbin{
      \hbox{\vbox{\hbox{\vrule \hskip#2 \vrule height#1 width 0pt}\hrule}\vbox{\hbox{\vrule \hskip#2 \vrule height#1 width 0pt\vrule }\hrule}}}}
\def\shuffl{{\mathchoice{\shuff{5pt}{3.5pt}}{\shuff{5pt}{3.5pt}}{\shuff{3pt}{2.6pt}}{\shuff{3pt}{2.6pt}}}}
\def\shuffle{{\, \shuffl \,}}

\thispagestyle{empty}
 \begin{document}

\title[The combinatorics of Green's functions in planar field theories]{The combinatorics of Green's functions\\ in planar field theories}{\let\thefootnote\relax\footnote{{{These notes are based on talks given at the workshop ``Dyson--Schwinger Equations in Modern Mathematics \& Physics", European Centre for Theoretical Studies in Nuclear Physics and Related Areas (ECT*), Trento, Italy, September 22-26, 2014.}}}}
\vspace{1cm}

\author{Kurusch Ebrahimi-Fard}
\address{{ICMAT,
		C/Nicol\'as Cabrera, no.~13-15, 28049 Madrid, Spain}.
		{\tiny{On leave from UHA, Mulhouse, France.}}}
         \email{kurusch@icmat.es, kurusch.ebrahimi-fard@uha.fr}         
         \urladdr{www.icmat.es/kurusch}

\author{Fr\'ed\'eric Patras}
\address{Univ.~de Nice,
			Labo.~J.-A.~Dieudonn\'e,
         		UMR 7351, CNRS,
         		Parc Valrose,
         		06108 Nice Cedex 02, France.}
\email{patras@math.unice.fr}
\urladdr{www-math.unice.fr/$\sim$patras}

\date{\today}

\begin{abstract}
The aim of this work is to outline in some detail the use of combinatorial algebra in planar quantum field theory. Particular emphasis is given to the relations between the different types of planar Green's functions. The key object is a Hopf algebra which is naturally defined on non-commuting sources, and the fact that its genuine unshuffle coproduct splits into left- and right unshuffle half-coproduts. The latter give rise to the notion of unshuffle bialgebra. This setting allows to describe the relation between planar full and connected Green's functions by solving a simple linear fixed point equation. A modification of this linear fixed point equation gives rise to the relation between planar connected and one-particle irreducible Green's functions. The graphical calculus that arises from this approach also leads to a new understanding of functional calculus in planar QFT, whose rules for differentiation with respect to non-commuting sources can be translated into the language of growth operations on planar rooted trees. We also include a brief outline of our approach in the framework of non-planar theories.      
\end{abstract}


\maketitle
\tableofcontents


\section{Introduction}
\label{sect:intro}

The intent of this work is to present a purely Hopf algebraic description of the well-known relations between full, connected and one-particle irreducible (1PI) Green's functions in planar quantum field theory (QFT). 
Since this approach may as well shed new light on the classical setting, we also include a short outline addressing the framework of non-planar theories. Although emphasis is put on planar theories.  

\smallskip

In the early 1980's Cvitanovic et al.~\cite{cvitanovic1,cvitanovic2} proposed an approach to planar quantum field theories, which was largely motivated as a way of properly encoding the behaviour of the planar sector of quantum chromodynamics (QCD). An interesting feature of planar theories is the manner in which the calculus of symmetry factors changes (and simplifies) with respect to classical theories. In fact, planarity is reflected in a strictly non-commutative nature of the theory, which results in a rather substantial deviation from the classical description of the relations between different types of Green's functions. Indeed, Cvitanovic et.~al.~observed that the functional relation between the generating functionals of the planar full and planar connected Green's functions is encoded by a fixed point equation, which is solved by those generating functionals. This fixed point equation replaces the common exponential map relating the generating functionals of the classical, i.e., non-planar full and connected Green's functions. The exponential relation between those classical generating functionals is analog to the moments-cumulants relations in classical probability theory \cite{speed}. Several years later it was realised that the description in \cite{cvitanovic1,cvitanovic2} of the relations between planar Green's functions is closely related to Speicher's approach to the relations between free moments and free cumulants \cite{biane,nicaspeicher,speicher,novaksniady} in the context of Voiculescu's theory of free probability \cite{voiculescu1,voiculescu2}. The resulting link between free probability and planar QFT has been explored in several works, see e.g. \cite{douglas,gopagross}.

The approach to the relations between planar Green's functions presented in these notes is based on our recent work on the algebraic and combinatorial structures underlying the relations between free and classical moments and cumulants in probability theory \cite{EP1,EP2}. In those references, it was shown that these relations, both classical and free, can be understood algebraically in terms of solutions of linear fixed point equations in (co)commutative, corresponding to classical probability, and non-(co)commutative, corresponding to free probability, shuffle Hopf algebras. It turns out that these linear fixed point equations have proper exponential solutions. In the classical case this exponential solution coincides with the standard exponential relating classical moments and cumulants. In the non-classical, i.e., planar setting the relation between free cumulants and moments is displayed by an exponential as well, which is defined with respect to a non-commutative product. The difference between these two exponentials is analogous to the difference between exponential solutions of scalar- and matrix-valued non-autonomous linear differential equations. We propose here a similar approach to the -- Hopf -- algebraic understanding of the relations between Green's functions in planar QFT.

\medskip
 
We start by recalling the classical relation between full and connected Green's functions \cite{itzyksonzuber}. The generating series of the full, or complete, Green's function is denoted 
$$
	Z(j):= 1 + \sum_{k>0} \frac{1}{k!} Z^{(k)}_{j_1 \cdots j_k} j_1 \cdots j_k, 
$$
with $j$ denoting the set of  commuting external sources $\{j_i\}_{i>0}$. We follow notations and conventions used in \cite{cvitanovic1,cvitanovic2}, where indices $k,l, m, n, \cdots$ may represent discrete as well as continuous variables, and repeated indices indicate summations and integration over discrete respectively continuous variables that characterise the actual Green's functions. The generating series for connected Green's functions is denoted
$$
	W(j):= \sum_{m>0} \frac{1}{m!} W^{(m)}_{j_1 \cdots j_m} j_1 \cdots j_m. 
$$
In both cases the Green's functions are symmetric in the $j_i$s, and follow from functional derivations with respect to the external sources, e.g., the complete Green's function of order $n$, $Z^{(n)}_{j_{i_1} \cdots j_{i_n}}$, is given by
$$
	\left. \frac{\partial^n}{\partial_{j_{i_1}} \cdots \partial_{j_{i_n}}}\right |_{j=0} Z(j) 
		= Z^{(n)}_{j_{i_1} \cdots j_{i_n}} 
$$  
It turns out that the above generating series are related through the exponential map
$$
	Z(j)=\exp\big( W(j) \big).
$$
Taking functional derivations, we find up to order four:
\allowdisplaybreaks{
\begin{eqnarray}
	Z^{(0)} &=& 1  											\nonumber \\
	Z_{j_1}^{(1)} &=& W_{j_1}^{(1)}  							\nonumber \\ 
	Z_{j_1j_2}^{(2)} &=& W_{j_1}^{(1)}W_{j_2}^{(1)} + W_{j_1j_2}^{(2)}
				  = W_{j_1}^{(1)}Z_{j_2}^{(1)} + W_{j_1j_2}^{(2)}Z^{(0)}	\nonumber \\ 
	Z_{j_1j_2j_3}^{(3)} &=& W^{(1)}_{j_1}W^{(1)}_{j_2}W^{(1)}_{j_3} 
							+ W^{(1)}_{j_1}W^{(2)}_{j_2j_3} 
							+ W^{(1)}_{j_2}W^{(2)}_{j_1j_3}
							+ W^{(1)}_{j_3}W^{(2)}_{j_1j_2} 
							+ W_{j_1j_2j_3}^{(3)} 			\nonumber\\
				      &=& W^{(1)}_{j_1}Z_{j_2j_3}^{(2)} 
				      		+ W_{j_1j_2}^{(2)}Z_{j_3}^{(1)}
						+ W_{j_1j_3}^{(2)}Z_{j_2}^{(1)}
						+ W_{j_1j_2j_3}^{(3)}Z^{(0)}			\nonumber\\
	Z_{j_1j_2j_3j_4}^{(4)} &=&  W_{j_1}^{(1)}Z_{j_2j_3j_4}^{(3)}
					     	+ W_{j_1j_2}^{(2)}Z_{j_3j_4}^{(2)}	
					     	+ W_{j_1j_3}^{(2)}Z_{j_2j_4}^{(2)}	
					     	+ W_{j_1j_4}^{(2)}Z_{j_2j_3}^{(2)} 		\label{nonplanar1} \\
					& & 	+ W_{j_1j_2j_3}^{(3)}Z_{j_4}^{(1)}	
					      	+ W_{j_1j_2j_4}^{(3)}Z_{j_3}^{(1)}	
					     	+ W_{j_1j_3j_4}^{(3)}Z_{j_2}^{(1)}	
					     	+ W_{j_1j_2j_3j_4}^{(4)}Z^{(0)} . 		\nonumber 
\end{eqnarray}}
Abstractly, these polynomial expressions for full Green's functions given in terms of connected ones are a multivariate generalization of the classical Bell polynomials relating, among others, moments and cumulants in classical probability \cite{ELundMan}. The recursive structure featured in the order four case in place of the complete expansion displays the full Green's function in terms of lower order connected ones, and will be explained further below in the context of the tensor algebra approach. The third class of Green's functions are the one-particle irreducible (1PI) ones. They are related to the connected Green's functions in a somewhat more involved manner. We will come back to this with more details in the planar setting. 

\smallskip 

The generating functionals for planar full and planar connected Green's functions are given by
$$
	\mathrm{Z}[j]:= 1 + \sum_{k>0} \mathrm{Z}^{(k)}_{j_1 \cdots j_k} j_1 \cdots j_k 
	\quad\ {\mathrm{resp.}} \quad\ 
	\mathrm{W}[j]:= \sum_{m>0} \mathrm{W}^{(m)}_{j_1 \cdots j_m} j_1 \cdots j_m.
$$
Here the external sources $j=\{j_i\}_{i>0}$ are strictly non-commutative, so that neither $\mathrm{Z}^{(k)}_{j_1 \cdots j_k}$ nor $\mathrm{W}^{(m)}_{j_1 \cdots j_m}$ are symmetric functions with respect to the sources any more. The corresponding non-commutative functional calculus is explained in detail in reference \cite{cvitanovic2}. Cvitanovic noted in \cite{cvitanovic1} that the planar nature of the problem yields a different functional relation between the two planar generating functionals $\mathrm{Z}[j]$ and $\mathrm{W}[j]$, which is given in terms of the fixed point equation
\begin{equation}
\label{cvitanovich}
	\mathrm{Z}[j]:= 1 + \mathrm{W}[j\mathrm{Z}[j]].
\end{equation}
Due to the non-commutative nature of the external sources, a different but equivalent form is given by $\mathrm{Z}[j]:= 1 + \mathrm{W}[\mathrm{Z}[j]j]$. We will work foremost with equality \eqref{cvitanovich}, which yields up to order four:
\allowdisplaybreaks{
\begin{eqnarray}
	\mathrm{Z}^{(0)} &=& 1  																	\nonumber \\
	\mathrm{Z}_{j_1}^{(1)} &=& \mathrm{W}_{j_1}^{(1)} 												\nonumber \\ 
	\mathrm{Z}_{j_1j_2}^{(2)} &=&\mathrm{W}_{j_1}^{(1)}\mathrm{Z}_{j_2}^{(1)} 
							+ \mathrm{W}_{j_1j_2}^{(2)}\mathrm{Z}^{(0)}
						= \mathrm{W}_{j_1}^{(1)}\mathrm{W}_{j_2}^{(1)} 
							+ \mathrm{W}_{j_1j_2}^{(2)}	\nonumber \\ 
	\mathrm{Z}_{j_1j_2j_3}^{(3)} &=& 	 {\mathrm{W}_{j_1}^{(1)}}\mathrm{Z}_{j_2j_3}^{(2)}
						     		+  {\mathrm{W}_{j_1j_2}^{(2)}}\mathrm{Z}_{j_3}^{(1)}
								+  {\mathrm{W}_{j_1j_3}^{(2)}}\mathrm{Z}_{j_2}^{(1)}
								+  \mathrm{W}_{j_1j_2j_3}^{(3)}	\mathrm{Z}^{(0)} 												\nonumber \\
						   &=& {\mathrm{W}_{j_1}^{(1)}}{\mathrm{W}_{j_2}^{(1)}}{\mathrm{W}_{j_3}^{(1)}} 
							+ \mathrm{W}_{j_1}^{(1)} \mathrm{W}_{j_2j_3}^{(2)}
							+ \mathrm{W}_{j_1j_2}^{(2)} \mathrm{W}_{j_3}^{(1)} 
							+ \mathrm{W}_{j_1j_3}^{(2)} \mathrm{W}_{j_2}^{(1)}  
							+ \mathrm{W}_{j_1j_2j_3}^{(3)}	
\nonumber \\
	\mathrm{Z}_{j_1j_2j_3j_4}^{(4)} &=& {\mathrm{W}_{j_1}^{(1)}}\mathrm{Z}_{j_2j_3j_4}^{(3)} 
								+ {\mathrm{W}_{j_1j_2}^{(2)}}\mathrm{Z}_{j_3j_4}^{(2)} 
								+ {\mathrm{W}_{j_1j_3}^{(2)}}\mathrm{Z}_{j_2}^{(1)}\mathrm{Z}_{j_4}^{(1)} 
								+ {\mathrm{W}_{j_1j_4}^{(2)}}\mathrm{Z}_{j_2j_3}^{(2)} 				\label{planar1}\\
							 & &  + {\mathrm{W}_{j_1j_2j_3}^{(3)}}\mathrm{Z}_{j_4}^{(1)} 
							 	+ {\mathrm{W}_{j_1j_2j_4}^{(3)}}\mathrm{Z}_{j_3}^{(1)} 
								+ {\mathrm{W}_{j_1j_3j_4}^{(3)}}\mathrm{Z}_{j_2}^{(1)}  
					 			+ \mathrm{W}_{j_1j_2j_3j_4}^{(4)}\mathrm{Z}^{(0)}. 					\nonumber 
\end{eqnarray}}
Note that the difference with the analogous polynomials in the non-planar case starts at order four, i.e., compare the terms $W_{j_1j_3}^{(2)}Z_{j_2j_4}^{(2)}$ and ${\mathrm{W}_{j_1j_3}^{(2)}}\mathrm{Z}_{j_2}^{(1)}\mathrm{Z}_{j_4}^{(1)}$ in lines \eqref{nonplanar1} and  \eqref{planar1}, respectively. A precise understanding of the combinatorial nature of the recursive structure which is on display here will be given further below in terms of the double tensor Hopf algebra of non-commuting sources, and its natural non-cocommuting unshuffle coproduct. The combinatorial description of the relations between planar connected and planar 1PI Green's functions, which derives from this Hopf algebra approach will be explained in section \ref{sect:1PIplGreen}.

\smallskip

Our approach may be summarised by saying that it captures in a purely algebraic way the functional calculus employed to describe the relations between planar Green's functions. More precisely, we show that the double tensor algebra $\bar T(T(J))$ and the tensor algebra $\bar T(J)$ together with their non-cocommutative respectively cocommutative unshuffle coproducts provide the appropriate Hopf algebraic setting to algebraize the relations between the generating series of planar respectively non-planar full and connected Green's functions. To this end, the planar and non-planar generating functionals are considered as linear maps over the two aforementioned tensor algebras, $\bar T(T(J))$ respectively $\bar T(J)$. In both cases, the usual functional relations between generating functionals are described in Hopf algebraic terms through linear fixed point equations of the same type, which are solved by the corresponding linear maps. In the planar case the linear map $\tau_\mathrm{Z}$ that represents $\mathrm{Z}[j]$, is a multiplicative unital map, i.e., a Hopf algebra character over $\bar T(T(J))$, whereas $\tau_\mathrm{W}$, representing $\mathrm{W}[j]$, is an infinitesimal character. They are related through the fundamental linear fixed point equation defined in terms of the left half-shuffle product
$$
		\tau_\mathrm{Z} = \varepsilon + \tau_\mathrm{W} \prec \tau_\mathrm{Z}.
$$
Its solution is given by the exponential map 
$$
	\tau_\mathrm{Z}=\exp^{\star}\big(\Omega'(\tau_\mathrm{W})\big).	
$$
The map $\Omega'$ reflects the non-commutative nature of the product $\star$ used to define the exponential. In the non-planar case we find that the linear maps $\tau_Z ,\tau_W \in  \bar T^*(J)$ solve the analogue left half-shuffle relation, this time, however, defined over $ \bar T^*(J)$
$$
	\tau_Z = \varepsilon + \tau_W \prec \tau_Z \quad \mathrm{and} \quad \tau_Z=\exp^{\shuffle}\big(\tau_W\big).
$$ 
In the non-planar case the exponential map is defined with respect to the shuffle product $\shuffle$, and the non-appearance of the map $\Omega'$ is a consequence of the commutative nature of the shuffle algebra. We see that in both the planar and non-planar setting the -- generating series of -- full and connected Green's functions are related through an exponential map, defined with respect to a non-commutative respectively commutative product. This picture allows us, in particular, to formulate the description of planar connected Green's functions in terms of planar 1PI Green's functions, using the language of noncrossing partitions and planar de- or increasing trees \cite{drmota}. In this framework, functional derivations have a diagrammatical description in terms of right-grafting single edges to planar rooted trees.

\medskip 

We briefly outline the organization of the article. In the next section we introduce various notions, ranging from classical Hopf algebras to the lesser known unshuffle bialgebras. We explain several key properties of the latter and illustrate these notions on the tensor and double tensor algebras generated by families of external sources. The second part investigates the relations between full and connected Green's functions. We also introduce noncrossing Green's functions and show how the free probability analysis of the relations between free moments and free cumulants can be transferred to planar QFT using the unshuffle calculus underlying our approach. The third part investigates the links between planar connected and planar 1PI Green's functions. This leads to a graphical interpretation of the calculus of derivatives with respect to external sources in the functional approach to planar QFT which is the object of the last section.

\medskip 

In the following $\mathbb{K}$ denotes a ground field of characteristic zero, e.g., $\mathbb{K}=\mathbb{C}$ or $\mathbb{R}$. We also assume any $\mathbb{K}$-algebra $A$ to be associative and unital, if not stated otherwise. The unit in $A$ is denoted $\un_A$. Identity morphisms are written $\id$.

\vspace{0.4cm}

\noindent {\bf{Acknowledgements}}: First and foremost we would like to thank the organisers of the workshop ``Dyson--Schwinger Equations in Modern Mathematics \& Physics", ECT*, Trento, Italy, September 22-26, 2014, for inviting us to present our work at this meeting. The authors are indebted to T.~Krajewski  who pointed out to them the relevance of their work on free probabilities for planar QCD during the workshop in Trento. The first author is supported by a Ram\'on y Cajal research grant from the Spanish government. The second author acknowledges support from the grant ANR-12-BS01-0017, Combinatoire Alg\'ebrique, R\'esurgence, Moules et Applications and from the ESI Vienna, where this work started. Support by the CNRS GDR Renormalisation is also acknowledged.


\section{Green's functions and connected graded Hopf algebras}
\label{sect:GHA}


\subsection{Hopf algebras}
\label{sect:Hopf}

We start this section by recalling a few basic facts on Hopf algebras, which shall also serve to fix notation. For details the reader is referred to \cite{cartier1,reutenauer,sweedler}. However, a remark is in order. Our work focuses exclusively on a particular connected graded non-commutative non-cocommutative Hopf algebra with some extra structure defined on the double tensor algebra over a set of external sources as the basic object of interest in our approach to planar Green's functions. These objects are related to (non-commutative) Fock spaces and can be thought of as a kind of generalization thereof. In particular, both their structure and combinatorics are different from the ones involved in the modelling of the Bogoliubov recursion and the BPHZ renormalization process by means of Hopf algebras of Feynman diagrams and Rota--Baxter algebra structures. We refer the reader interested in learning about Connes' and Kreimer's Hopf algebraic approach to renormalization to the original works \cite{ck0,ck1,ck2}. See also \cite{EGP,GBFV,manchon2}.

\smallskip

A coalgebra over $\mathbb{K}$ consist of a $\mathbb{K}$-vector space $C$ and two maps, the coproduct $\Delta: C \to C \otimes C$, which is coassociative
\begin{equation}
\label{coasso}
	(\Delta \otimes \id)\circ \Delta=(\id \otimes \Delta)\circ \Delta,
\end{equation}
and the counit $\varepsilon: C \to \mathbb{K}$, such that $(\varepsilon\otimes \id)\circ \Delta = (\id\otimes \varepsilon)\circ\Delta= \id$. A coalgebra is cocommutative if $\Delta=\tau \circ \Delta$, where $\tau$ is the switch map $\tau(x\otimes y):=y\otimes x$. Iterated coproducts are denoted as follows $\Delta^{0}:=\id$ and $\Delta^{n}:C\to C^{\otimes n+1}$
$$
	\Delta^{n}:=(\id \otimes \Delta^{n-1}) \circ \Delta.
$$
A  $\mathbb{K}$-bialgebra is a  $\mathbb{K}$-vector space $B$, which is both a $\mathbb{K}$-algebra and a $\mathbb{K}$-coalgebra together with certain compatibility relations, such as, for instance, both the algebra product, $m:B \otimes B \to B $, and unit map, $e: \mathbb{K} \to B$, are coalgebra morphisms \cite{sweedler}. The unit of $B$ is denoted by $\un = e(1)$. A bialgebra is called graded if  there exist $\mathbb{K}$-vector spaces $B_{m}$ such that $B = \bigoplus_{n \geq 0} B_{n}$,  and $m(B_{p} \otimes B_{q}) \subseteq B_{p+q}$ as well as $\Delta(B_{n}) \subseteq \bigoplus_{p+q=n} B_{p}\otimes B_{q}.$ Elements $x \in B_{n}$ are said to be of degree $|x|=n$. Define $B^+=\bigoplus_{n > 0} B_{n}$. A graded bialgebra $B$ is called connected if the degree zero component is one dimensional $B_{0} = \mathbb{K}\un$. In this case,  the coproduct for an element $x \in B^+$ of degree $|x|=n$ is of the form
\begin{equation*}
	\Delta(x) = x \otimes \un + \un \otimes x + \Delta'(x) \in \bigoplus_{k+l=n} B_{k} \otimes B_{l},
\end{equation*}
where $\Delta'(x) := \Delta(x) -  x \otimes \un - \un \otimes x \in B^+ \otimes B^+$ is the reduced coproduct. Furthermore, $B^+=Ker(\varepsilon)$ is the augmentation ideal of $B$. By definition an element $x \in B$ is called primitive if $\Delta'(x) = 0.$

Suppose that $A$ is an algebra with product $m_A$ and unit map $e_A(1)=\un_A$, e.g., $A=\mathbb{K}$ or $A=B$, where $B$ is a bialgebra. The vector space  $L(B, A)$ of linear maps from $B$ to $A$ together with the convolution product 
\begin{equation}
\label{convolutionProd}
	\Phi \star \Psi := m_{A} \circ (\Phi \otimes \Psi) \circ \Delta : B \to A,
\end{equation}
$\Phi,\Psi \in L(B,A)$, is an associative algebra with unit $\iota := e_{A} \circ \varepsilon$.

A $\mathbb{K}$-Hopf algebra is a $\mathbb{K}$-bialgebra $H$ equiped with a $\mathbb{K}$-linear map $S: H \to H$ called the antipode  which (for the kind of Hopf algebras we are interested in) is characterised as the inverse of the identity map $\id \in L(H,H)$ with respect to the convolution product, that is, $\id \star S = S \star \id = e \circ \varepsilon$, where
\begin{equation*}
	S \star \id = m \circ (S \otimes \id) \circ \Delta.
\end{equation*}
It is a well-known fact that any connected graded bialgebra is automatically a connected graded Hopf algebra, see e.g. ~\cite{cartier1,sweedler}. 

Let $H=\bigoplus_{n \ge 0} H_{n}$ be a connected graded Hopf algebra. Suppose $A$ is a commutative unital algebra. The subset $g_0 \subset  L(H, A)$ of linear maps, $\alpha$, that send the unit to zero, $\alpha(\un)=0$, forms a Lie algebra in $L(H, A)$. The exponential $ \exp^\star(\alpha) = \sum_{j\ge 0} \frac{1}{j!}\alpha^{\star j}$ defines a bijection from $g_0$ onto the group $G_0 = \iota + g_0$ of linear maps, $\gamma$, that send the unit of $H$ to the algebra unit, $\gamma(\un)=\un_{A}$. The neutral element is $\iota:=e_{A} \circ \epsilon$, given by $\iota(\un)=\un_{ A}$ and $\iota(x) = 0$ for $x \in H^+$. An infinitesimal character with values in $A$ is a linear map $\xi \in g_0$ such that for $x, y \in  H^+$, $\xi(m(x \otimes y)) = 0$. The linear space of infinitesimal characters is a Lie subalgebra of $g_0$ denoted $g_{A}$. An element $\Phi$ in $ G_0$ is called a character if, for $x,y \in  H$, $\Phi(m(x \otimes y)) = m_A(\Phi(x) \otimes \Phi(y))$. The set of characters is denoted by $G_{A} \subset G_0$. It forms a pro-unipotent group for the convolution product with (pro-nilpotent) Lie algebra $g_{ A}$. The exponential map $\exp^{\star}$ restricts to a bijection between $g_{ A}$ and $G_{ A}$. . The inverse of $\Phi \in G_{A}$ is given by composition with the Hopf algebra antipode $S$, i.e., $\Phi^{\star -1} = \Phi \circ S$. See \cite{EGP,GBFV} for more results and details.


\subsection{Tensor Hopf algebras}
\label{ssect:tensorHA}

Next we present briefly two examples of connected graded Hopf algebra, which play a key role from our point of view, i.e., the tensor and double tensor algebra over an arbitrary set. Let $J:=\{j_1,j_2, j_3,\ldots\}$ be a set (also called an alphabet). The set of words $j_{i_1}\cdots j_{i_l}$ is denoted $J^*$ and the linear span of $J^*$ is denoted $\mathcal J$. We start by defining $T(J):=\oplus_{n > 0} {\mathcal J}^{\otimes n}$ to be the nonunital tensor algebra over $J$. The full tensor algebra is denoted $\bar T(J):=\oplus_{n \ge 0} {\mathcal J}^{\otimes n}$, with ${\mathcal J}^{\otimes 0}=\mathbb{K}\un$. Elements in $T(J)$ are written as linear combinations of words $j_{i_1}\cdots j_{i_l} \in T(J)$. The natural degree of a word $w = j_{i_1}\cdots j_{i_n}$ is its lenght $n=:|w|$, and we write $w \in T_n(J)$. The space $T(J)$ is a graded algebra with the natural non-commutative product defined by concatenating words $w=j_{i_1}\cdots j_{i_n} \in T_n(J)$ and $w'=j_{k_1}\cdots j_{k_m} \in T_m(J)$
$$
	w \cdot w':= j_{i_1}\cdots j_{i_n} \cdot j_{k_1}\cdots j_{k_m}
			= j_{i_1}\cdots j_{i_n}j_{k_1}\cdots j_{k_m}  \in T_{n+m}(J).
$$ 
The tensor algebra $\bar T(J)$ becomes a unital connected non-commutative but cocommutative Hopf algebra if it is equipped with the unshuffle coproduct, which is defined by declaring the elements in degree one $J  \hookrightarrow \bar T(J)$ to be primitive, i.e., $\Delta^{\!\!\shuffle}(j_{i}):= j_i \otimes \un + \un \otimes j_i$. This definition extends multiplicatively to all of $\bar T(J)$, e.g.
$$
	\Delta^{\!\!\shuffle}(j_{i}j_{l}) = \Delta^{\!\!\shuffle}(j_{i})\Delta^{\!\!\shuffle}(j_{l}) 
	= j_{i}j_{l} \otimes \un + \un \otimes j_{i}j_{l} + j_{i} \otimes j_{l} + j_{l} \otimes j_{i}.
$$ 
The general form is given as follows
\begin{align}
\label{unshuffle}
	\Delta^{\!\!\shuffle}(w)=\sum_{u, v \in J^*} \langle u \shuffle v, w \rangle u \otimes v.
\end{align}
Here the coefficient $\langle u , v \rangle : = 1$ if $u=v$, and zero else. The product $\shuffle : \bar T(J) \otimes \bar T(J) \to \bar T(J)$ displaying in \eqref{unshuffle} is the shuffle product of words \cite{reutenauer}, which is defined iteratively for any word $w$ by $\un \shuffle w := w \shuffle \un := w$, and 
\begin{align}
\label{shuffleProd}
	j_i v \shuffle j_ku := j_i(v \shuffle j_ku) + j_k(j_i v \shuffle u),
\end{align}
for words $u,v \in J^*$. In low degrees,
\begin{align*}
	j_i \shuffle j_k &= j_i j_k + j_kj_i \\
	j_i \shuffle j_kj_l &=  j_i j_kj_l  + j_k(j_i \shuffle j_l ) = j_i j_kj_l + j_k j_i j_l + j_kj_lj_i\\
	j_ij_n \shuffle j_kj_m &= j_i(j_n  \shuffle j_kj_m)  + j_k( j_i j_n  \shuffle j_m).
\end{align*}
The shuffle product of words is associative and commutative. 

Later we will see that \eqref{shuffleProd} is just the commutative version of a more general shuffle product. In fact, we will be interested mainly in the non-commutative case (the one meaningful for planar QFT). The name ``shuffle algebra'' refers to general, possibly non-commutative, shuffle algebras, and we refer explicitly to ``commutative shuffle algebras'' in the commutative case.

\smallskip

Next we augment the complexity of our word algebra $T(J)$ by defining the double tensor algebra $T(T(J)) := \oplus_{n > 0} T(J)^{\otimes n}$, and use the bar-notation to denote elements $w_1 | \cdots | w_n \in T(T(J))$, $w_i \in T(J)$, $i=1,\ldots,n$. The algebra $T(T(J))$ is equipped with the concatenation product. For $a = w_1 | \cdots | w_n$ and $b = w_1' | \cdots | w_m'$ we denote the concatenation product in $T(T(J))$ by $a|b$, that is, $a|b := w_1 | \cdots | w_n | w_1' | \cdots | w_m'$. This algebra is multi-graded, i.e., $T(T(J))_{n_1,\ldots ,n_k}:=T_{n_1}(J) \otimes \cdots \otimes T_{n_k}(J)$, as well as graded
$$
	T(T(J))_n := \bigoplus\limits_{n_1 + \cdots + n_k=n}T(T(J))_{n_1,\ldots ,n_k}.
$$ 
Similar observations hold for the unital case, $\bar T(T(J))=\oplus_{n \ge 0} T(J)^{\otimes n}$, and we will identify without further comments a bar symbol such as $w_1|\un|w_2$ with $w_1|w_2$ (formally, using the canonical map from $\bar T(\bar T(J))$ to $\bar T(T(J))$). 

The double tensor algebra becomes a Hopf algebra by defining another unshuffle-type coproduct, which is a refinment of the unshuffling coproduct in \eqref{unshuffle}. Given two (canonically ordered) subsets $S \subseteq U$ of the set of integers $\mathbb{N}^\ast$, we call connected component of $S$ relative to $U$ a maximal sequence $s_1, \ldots , s_n$ in $S$ such that there are no $ 1\leq i < n$ and $u \in U$, such that $s_i < u < s_{i+1}$. In particular, a connected component of $S$ in $\mathbb{N}^\ast$ is simply a maximal sequence of successive elements $s,s+1,\ldots ,s+n$ in $S$.

Consider a word $j_{i_1}\cdots j_{i_n} \in T(J)$. For $S:=\{s_1,\ldots, s_p\} \subseteq [n]$, we set $j_S:= j_{i_{s_1}} \cdots j_{i_{s_p}}$ (resp. $j_\emptyset:=1$). Denoting $J_1,\ldots,J_k $ the connected components of $[n] - S$, we also set $j_{J^S_{[n]}}:= j_{J_1} | \cdots | j_{J_k}$. More generally, for $S \subseteq U \subseteq [n]$, set  $j_{J^S_U}:= j_{J_1} | \cdots | j_{J_k}$, where the ${J_i}$ are now the connected components of $U-S$ in $U$.

\begin{defn} \label{def:coproduct}
The map $\delta : T(J) \to \bar T(J) \otimes  \bar T(T(J))$ is defined by
\begin{equation}
\label{HA}
	\delta(j_{i_1}\cdots j_{i_n}) := \sum_{S \subseteq [n]} j_S \otimes  j_{J_1} | \cdots | j_{J_k}
					   =\sum_{S \subseteq [n]} j_S \otimes j_{J^S_{[n]}}.
\end{equation} 
The coproduct is then extended multiplicatively to all of $\bar T(T(J))$
$$
	\delta(w_1 | \cdots | w_m) := \delta(w_1) \cdots \delta(w_m),
$$
with $\delta(\un):= \un \otimes \un$.
\end{defn}

\begin{thm} \label{thm:HA} \cite{EP1,EP2}
The graded algebra $\bar T(T(J))$ equipped with the coproduct \eqref{HA} is a connected graded non-commutative and non-cocommutative Hopf algebra. 
\end{thm}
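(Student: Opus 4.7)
The plan is to establish that $(\bar T(T(J)), \cdot, \un, \delta, \varepsilon)$ is a connected graded bialgebra, with $\varepsilon$ the canonical augmentation sending $\un$ to $1$ and all elements of positive total length to $0$, and then invoke the standard result that any connected graded bialgebra admits a unique antipode built inductively by degree. Since $\delta$ is specified on $T(J)$ and extended multiplicatively to the free algebra $\bar T(T(J))$, it is by construction an algebra morphism into $\bar T(T(J)) \otimes \bar T(T(J))$ endowed with the tensor-product algebra structure. Consequently each of the maps $(\delta \otimes \id) \circ \delta$, $(\id \otimes \delta) \circ \delta$, $(\varepsilon \otimes \id) \circ \delta$, and $(\id \otimes \varepsilon) \circ \delta$ is an algebra morphism, so it suffices to verify coassociativity and the counit axioms on the generating words $w = j_{i_1} \cdots j_{i_n} \in T(J)$.

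The main step is coassociativity on such a word. I would show that both iterated coproducts of $w$ equal
\begin{equation*}
\sum_{[n] = A_1 \sqcup A_2 \sqcup A_3} j_{A_1} \otimes j_{J^{A_1}_{A_1 \cup A_2}} \otimes j_{J^{A_1 \cup A_2}_{[n]}},
\end{equation*}
the sum ranging over ordered triples of pairwise disjoint (possibly empty) subsets partitioning $[n]$. The identification $(\delta \otimes \id) \circ \delta(w) = \sum_{S_1 \subseteq S \subseteq [n]} j_{S_1} \otimes j_{J^{S_1}_{S}} \otimes j_{J^{S}_{[n]}}$ matches the formula directly via the reindexing $A_1 = S_1$, $A_2 = S \setminus S_1$, $A_3 = [n] \setminus S$. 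For $(\id \otimes \delta) \circ \delta(w)$, one writes $j_{J^{S}_{[n]}} = j_{J_1} | \cdots | j_{J_k}$ and expands $\delta(j_{J^{S}_{[n]}}) = \prod_i \delta(j_{J_i})$ multiplicatively, then sets $A_1 = S$, $A_2 = T \subseteq [n] \setminus S$, $A_3 = [n] \setminus S \setminus T$. The nontrivial point---and the main obstacle---is that the bar structures match: the product $\prod_i \delta(j_{J_i})$ places bars in the middle tensor slot only between the chunks $j_{T \cap J_i}$, and these must be shown to coincide with those prescribed by $j_{J^{A_1}_{A_1 \cup A_2}}$. This boils down to the observation that for two consecutive elements $a < b$ of $A_2$, the existence of some element of $A_1$ strictly between them is equivalent both to $a$ and $b$ lying in different components $J_i$ of $A_2 \cup A_3$ in $[n]$, and to their lying in different components of $A_2$ in $A_1 \cup A_2$ (since, $a$ and $b$ being consecutive in $A_2$, no $A_2$-element lies between them). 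An entirely parallel bookkeeping identifies the third factors: bars within a single $J_i$ arise from integer gaps in $A_3 \cap J_i$, while non-empty pieces of $A_3$ lying in distinct $J_i$ contribute exactly the bars required by $j_{J^{A_1 \cup A_2}_{[n]}}$.

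The counit axioms on a word $w$ are immediate: $(\varepsilon \otimes \id) \circ \delta(w)$ retains only the $S = \emptyset$ summand, which contributes $j_{J^\emptyset_{[n]}} = w$ since $[n]$ is a single component of itself, and $(\id \otimes \varepsilon) \circ \delta(w) = w$ by symmetry. The grading by total word length makes $\bar T(T(J))$ connected graded with $\delta$ degree-preserving, hence by the standard argument it carries a unique antipode and is a connected graded Hopf algebra. Non-commutativity is obvious from the bar-concatenation product, and non-cocommutativity is witnessed by the term $j_2 \otimes j_1 | j_3$ in $\delta(j_1 j_2 j_3)$: the corresponding flipped element $j_1 | j_3 \otimes j_2$ does not appear, since the only contribution whose first tensor factor equals $j_1 j_3$ comes from $S = \{1,3\}$ and produces $j_1 j_3 \otimes j_2$, without the intervening bar.
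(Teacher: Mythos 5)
Your proof is correct. Note that the paper does not actually prove Theorem \ref{thm:HA} --- it is stated with a reference to \cite{EP1,EP2} --- so there is no in-text argument to compare against; your route (check coassociativity and the counit axioms only on the generating words of the free algebra, identify both iterated coproducts with the sum over ordered decompositions $[n]=A_1\sqcup A_2\sqcup A_3$, then invoke the standard antipode construction for connected graded bialgebras) is the standard one and is the natural reading of what those references do. The two delicate points are exactly the ones you isolate, and your justifications hold because the connected components $J_i$ of $[n]-S$ are integer intervals separated by elements of $S$: consecutive elements of $A_2$ acquire a bar in the middle slot iff an element of $A_1$ lies strictly between them under either description, and the concatenation of the $j_{J^{T_i}_{J_i}}$ reproduces the components of $A_3$ relative to $[n]$. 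The counit computation, the length grading, and the non-cocommutativity witness $j_2\otimes j_1|j_3$ (whose flip cannot occur since the left tensor slot of $\delta$ on a word always lies in $\bar T(J)$) are all sound.
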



\subsection{Splitting of unshuffle coproducts}
\label{ssect:unshuffle}

As we already indicated, both coproducts, \eqref{unshuffle} and \eqref{HA}, are considered to be of unshuffle-type. In the latter case we keep track of the subsets of letters that have been extracted from a word $w$ in $T(J)$ by ``filling-in the holes" by bars in the respective places. We will show in the rest of this paper, that this simple operation, that is, going from $T(J)$ to $T(T(J))$, is enough to understand the different natures of the relations between full, connected and 1PI Green's functions in the context of planar field theories. In fact, we will see that these relations, whether in the planar or non-planar case, are encoded by a particular linear fixed point equation defined on either $T(J)$ or $T(T(J))$, and which is derived from a rather natural splitting of the coproducts \eqref{unshuffle} and \eqref{HA}.

Indeed, the unshuffle-type coproducts, \eqref{unshuffle} and \eqref{HA}, share a particular coalgebraic property, i.e., both can be split into a sum of  left and right unshuffle half-coproducts. We will first introduce this splitting for the coproduct \eqref{unshuffle}, where it is easily defined by realising, that \eqref{unshuffle} can be written in more set-theoretic terms
\begin{equation}
\label{unshuffle1}
	\Delta^{\!\!\shuffle}(j_{i_1}\cdots j_{i_n}) = \sum_{I \subseteq [n]} j_I \otimes j_{[n] -I}.
\end{equation}
As before, for subsets $S=\{s_1,\ldots ,s_k\}\subset [n]$, $j_S$ stands for the word $j_{i_{s_1}}\cdots j_{i_{s_k}}$. 
Now we define the left unshuffle half-coproduct $\Delta^{\!\!\shuffle}_\prec: T(J) \to T(J) \otimes \bar T(J)$
\begin{equation}
\label{deshufleft}
	\Delta^{\!\!\shuffle}_\prec(j_{i_1} \cdots j_{i_n})
		=\sum_{I \subseteq [n] \atop 1 \in I} j_I \otimes j_{[n] -I}.
\end{equation}
The right unshuffle half-coproduct is defined through $\Delta^{\!\!\shuffle}_\succ:=\Delta^{\!\!\shuffle}-\Delta^{\!\!\shuffle}_\prec$. In explicit form it writes
\begin{equation}
\label{deshufright}
	\Delta^{\!\!\shuffle}_\succ(j_{i_1} \cdots j_{i_n})
		=\sum_{I \subset [n] \atop 1 \notin I} j_I \otimes j_{[n] -I},
\end{equation}
so that the splitting of the unshuffle coproduct \eqref{unshuffle} in terms of these two operations writes
\begin{equation}
\label{deshufsplit}
	\Delta^{\!\!\shuffle} = \Delta^{\!\!\shuffle}_\succ + \Delta^{\!\!\shuffle}_\prec.
\end{equation}
This gives rise to an unshuffle bialgebra structure on $\bar T(J)$, whose formal definition will be given further below. The fine structure of this mathematical notion is studied in \cite{foipat}. Note that in the following we shall use both appellations, (left) right unshuffle half-coproduct or (left) right half-unshuffle. 

Before we move on to the double tensor algebra, $T(T(J))$, we shall give a few examples.
\allowdisplaybreaks{
\begin{align*}
	\Delta^{\!\!\shuffle}_\prec(j_{1}) &= j_{1} \otimes \un \\
	\Delta^{\!\!\shuffle}_\prec(j_{1} j_{2}) &= j_{1} \otimes j_{2} + j_{1} j_{2} \otimes \un \\
	\Delta^{\!\!\shuffle}_\prec(j_{1} j_{2}j_{3}) &= j_{1} \otimes j_{2} j_{3} 
										+ j_{1} j_{2} \otimes  j_{3} 
										+ j_{1} j_{3} \otimes  j_{2}
										+ j_{1} j_{2} j_{3} \otimes \un\\
	\Delta^{\!\!\shuffle}_\prec(j_{1} j_{2}j_{3}j_{4}) 	&= j_{1} \otimes j_{2} j_{3}j_{4} 
											+ j_{1} j_{2} \otimes  j_{3}j_{4} 
											+ j_{1} j_{3} \otimes  j_{2}j_{4}		
											+ j_{1} j_{4} \otimes  j_{2}j_{3}\\
										&\quad	+ j_{1} j_{2} j_{3} \otimes j_{4}
											+ j_{1} j_{2} j_{4} \otimes j_{3}
											+ j_{1} j_{3} j_{4} \otimes j_{2}
											+ j_{1} j_{2} j_{3}j_{4} \otimes \un,							
\end{align*}}
which should be compared with \eqref{nonplanar1}. For the right half-unshuffle $\Delta^{\!\!\shuffle}_\succ$ we find up to order three
\begin{align*}
	\Delta^{\!\!\shuffle}_\succ(j_{1}) &=  \un \otimes j_{1}\\
	\Delta^{\!\!\shuffle}_\succ(j_{1} j_{2}) &= j_{2} \otimes j_{1} + \un \otimes j_{1} j_{2}\\
	\Delta^{\!\!\shuffle}_\succ(j_{1} j_{2}j_{3}) &= j_{2} \otimes j_{1} j_{3}
										+ j_{3} \otimes j_{1} j_{2} 
										+ j_{2} j_{3} \otimes  j_{1}
										+ \un \otimes j_{1} j_{2} j_{3}.
\end{align*}

\begin{rmk}\label{rmk:cocom}{\rm{
Observe that $\tau \circ \Delta^{\!\!\shuffle}_\succ = \Delta^{\!\!\shuffle}_\prec$, which reflects cocommutativity of \eqref{unshuffle}. 
}}
\end{rmk}

Let us turn to the coproduct \eqref{HA}, which we would like to split into left and right unshuffle half-coproducts
\begin{equation}
\label{HAsplit}
	\delta = \delta_\succ + \delta_\prec,
\end{equation}
analogous to what we did for \eqref{deshufsplit}. The left unshuffle half-coproduct $\delta_\prec: T(J) \to  T(J) \otimes \bar T(T(J))$ is defined through
\begin{equation}
\label{HAleft}
	\delta_\prec(j_{i_1}\cdots j_{i_n}) := \sum_{S \subseteq [n] \atop 1 \in S} j_S \otimes  j_{J_1} | \cdots | j_{J_k}
					   =\sum_{S \subseteq [n] \atop 1 \in S} j_S \otimes j_{J^S_{[n]}},
\end{equation} 
and the right unshuffle half-coproduct $\delta_\succ: T(J) \to \bar T(J) \otimes T(T(J))$ is defined by
\begin{equation}
\label{HAright}
	\delta_\succ(j_{i_1}\cdots j_{i_n}) := \sum_{S \subset [n] \atop 1 \notin S} j_S \otimes  j_{J_1} | \cdots | j_{J_k}
					   =\sum_{S \subset [n] \atop 1 \notin S} j_S \otimes j_{J^S_{[n]}}.
\end{equation}

\begin{rmk}\label{n-unshuffles}{\rm{Note that for symmetry reasons, one can define a companion left half-unshuffle 
\begin{equation}
\label{HAleft-end}
	\delta_{\bar\prec}(j_{i_1}\cdots j_{i_n}) 
				:= \sum_{S \subseteq [n] \atop n \in S} j_S \otimes  j_{J_1} | \cdots | j_{J_k}
				 =\sum_{S \subseteq [n] \atop n \in S} j_S \otimes j_{J^S_{[n]}},
\end{equation}
and correspondingly another right half-unshuffle $\delta_{\bar\succ}(j_{i_1}\cdots j_{i_n}) :=\sum\limits_{S \subset [n] \atop n \notin S} j_S \otimes j_{J^S_{[n]}}.$}}
\end{rmk}

We shall give a few examples
\begin{align*}
	\delta_\prec(j_{1}) &= j_{1} \otimes \un \\
	\delta_\prec(j_{1} j_{2}) &= j_{1} \otimes j_{2} + j_{1} j_{2} \otimes \un \\
	\delta_\prec(j_{1} j_{2}j_{3}) &= j_{1} \otimes j_{2} j_{3} 
										+ j_{1} j_{2} \otimes  j_{3} 
										+ j_{1} j_{3} \otimes  j_{2}
										+ j_{1} j_{2} j_{3} \otimes \un\\
	\delta_\prec(j_{1} j_{2}j_{3}j_{4}) 	&= j_{1} \otimes j_{2} j_{3}j_{4} 
											+ j_{1} j_{2} \otimes  j_{3}j_{4} 
											+ j_{1} j_{3} \otimes  j_{2} | j_{4}
											+ j_{1} j_{4} \otimes  j_{2}j_{3}\\
										&\quad	
											+ j_{1} j_{2} j_{3} \otimes j_{4}
											+ j_{1} j_{2} j_{4} \otimes j_{3}
											+ j_{1} j_{3} j_{4} \otimes j_{2}
											+ j_{1} j_{2} j_{3}j_{4} \otimes \un.						
\end{align*}
Note the difference between the terms $j_{1} j_{3} \otimes  j_{2} | j_{4} \in T(J) \otimes T(T(J))$ and $j_{1} j_{3} \otimes  j_{2} j_{4}  \in T(J) \otimes T(J)$, which distinguishes the two left unshuffles half-coproduct $\delta_\prec$ and $\Delta^{\!\!\shuffle}_\prec$ at order four. This distinction reflects the one we already observed between the expansions at order four of the full Green's functions in terms of the connected ones in the planar and non-planar cases.

For the right unshuffle half-coproduct up to order three we find
\begin{align*}
	\delta_\succ(j_{1}) &= \un \otimes j_{1} \\
	\delta_\succ(j_{1} j_{2}) &= j_{2} \otimes  j_{1} + \un \otimes  j_{1} j_{2} \\
	\delta_\succ(j_{1} j_{2}j_{3}) &=  j_{2} \otimes j_{1} | j_{3} 
										+ j_{3} \otimes j_{1}  j_{2} 
										+ j_{2} j_{3} \otimes  j_{1}
										+ \un \otimes  j_{1} j_{2} j_{3}\\
	\delta_\succ(j_{1} j_{2}j_{3}j_{4}) 	&= j_{2} \otimes j_{1} | j_{3}j_{4} 
									+  j_{3} \otimes j_{1} j_{2} | j_{4}
									+  j_{4} \otimes j_{1} j_{2} j_{3} \\
								&\quad	+ j_{2} j_{3} \otimes  j_{1} | j_{4} 
										+ j_{2} j_{4} \otimes  j_{1} | j_{3}
										+ j_{3} j_{4} \otimes  j_{1} j_{2}
										+ j_{2} j_{3} j_{4} \otimes j_{1} 
										+ \un \otimes  j_{1} j_{2} j_{3} j_{4}											
\end{align*}

\begin{rmk}\label{rmk:noncocom}{\rm{Note that contrary to Remark \ref{rmk:cocom}, we observe that already at order three we find that $\tau \circ \delta_\succ \neq \delta_\prec$. We would like to emphasize that this difference, that is, the fact that \eqref{unshuffle} is cocommutative but \eqref{HA} is not, distinguishes the non-planar from the planar setting of QFTs. 
}}
\end{rmk}

In the light of coassociativity \eqref{coasso} of the coproducts  \eqref{unshuffle} and \eqref{HA}, the respective splittings in \eqref{deshufsplit} and \eqref{HAsplit} imply general properties to be satisfied by the left and right unshuffle half-coproducts, which will be stated now in the next two definitions.

\begin{defn}
\label{def:unshufCoalg}
A counital unshuffle coalgebra is a coaugmented coassociative coalgebra $\bar C = C \oplus \Kb \un$ with coproduct
\begin{equation}
\label{codend}
	\Delta(c) := \bar\Delta(c) + c \otimes \un + \un \otimes c,
\end{equation}
such that, on $C$, the reduced coproduct splits, $\bar\Delta = \bar\Delta_{\prec} + \bar\Delta_{\succ}$ with 
\begin{eqnarray}
	(\bar\Delta_{\prec} \otimes \id) \circ \bar\Delta_{\prec}   	
		&=& (\id \otimes \bar\Delta)\circ \bar\Delta_{\prec}        	\label{C1}\\
  	(\bar\Delta_{\succ} \otimes \id) \circ \bar\Delta_{\prec}   	
		&=& (\id \otimes \bar\Delta_{\prec})\circ \bar\Delta_{\succ} 	\label{C2}\\
   	(\bar\Delta \otimes  \id) \circ \bar\Delta_{\succ}         	
		&=& (\id \otimes \bar\Delta_{\succ})\circ \bar\Delta_{\succ}     \label{C3}.
\end{eqnarray}
The maps $\bar\Delta_{\prec}$ and $\bar\Delta_{\succ}$ are called respectively augmented left and right unshuffle half-coproducts. A cocommutative unshuffle coalgebra satisfies $\bar\Delta_\prec = \tau \circ \bar\Delta_\succ$, where $\tau$ denotes the twist map, $\tau(x \otimes y):=y\otimes x$. 
\end{defn}

\begin{defn}
An unshuffle bialgebra is a unital and counital bialgebra $\bar B = B \oplus \Kb \un$ with product $m_B(x \otimes y)=: x \cdot_B y$ and coproduct $\Delta$, as well as a counital unshuffle coalgebra $\bar\Delta = \bar\Delta_{\prec} + \bar\Delta_{\succ}$. Moreover, the following compatibility relations hold 
\begin{eqnarray}
	\Delta_{\prec}(a \cdot_B b)  &=& \Delta_{\prec}(a)  \cdot_B \Delta(b)      	\label{D1}\\
  	\Delta_{\succ}(a \cdot_B b)  &=& \Delta_{\succ}(a)  \cdot_B \Delta(b),     \label{D2}
\end{eqnarray}
where
\begin{eqnarray}
	\Delta_{\prec}(a)  &:=& \bar\Delta_{\prec}(a) + a \otimes \un     	\label{D3}\\
  	\Delta_{\succ}(a)  &:=& \bar\Delta_{\succ}(a) + \un \otimes a ,   	\label{D4}
\end{eqnarray}
and $\Delta = \Delta_{\prec} + \Delta_{\succ}$.
\end{defn}

For example, \eqref{unshuffle1} together with the concatenation product defines the structure of a cocommutative unshuffle bialgebra on $\bar T(J)$.

\begin{thm}  \cite{EP1} \label{thm:bialg}
The bialgebra $\bar T(T(J))$ with its coproduct \eqref{HA} split into left and right unshuffle half-coproducts, $\delta = \delta_{\prec} + \delta_{\succ}$, defined in \eqref{HAleft} and \eqref{HAright}, is an unshuffle bialgebra. 
\end{thm}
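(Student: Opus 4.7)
The plan is to verify, on top of the bialgebra structure established in Theorem \ref{thm:HA}, the two families of axioms for an unshuffle bialgebra: the coalgebraic splittings \eqref{C1}--\eqref{C3}, and the product-compatibilities \eqref{D1}--\eqref{D2}.

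First, I would extend the half-coproducts \eqref{HAleft}, \eqref{HAright}, initially defined only on single words in $T(J)$, to the whole of $\bar T(T(J))$ by declaring
\[
    \delta_\prec(a \cdot_B b) := \delta_\prec(a) \cdot \delta(b),
    \qquad
    \delta_\succ(a \cdot_B b) := \delta_\succ(a) \cdot \delta(b),
\]
for $a, b \in \bar T(T(J))$, so that \eqref{D1}--\eqref{D2} hold by definition. Summing these two relations yields $(\delta_\prec + \delta_\succ)(a \cdot_B b) = \delta(a) \cdot \delta(b) = \delta(a \cdot_B b)$; combined with the equality $\delta = \delta_\prec + \delta_\succ$ on $T(J)$, which is immediate from \eqref{HA}, \eqref{HAleft}, \eqref{HAright}, this shows inductively in $|$-length that $\delta_\prec + \delta_\succ = \delta$ throughout $\bar T(T(J))$.

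The main work then consists in checking \eqref{C1}--\eqref{C3} on a single word $w = j_{i_1}\cdots j_{i_n} \in T(J)$; the general case on $\bar T(T(J))$ follows by repeated application of \eqref{D1}--\eqref{D2} together with the multiplicativity of $\delta$. For \eqref{C1}, I would expand both sides as sums indexed by chains of subsets of $[n]$:
\begin{align*}
    (\delta_\prec \otimes \id) \circ \delta_\prec(w)
        &= \sum_{1 \in A \subseteq B \subseteq [n]} j_A \otimes j_{J^A_B} \otimes j_{J^B_{[n]}}, \\
    (\id \otimes \delta) \circ \delta_\prec(w)
        &= \sum_{\substack{1 \in A \subseteq [n] \\ T_i \subseteq J_i,\ 1 \le i \le k}}
            j_A \otimes (j_{T_1}|\cdots|j_{T_k}) \otimes (j_{J^{T_1}_{J_1}}|\cdots|j_{J^{T_k}_{J_k}}),
\end{align*}
where $J_1, \ldots, J_k$ denote the connected components of $[n] - A$ in $[n]$, and the second line is obtained by applying $\delta$ multiplicatively to the bar-product $j_{J_1}|\cdots|j_{J_k}$. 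The bijection $B \leftrightarrow (T_1,\ldots,T_k)$ defined by $B = A \cup \bigsqcup_i T_i$ (equivalently $T_i = (B - A) \cap J_i$) matches the two sums, provided the bar-structures agree: the grouping $j_{T_1}|\cdots|j_{T_k}$ must coincide with $j_{J^A_B}$, and $j_{J^{T_1}_{J_1}}|\cdots|j_{J^{T_k}_{J_k}}$ must match $j_{J^B_{[n]}}$. Both identifications rest on the key combinatorial observation that each $J_i$ is an interval of consecutive integers in $[n]$ separated by $A$-elements, so the connected-components structure of any subset of $[n] - A$ decomposes cleanly across the $J_i$'s. The identities \eqref{C2} and \eqref{C3} are handled by analogous bijective arguments; \eqref{C2}, however, requires extra care, since the noncocommutativity observed in Remark \ref{rmk:noncocom} precludes deducing it from \eqref{C1} or \eqref{C3} by any symmetry, and one must track which side of the subset chain contains the distinguished position $1$.

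I expect the main obstacle to be the combinatorial bookkeeping in these bijections, particularly the verification that the bar-structure produced by iterating $\delta$ multiplicatively genuinely matches the bar-structure of connected components in the appropriate ambient set. Once \eqref{C1}--\eqref{C3} are established at the single-word level, their extension to $\bar T(T(J))$ is routine: each side, applied to a product $a \cdot_B b$, factors as the corresponding side applied to $a$ multiplied by $\delta$ applied to $b$, thanks to \eqref{D1}--\eqref{D2} and the bialgebra identity $\delta(a \cdot_B b) = \delta(a) \cdot \delta(b)$, so the single-word identities propagate automatically.
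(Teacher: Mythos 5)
Your proposal is correct and takes essentially the expected route (the paper itself only cites \cite{EP1} for the proof, but it extends the half-coproducts to bar-products exactly as you do — compare the analogous formulas \eqref{extens}--\eqref{extens2} for $\bar T(\mathcal{NC})$ — and the verification must then reduce to the single-word identities). The key points are all correctly identified: the fact that the connected components $J_i$ of $[n]-A$ are intervals separated by elements of $A$, so that the bar-structures on both sides of \eqref{C1}--\eqref{C3} match under the bijection $B \leftrightarrow (T_1,\ldots,T_k)$ with $B = A\cup T_1\cup\cdots\cup T_k$; the careful tracking of which factor contains the letter $1$ in \eqref{C2}; and the routine propagation from $T(J)$ to all of $\bar T(T(J))$ using \eqref{D1}--\eqref{D2} together with coassociativity of $\delta$.
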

   
\begin{rmk} \label{Cor:singleInterval}
{\rm{
The left unshuffle half-coproduct \eqref{HAleft} can be further split 
\begin{equation}
\label{HAleft1}
	\bar \delta_{\prec}(j_{i_1}\cdots j_{i_n}) = \hat{\delta}_{\prec}(j_{i_1}\cdots j_{i_n}) + \tilde{\delta}_{\prec}(j_{i_1}\cdots j_{i_n}),
\end{equation} 	
where $\hat{\delta}_{\prec}(j_{i_1}\cdots j_{i_n}) \in T(J) \otimes T(J)$ is the linearized part, and the rest $ \tilde{{\delta}}_{\prec}(j_{i_1}\cdots j_{i_n}) \in T(J) \otimes\bigoplus_{n>1} T(T(J))_n$. The linearized part is best described in terms of intervals
\begin{equation}
\label{HAleft2}
	\hat{\delta}_\prec (j_{i_1}\cdots j_{i_n}) = \sum_{I_1 \coprod I_2 \coprod I_3 = [n] \atop 1\in I_1, I_2 \neq \emptyset} j_{I_1 \coprod I_3} \otimes  j_{I_2}.
\end{equation} 	
Here $I_1,I_2,I_3$ are three disjoint intervals, such that $I_1 \coprod I_2 \coprod I_3 = [n]$ and $I_3$ is possibly empty. Moreover, the minimal elements of each interval satisfy $\min(I_1)=1 < \min(I_2) < \min(I_3)$.}}
\end{rmk}


\subsection{Green's functions revisited}
\label{ssect:Gfcts}

Recall that the main mathematical purpose of planar and non-planar QFTs is the calculation of Green's functions, from which, ultimately, physical quantities can be computed. Green's functions are also called $n$-point correlation functions, and in real space they are defined as (functional) expectation values of products of $n$ field operators at different positions \cite{itzyksonzuber,zavialov}. 

In this work we will consider the generating series of Green's functions for planar (resp.~non-planar) theories as elements in the dual space $Lin(\bar{T}(T(J)),\mathbb{C})=:\bar{T}^*(T(J))$ of linear maps from the Hopf algebra $\bar{T}(T(J))$ to the complex numbers $\mathbb{C}$ (resp.~$Lin(\bar{T}(J),\mathbb{C})=:\bar{T}^*(J)$). 
This approach allows to handle Green's functions using the full machinery of Hopf algebras and unshuffle bialgebras.

Note that the problem of (ultraviolet) divergencies and its solution in terms of the process of renormalization \cite{caswellkennedy,collins} demands for regularization procedures, which involves replacing the field of complex numbers as target space of linear functions, by some commutative and unital $\mathbb{C}$-algebra $A$, for example the algebra of Laurent series $\mathbb{C}[\varepsilon^{-1},\varepsilon]]$ in a dimensional regularization parameter $\varepsilon$. We point out that changing the target algebra from $\mathbb{C}$ to such an algebra $A$ would not change the underlying algebraic and combinatorial framework. Indeed, our forthcoming developments apply as well in this more general setting.

To make this more precise we consider first $Lin(T(J),\mathbb{C})=T^*(J)$, and introduce for each word $w=j_{i_1}\cdots j_{i_k} \in J^*$ its dual element $d_w \in T^*(J)$, which is defined as a linear form over $T(J)$, such that for any word $w' \in J^*$ one has that $d_w(w'):=(w,w')=1$ if $w=w'$, and zero else. General linear forms over $T(J)$ are then defined as formal series $F:=\sum_{w \in J^*} (F,w)d_w $.

From this perspective generating functionals of non-planar Green's functions are viewed now as formal series, with the full, connected and 1PI Green's functions as coefficients:
\begin{align*}
	 \tau_{{Z}}(j_{i_1}\cdots j_{i_l})&:=Z^{(l)}_{j_{i_1}\cdots j_{i_l}}\\
	 \tau_{{W}}(j_{i_1}\cdots j_{i_l})&:=W^{(l)}_{j_{i_1}\cdots j_{i_l}}\\
	 \tau_{{\Gamma}}(j_{i_1}\cdots j_{i_l})&:=\Gamma^{(l)}_{j_{i_1}\cdots j_{i_l}}.   
\end{align*}
We call these linear maps respectively the non-planar full, connected and 1PI Green's functions. 

Let us focus now on the planar case. Here, one may apply the above interpretation of generating functionals of Green's functions as linear forms, when restricted to degree one,  that is, for words in $T(J) \hookrightarrow T(T(J))$ one writes
\begin{align*}
	 \tau_{\mathrm{Z}}(j_{i_1}\cdots j_{i_m})&:=\mathrm{Z}^{(m)}_{j_{i_1}\cdots j_{i_m}}\\
	 \tau_{\mathrm{W}}(j_{i_1}\cdots j_{i_m})&:=\mathrm{W}^{(m)}_{j_{i_1}\cdots j_{i_m}}\\
	 \tau_{\mathrm{\Gamma}}(j_{i_1}\cdots j_{i_m})&:=\mathrm{\Gamma}^{(m)}_{j_{i_1}\cdots j_{i_m}}.   
\end{align*}
The critical step is the extension to all of $\bar{T}(T(J))$. Indeed, the first step in clarifying the relations between these linear maps from a Hopf algebra point of view is taken by considering full Green's functions as a multiplicative map on $\bar T(T(J))$, that is, by extending the linear map of full Green's functions, $\tau_{\mathrm{Z}}$, multiplicatively to all of $\bar T(T(J))$
$$
	\tau({\bf 1})=1,\quad \tau_{\mathrm{Z}}(w_1 | \cdots | w_n):=\tau_{\mathrm{Z}}(w_1) \cdots \tau_{\mathrm{Z}}(w_n).
$$  
In other terms, $\tau_{\mathrm{Z}}\in G_\mathbb{C}$, the group of characters on $\bar T(T(J))$ -- we shall see later that it is natural to require that both $\tau_{\mathrm{W}}$ and $\tau_{\mathrm{\Gamma}}$ are infinitesimal characters.

Next we remind the reader of the convolution product \eqref{convolutionProd} introduced above in the context of the space of linear maps on a Hopf algebra with values in a commutative unital algebra, say, for instance, the complex numbers. This turns $\bar{T}^*(T(J))$ into a non-commutative unital algebra, where the product is defined in terms of the coproduct (\ref{HA}), i.e., for $\alpha, \beta \in Lin(\bar T(T(A)),\mathbb{C})$
\begin{equation}
\label{ConvProd1}
	\alpha \star \beta := m_{\mathbb{C}} \circ (\alpha \otimes \beta) \circ \delta.
\end{equation}
Here $m_{\mathbb{C}}$ stands for the product map in $\mathbb{C}$. 

The splitting of the coproduct (\ref{HA}) into left and right unshuffle half-coproducts \eqref{HAsplit} can be lifted to the algebra $\bar{T}^*(T(J))$. To this extend we define the left and right half-shuffle convolution products 
\begin{eqnarray}
	\alpha \prec \beta &:=& m_{\mathbb{C}} \circ (\alpha \otimes \beta)\circ \delta_\prec 		\label{dend1}\\
	\alpha \succ \beta &:=& m_{\mathbb{C}} \circ (\alpha \otimes \beta)\circ \delta_\succ ,	\label{dend2}
\end{eqnarray}
such that \eqref{HAsplit} implies that
\begin{equation}
\label{ConvProd2}
	\alpha \star \beta =  \alpha \succ \beta + \alpha \prec \beta.
\end{equation}

\begin{rmk}{\rm{
The same construction would hold for the dual space of an arbitrary unshuffle bialgebra.
}}
\end{rmk}

Definition \ref{def:unshufCoalg} implies the following relations for the binary operations $\succ$ and $\prec$. Note that with the aim of emphasising the shuffle-type behaviour, we replace the product $\star$ by the classical notation for the shuffle product $\shuffle$. 
\begin{eqnarray}
	(a\prec b)\prec c   &=& a\prec(b \shuffle c)        		\label{A1}\\
  	(a\succ b)\prec c   &=& a\succ(b\prec c)   			\label{A2}\\
   	a\succ(b\succ c)   &=& (a \shuffle b)\succ c        	\label{A3},
\end{eqnarray}
where
\begin{equation}
	a \shuffle b := \ a \prec b + a \succ b. 			\label{dendassoc}       
\end{equation}

These are the axioms defining the abstract notion of a {\it{shuffle}}, or {\it{dendrimorphic}} algebra \cite{EP1,EP2}, which is a $\mathbb{K}$-vector space $D$ together with two bilinear compositions $\prec$ and $\succ$, the so-called left and right half-shuffle products, satisfying \eqref{A1}, \eqref{A2}, and \eqref{A3}. In fact, these axioms imply that any shuffle algebra is an associative algebra for the shuffle product defined in (\ref{dendassoc}), and we call $\shuffle$ the shuffle product on $D$.

A {\it{commutative shuffle algebra}} is a shuffle algebra, where the left and right half-shuffles are identified as follows
$$	
	x \succ y = y \prec x ,
$$
so that in particular the shuffle product $\shuffle$ is commutative: $x \shuffle y = x \prec y + x \succ y =y \shuffle x$. The standard example of a commutative shuffle algebra structure was introduced on $\bar{T}(J)$ in \eqref{shuffleProd}. We remark that the same structure is obtained as the dual to the one on $\bar T(J)$, when equipped with the concatenation product and the unshuffle coproduct. The duality is obtained by requiring the words in $J^\ast$ to form an orthonormal basis.

Shuffle algebras are not naturally unital. This is because it is impossible to ``split'' the unit equation, $\un \shuffle a=a\shuffle \un=a$, into two equations involving the left and right half-shuffle products $\succ$ and $\prec$. This issue is circumvented by using the ``Sch\"utzenberger trick'', that is, for $D$ a shuffle algebra, $\bar D := D \oplus \mathbb{K}\un$ denotes the shuffle algebra augmented by a unit $\un$, such that
\begin{equation}
\label{unit-dend}
    a \prec \un := a =: \un \succ a
    \hskip 12mm
    \un \prec a := 0 =: a \succ \un,
\end{equation}
implying $a \shuffle \un = \un \shuffle a = a$. By convention, $\un \shuffle \un=\un$, but $\un \prec \un$ and $\un \succ \un$ cannot be defined consistently in the context of the axioms of shuffle algebras.

In the light of \eqref{dend1} and \eqref{dend2} we arrive at the next result.

\begin{prop}
\label{shufflealgTTJ}
	The space $(Lin(\bar{T}(T(J)),\mathbb{C}), \prec, \succ)$ is a shuffle algebra.
\end{prop}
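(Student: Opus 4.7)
The plan is to derive the shuffle algebra axioms (A1)--(A3) for the convolution-type products $\prec, \succ$ on $Lin(\bar{T}(T(J)),\mathbb{C})$ by dualising the unshuffle coalgebra axioms (C1)--(C3) which hold on $\bar{T}(T(J))$ by virtue of Theorem \ref{thm:bialg}. The computation is entirely formal: for linear maps $\alpha,\beta,\gamma$, the definitions \eqref{dend1}, \eqref{dend2} and \eqref{ConvProd1} identify each half-shuffle (or shuffle) convolution product with the corresponding (half-)coproduct precomposed by a triple tensor. The binary operation axioms then translate term-by-term into the ternary coassociativity-type relations of Definition \ref{def:unshufCoalg}.

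Concretely, I would first record the standard identity, for any linear $\alpha,\beta,\gamma$,
\begin{equation*}
(\alpha \bullet_1 \beta)\bullet_2 \gamma = m_\mathbb{C}^{(2)} \circ (\alpha\otimes\beta\otimes\gamma)\circ (\bullet_1^\vee \otimes \id)\circ \bullet_2^\vee,
\end{equation*}
\begin{equation*}
\alpha \bullet_2 (\beta\bullet_1 \gamma) = m_\mathbb{C}^{(2)} \circ (\alpha\otimes\beta\otimes\gamma)\circ (\id \otimes \bullet_1^\vee)\circ \bullet_2^\vee,
\end{equation*}
where $\bullet_i \in \{\prec,\succ,\shuffle\}$ and $\bullet_i^\vee \in \{\delta_\prec,\delta_\succ,\delta\}$ are the corresponding coproducts. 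Applied to (C1) this yields $(\alpha\prec\beta)\prec\gamma = \alpha\prec(\beta\shuffle\gamma)$, i.e.\ (A1); applied to (C2) it yields $(\alpha\succ\beta)\prec\gamma = \alpha\succ(\beta\prec\gamma)$, i.e.\ (A2); applied to (C3) it yields $\alpha\succ(\beta\succ\gamma) = (\alpha\shuffle\beta)\succ\gamma$, i.e.\ (A3). Finally, summing the three relations and using $\delta=\delta_\prec+\delta_\succ$ recovers the associativity of $\shuffle=\prec+\succ$, consistent with \eqref{dendassoc}.

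The only point requiring a small verification is the treatment of the counit, since $\delta_\prec$ and $\delta_\succ$ are first displayed on $T(J)$ in \eqref{HAleft}, \eqref{HAright} and then promoted via the unshuffle bialgebra axioms (D1)--(D4). One checks that on elements $w\in T(J)$ the summands $S=[n]$ and $S=\emptyset$ in \eqref{HA} are absorbed precisely into $\delta_\prec$ and $\delta_\succ$ respectively, so that upon dualising the boundary terms reproduce exactly the Sch\"utzenberger conventions \eqref{unit-dend}: $\varepsilon \prec \alpha = 0 = \alpha \succ \varepsilon$ and $\varepsilon \succ \alpha = \alpha = \alpha \prec \varepsilon$ for $\alpha$ vanishing on $\un$. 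Together with the extension to $\bar T(T(J))$ guaranteed by Theorem \ref{thm:bialg}, this completes the verification.

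I do not anticipate a real obstacle; the hardest book-keeping step is simply being careful with the unit/counit conventions at the interface between the reduced coproduct on the augmentation ideal and its augmented counterparts used in \eqref{dend1}--\eqref{dend2}. Once that has been checked, axioms (A1)--(A3) follow as immediate transposes of (C1)--(C3).
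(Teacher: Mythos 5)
Your proposal is correct and takes essentially the same route as the paper: both arguments dualise the unshuffle coalgebra axioms (C1)--(C3), which hold for $\bar T(T(J))$ by the unshuffle bialgebra structure, into the shuffle axioms (A1)--(A3) via the standard identities expressing iterated convolution half-products as triple tensors precomposed with iterated half-coproducts. The paper likewise concludes by equipping the algebra with the unit $\varepsilon$ under the Sch\"utzenberger conventions, exactly as in your final paragraph.
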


\begin{rmk}{\rm{We state the proposition for $\bar{T}(T(J))$, since we always try to put the emphasis on planar QFT. However, the proof depends only on the fact that $\bar T(T(J))$ is an unshuffle bialgebra and therefore would hold for an arbitrary unshuffle bialgebra $B$ with $\bar B=B\oplus\Kb\un$. In particular, the property holds for $Lin(\bar{T}(J),\mathbb{C})$.
}}
\end{rmk}

\begin{proof}
For arbitrary $\alpha , \beta , \gamma \in T^*(T(J))$,
$$
	(\alpha \prec \beta)\prec \gamma	
			=m_{\mathbb{C}^{[3]}}\circ ((\alpha \prec \beta)\otimes \gamma)\circ\delta_\prec 
			=m_{\mathbb{C}^{[3]}}\circ (\alpha \otimes \beta\otimes \gamma)\circ(\delta_\prec\otimes \id)\circ \delta_\prec,
$$
where ${\mathbb{C}^{[3]}}$ stands for the product map from $\mathbb{C}^{\otimes 3}$ to $\mathbb{C}$. Similarly 
\begin{eqnarray*}
	\alpha\prec ( \beta\shuffle \gamma)
		&=&m_{\mathbb{C}}\circ (\alpha\otimes (\beta\shuffle \gamma))\circ \delta_\prec \\
		&=&m_{\mathbb{C}^{[3]}}\circ (\alpha \otimes \beta\otimes \gamma)\circ (\id \otimes \overline\delta)\circ\delta_\prec,
\end{eqnarray*}
where $\overline\delta(u) = \delta(u) - u \otimes \un - \un \otimes u$ is the reduced coproduct. So that the identity $(\alpha \prec \beta)\prec \gamma=\alpha \prec (\beta\shuffle \gamma)$ follows from $(\delta_\prec\otimes \id)\otimes\delta_\prec =(\id \otimes \overline\delta)\circ\delta_\prec$, and similarly for the other identities characterizing shuffle algebras.

We equip the shuffle algebra $(T^*(T(J)), \prec, \succ)$ with the unit $\varepsilon$ -- recall that $\varepsilon$ is the null map on $T(T(J))$, and the identity map on $T(J)^{\otimes 0}\cong \mathbb{C}\un$. That is, for an arbitrary $\alpha$ in $\bar{T}^*(T(J))$,
$$
	\alpha \prec \varepsilon 
	= \alpha 
	= \varepsilon \succ \alpha,\ \ \varepsilon \prec \alpha 
	= 0 
	= \alpha \succ \varepsilon.
$$
\end{proof}

Let us introduce some useful notations. Let $L_{a \succ} \left( b \right) := a \succ b =; R_{\succ b} \left( a \right) $. The shuffle axioms yield
$$	
	L_{a \succ} L_{b \succ} = L_{a \shuffle b \succ}, 
	\qquad\ 
	R_{\prec a} R_{\prec b} = R_{\prec b \shuffle a}.
$$

Recall that a left pre-Lie algebra \cite{cartier2, manchon1} is a $\mathbb{K}$-vector space $V$ equipped with a bilinear product $\vartriangleright : V \otimes V \to V$, such that for arbitrary $a,b,c \in V$ 
\begin{equation}
\label{pLrel}
	a \vartriangleright (b\vartriangleright c) - (a\vartriangleright b) \vartriangleright c
		= b \vartriangleright (a\vartriangleright c) - (b\vartriangleright a) \vartriangleright c.
\end{equation}
It implies that the bracket $[a,b]:= a \vartriangleright b - b \vartriangleright a$ satisfies the Jacobi identity. A right pre-Lie algebra is defined appropriately. Note that any associative algebra is pre-Lie. For several reasons pre-Lie algebras play a key role, e.g., in the understanding of recursive equations such as Bogoliubov's counterterm formula in perturbative quantum field theory \cite{EGP2,EMP}. The next lemma follows directly from the axioms (\ref{A1})-(\ref{A3}) of shuffle products.

\begin{lem}
Let $D$ be a shuffle algebra. The product $\vartriangleright : D \otimes D \to D$ 
$$
	a \vartriangleright b := a \succ b - b \prec a 
$$
is left pre-Lie. We write its left action $L_{a \vartriangleright} \left( b \right) = a \vartriangleright b=L_{a \succ} - R_{\prec a}$.
\end{lem}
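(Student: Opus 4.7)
The plan is to verify the left pre-Lie identity (\ref{pLrel}) by showing that the associator $A(a,b,c) := a \vartriangleright (b \vartriangleright c) - (a \vartriangleright b) \vartriangleright c$ is symmetric in its first two arguments. This reduces to a direct computation using only the three shuffle axioms (\ref{A1})--(\ref{A3}), with no appeal to a unit or to associativity of $\shuffle$.

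First, I would expand both compositions via the definition $x \vartriangleright y := x \succ y - y \prec x$, producing four terms for $a \vartriangleright (b \vartriangleright c)$, namely $a\succ(b\succ c)$, $-a\succ(c\prec b)$, $-(b\succ c)\prec a$, $(c\prec b)\prec a$, and four terms for $(a\vartriangleright b)\vartriangleright c$, namely $(a\succ b)\succ c$, $-(b\prec a)\succ c$, $-c\prec(a\succ b)$, $c\prec(b\prec a)$. Next, I would apply (\ref{A3}) to rewrite $a \succ (b \succ c) = (a \succ b)\succ c + (a \prec b) \succ c$ and (\ref{A1}) to rewrite $(c \prec b) \prec a = c \prec (b \succ a) + c \prec (b \prec a)$. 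The ``extreme'' terms $(a \succ b) \succ c$ and $c \prec (b \prec a)$ will then cancel between the two halves of the associator.

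The key remaining step is to use axiom (\ref{A2}) in the form $a \succ (c \prec b) = (a \succ c) \prec b$ (and not in the reverse direction) to harmonize the two mixed terms $-a \succ (c \prec b)$ and $-(b \succ c) \prec a$ into the symmetric pair $-(a \succ c) \prec b - (b \succ c) \prec a$. After this rewrite, $A(a,b,c)$ takes the manifestly $(a \leftrightarrow b)$-symmetric form
$$A(a,b,c) = (a \prec b) \succ c + (b \prec a) \succ c - (a \succ c) \prec b - (b \succ c) \prec a + c \prec (a \succ b) + c \prec (b \succ a),$$
which is precisely the identity $A(a,b,c) = A(b,a,c)$ needed. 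No significant obstacle is anticipated: the proof is a careful sign-and-term bookkeeping exercise reflecting the classical fact that every dendriform algebra carries a natural left pre-Lie structure; the Jacobi identity for $[a,b] := a \vartriangleright b - b \vartriangleright a$ then follows automatically. The identification $L_{a\vartriangleright} = L_{a\succ} - R_{\prec a}$ is immediate from the definition.
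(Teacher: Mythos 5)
Your computation is correct: the expansion of the associator, the applications of (\ref{A1}), (\ref{A3}) to merge the extreme terms, and of (\ref{A2}) to symmetrize the mixed terms all check out, yielding an expression manifestly invariant under $a\leftrightarrow b$, which is exactly the left pre-Lie identity (\ref{pLrel}). This is precisely the direct verification from the shuffle axioms that the paper asserts (without writing out the details), so your proof matches the paper's intended argument.
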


\noindent Note that $[a,b]= a \vartriangleright b - b \vartriangleright a = a \shuffle b - b \shuffle a$ for all $a,b \in D$. The pre-Lie product is trivial (null) on commutative shuffle algebras, since we then have $a\succ b=b\prec a$.

The following set of left and right half-shuffle words in $\bar{D}$ are defined recursively for fixed elements $x_1,\ldots, x_n \in D$, $n \in \mathbb{N}$
 \allowdisplaybreaks{
\begin{eqnarray*}
    w^{(0)}_{\prec}(x_1,\ldots, x_n) &:=& \un =:w^{(0)}_{\succ}(x_1,\ldots, x_n) \\
    w^{(n)}_{\prec}(x_1,\ldots, x_n) &:=& x_1 \prec \bigl(w^{(n-1)}_\prec(x_2,\ldots, x_n)\bigr)\\
    w^{(n)}_{\succ}(x_1,\ldots, x_n) &:=& \bigl(w^{(n-1)}_\succ(x_1,\ldots, x_{n-1})\bigr)\succ x_n.
\end{eqnarray*}}
In case that $x_1=\cdots = x_n=x$ we simply write $x^{\prec{n}}:=w^{(n)}_{\prec}(x,\ldots, x)$ and $x^{\succ{n}} := w^{(n)}_{\succ}(x,\ldots, x)$.

In the unital algebra $\bar D$ both the exponential and logarithm maps are defined in terms of the associative product~(\ref{dendassoc})
\begin{equation}
\label{ExpLog}
	\exp^\shuffle(x):=\un + \sum_{n > 0} \frac{x^{\shuffle n}}{n!}  
	\quad\ {\rm{resp.}} \quad\ 
	\log^\shuffle(\un+x):=-\sum_{n>0}(-1)^n\frac{x^{\shuffle n}}{n}. 
\end{equation}
Notice that we do not consider convergence issues: in practice we will apply such formal power series computations either in a purely algebraic setting (formal convergence arguments would then apply), or when dealing with graded algebras (then the series will reduce to a finite number of nonzero terms when restricted to a given graded component).

It is also convenient to introduce the  ``(time-)ordered'' exponential
$$
	\exp^{\prec}( x) := \un + \sum_{n > 0} x^{\prec{n}}.
$$
Similarly, we also define $\exp^{\succ}( x):=\sum_{n \ge 0} x^{\succ n}.$ It corresponds to the usual time-ordered exponential in physics, when the shuffle product is defined with respect to products of, say, matrix- or operator-valued iterated integrals. Notice that $X=\exp^{\prec}( x)$ and $Z=\exp^{\succ}( x)$ are respectively the formal solutions of the two linear fixed point equations 
\begin{eqnarray}
	X&=&\un + x\prec X			\label{KeyEquation1}\\
	Z&=& \un + Z \succ x	.		\nonumber
\end{eqnarray}
The solution of equation \eqref{KeyEquation1} can also be written in terms of the proper exponential map (\ref{ExpLog}). Further below we will see that \eqref{KeyEquation1} is the key ingredient in our approach to a Hopf algebraic description of the functional relations among (non-)planar Green's functions. This point of view paves the way to new formal results on the combinatorics of Green's functions. 

\begin{lem}\label{inverse-shuffle}
Let $D$ be a shuffle algebra, and $\bar D$ its augmentation by a unit $\un$. For $x \in D$ we have
$$
	\exp^{\succ}(-x) \shuffle \exp^{\prec}(x) = \un.
$$
\end{lem}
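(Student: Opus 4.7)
My plan is to work in the augmentation $\bar D = D \oplus \Kb\un$ while being careful to push the essential computation down into $D$, where the half-shuffles are unambiguously defined. Write $L := \exp^{\succ}(-x) = \un + L'$ and $R := \exp^{\prec}(x) = \un + R'$ with
\[
L' = \sum_{n \geq 1}(-x)^{\succ n} \in D, \qquad R' = \sum_{n \geq 1} x^{\prec n} \in D.
\]
Since $L$ and $R$ satisfy the fixed-point equations recalled just above the lemma, and $x \prec \un = x$, $\un \succ x = x$, one obtains the reduced recursions
\[
R' \;=\; x + x \prec R', \qquad L' \;=\; -\,x - L' \succ x,
\]
each of which lives entirely in $D$. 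Expanding with \eqref{unit-dend} gives
\[
L \shuffle R \;=\; \un + L' + R' + L' \shuffle R',
\]
so the lemma reduces to the identity $L' \shuffle R' = -L' - R'$ in $D$.

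To establish this identity I would compute $L' \succ R'$ using the recursion for $R'$ and axiom \eqref{A2}. Concretely,
\[
L' \succ R' \;=\; L' \succ (x + x \prec R') \;=\; L' \succ x + (L' \succ x) \prec R',
\]
where the second equality is exactly \eqref{A2}. Substituting the recursion for $L'$, which gives $L' \succ x = -L' - x$, and then reusing the recursion for $R'$ in the form $x \prec R' = R' - x$, one obtains after cancellation
\[
L' \succ R' \;=\; -L' - R' - L' \prec R',
\]
which is exactly the desired identity $L' \shuffle R' = -L' - R'$.

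The substantive step is the use of axiom \eqref{A2}: this is the one place where the half-shuffle axioms do real work, converting a term of the form $L' \succ (x \prec R')$ into $(L' \succ x) \prec R'$, which is what allows the two recursions to be fed into each other so that everything collapses. The main pitfall (rather than a real obstacle) is bookkeeping around the unit: both $\un \prec \un$ and $\un \succ \un$ are undefined in the sense of \eqref{unit-dend}, so one cannot apply \eqref{A2} blindly to $L \succ (x \prec R)$. Separating the unit parts of $L$ and $R$ at the outset, as above, ensures that every half-shuffle written thereafter is between elements of $D$ and is therefore legitimate. With that precaution in place the argument is purely axiomatic; no grading or convergence hypothesis is needed, as the whole computation already respects the formal-series nature of $\exp^{\succ}$ and $\exp^{\prec}$.
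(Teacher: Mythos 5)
Your proof is correct. It turns on exactly the same algebraic fact as the paper's argument --- a single application of axiom \eqref{A2} with middle element $x$, which converts a $\prec$-type term into a shifted $\succ$-type term --- but it is organized differently. The paper expands $\exp^{\succ}(-x)\shuffle\exp^{\prec}(x)-\un$ as a double sum over $(n,m)$, applies \eqref{A2} summand by summand via $(x^{\succ n})\prec(x^{\prec m})=(x^{\succ n-1})\succ(x^{\prec m+1})$, and lets the reindexing $(n,m)\mapsto(n-1,m+1)$ cancel the two sums termwise. You instead characterize $L'=\exp^{\succ}(-x)-\un$ and $R'=\exp^{\prec}(x)-\un$ by their fixed-point recursions $L'=-x-L'\succ x$ and $R'=x+x\prec R'$, feed them into each other through one use of \eqref{A2}, and reduce everything to the closed identity $L'\shuffle R'=-L'-R'$. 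Your route avoids the explicit reindexing and handles the unit conventions \eqref{unit-dend} cleanly, since after the first line every half-shuffle is between elements of $D$; the paper's route makes the degree-by-degree cancellation visible, which is what directly justifies the formal-series manipulation in each graded component. Both arguments are valid, and yours is essentially the ``generating-series'' repackaging of the paper's coefficientwise cancellation.
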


\begin{proof}
Indeed, we see that
\begin{eqnarray*}
	\exp^{\succ}(-x) \shuffle \exp^{\prec}(x)- \un
	&=& \sum\limits_{n+m\geq 1}(-1)^n\big\{(x^{\succ n})\prec (x^{\prec m}) 
					+ (x^{\succ n})\succ (x^{\prec m})\big\}\\
	&=& \sum\limits_{n>0,m\geq 0}(-1)^n(x^{\succ n})\prec (x^{\prec m}) 
				+ \sum\limits_{n\geq 0,m> 0}(-1)^n(x^{\succ n})\succ (x^{\prec m}).
\end{eqnarray*}
Now, since $(-1)^n(x^{\succ n})\prec (x^{\prec m})=(-1)^n((x^{\succ n-1})\succ x)\prec (x^{\prec m})=(-1)^n(x^{\succ n-1})\succ (x^{\prec m+1})$, the proof follows.
\end{proof}

Another useful result follows from the computation of the composition inverse of the time-ordered exponential.

\begin{lem}\label{inverse}
Let $D$ be a shuffle algebra, and $\bar D$ its augmentation by a unit $\un$. For $x\in D$ and $X := \un +Y :=\exp^\prec (x)$, then
$$
	x=Y\prec \big(\sum\limits_{n\geq 0}(-1)^nY^{\shuffle n}\big).
$$
\end{lem}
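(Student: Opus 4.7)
The plan is to use the fixed point equation $X = \mathbf{1} + x \prec X$, which expresses that $X = \exp^{\prec}(x)$ solves \eqref{KeyEquation1}. Since $Y = X - \mathbf{1}$, this immediately yields the identity $Y = x \prec X$. The goal is then to ``invert'' this relation and isolate $x$ on one side; to do so I want to right-multiply by $X^{-1}$ in the $\prec$ sense, exploiting the shuffle axiom \eqref{A1}.

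First I would invoke the shuffle axiom $(a \prec b) \prec c = a \prec (b \shuffle c)$, which extends consistently to $\bar D$ via the unit conventions \eqref{unit-dend} (one checks readily that both sides reduce to $a \prec c$, resp.\ $a \prec b$, when $b$ or $c$ equals $\un$). Applying it to $a = x$, $b = X$, $c = X^{\shuffle -1}$ gives
\begin{equation*}
    Y \prec X^{\shuffle -1} = (x \prec X) \prec X^{\shuffle -1} = x \prec (X \shuffle X^{\shuffle -1}) = x \prec \un = x.
\end{equation*}
So the identity reduces to identifying the shuffle inverse $X^{\shuffle -1}$ with the geometric series $\sum_{n \geq 0} (-1)^n Y^{\shuffle n}$.

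This last step is the easy part: since $X = \un + Y$ and the shuffle product $\shuffle$ is associative on $\bar D$ with unit $\un$, the telescoping
\begin{equation*}
    (\un + Y) \shuffle \sum_{n \geq 0} (-1)^n Y^{\shuffle n} = \sum_{n \geq 0}(-1)^n Y^{\shuffle n} - \sum_{n \geq 1}(-1)^{n-1} Y^{\shuffle n} = \un
\end{equation*}
shows that $X^{\shuffle -1} = \sum_{n \geq 0}(-1)^n Y^{\shuffle n}$, where the formal series is meaningful either in the graded/pronilpotent setting or as a formal identity. Substituting back gives the claimed formula.

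The only potential obstacle is making sure the shuffle axiom \eqref{A1} is legitimately applied when $X$ and $X^{\shuffle -1}$ contain the unit $\un$; as noted, this is handled by a direct check against \eqref{unit-dend}, or alternatively by expanding everything into the non-unital part $D$ and using the relation $Y + Z + Y \shuffle Z = 0$ (where $Z := X^{\shuffle -1} - \un$) that follows from $X \shuffle X^{\shuffle -1} = \un$.
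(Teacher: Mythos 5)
Your proof is correct and follows essentially the same route as the paper's: both start from $Y = x \prec X$ (the fixed point equation), identify $X^{\shuffle -1}$ with the geometric series $\sum_{n\geq 0}(-1)^n Y^{\shuffle n}$, and then use axiom \eqref{A1} in the chain $x = x \prec (X \shuffle X^{-1}) = (x \prec X) \prec X^{-1}$. Your extra remark about checking the unit conventions \eqref{unit-dend} when applying \eqref{A1} to unital elements is a sensible point of care that the paper leaves implicit.
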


\begin{proof}
We follow \cite{foipat}. From $X=\un + \sum_{n > 0}x^{\prec n}$, we get $X-\un=Y=x \prec X$. On the other hand, the (formal) inverse of $X$ {\it{for the shuffle product}} is given by $X^{-1}=\frac{1}{1+Y}=\sum_{k\geq 0}(-1)^kY^{\shuffle k}$. We finally obtain 
$$
	x = x\prec \un 
	  = x\prec (X\shuffle X^{ -1})
	  = (x\prec X)\prec X^{-1}
	  =Y\prec \big(\sum\limits_{n\geq 0}(-1)^nY^{\shuffle n}\big).
$$
\end{proof}


\subsection{Towards planar group theory}
\label{ssect:planargroup}

Recall that $\bar{T}^*(T(J)):=Lin(\bar T(T(J)),{\mathbb C})$ and that a linear form $\phi \in \bar{T}^*(T(J))$ is called a character if it is unital, $\phi(\un)=1$, and multiplicative, i.e., for all $a,b \in \bar{T}(T(J))$, $\phi(a|b)=\phi(a)\phi(b).$ A linear form $\kappa \in \bar{T}^*(T(J))$ is called infinitesimal character, if $\kappa(\un)=0$, and if $\kappa(a|b)=0$ for all $a,b\in T(T(J))$. Characters and infinitesimal characters are bijectively related through the exponential map defined with respect to the convolution product. We write $Ch(\kappa)$ for the obvious extension of a linear form on $T(J)$ (e.g.~the restriction to $T(J)$ of an infinitesimal character) to a character, defined by $Ch(\kappa)(\un):=1$, and $Ch(\kappa)(w_1| \cdots|w_k):=\kappa(w_1) \cdots \kappa(w_k)$. Conversely, for an arbitrary $F \in \bar{T}^*(T(J))$, let us write $Res(F)$ for the infinitesimal character, which is defined as the restriction of $F$ to $T(J)$, and the null map on other tensor powers of $T(J)$ in $\bar{T}(T(J))$. 

\smallskip

The linear fixed point equation \eqref{KeyEquation1} is characterised in the next theorem.

\begin{thm}\label{thm:Gg}
There exists another natural bijection between the group of characters and the Lie algebra of infinitesimal characters on $\bar{T}(T(J))$. Indeed, for a character $\phi$ there exists an unique infinitesimal character $\kappa$ such that 
\begin{equation}
\label{dendeq}
	\phi = \varepsilon + \kappa \prec \phi = \exp^{\prec}(\kappa),
\end{equation}
and conversely, for an infinitesimal character $\kappa$
$$
	\phi:=\exp^{\prec}( \kappa)
$$
is a character.
\end{thm}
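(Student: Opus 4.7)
The plan is to prove the theorem by analyzing the fixed-point equation in the graded setting and invoking the unshuffle bialgebra axioms of Theorem~\ref{thm:bialg}. First I would establish the equivalence of the two forms $\phi = \varepsilon + \kappa \prec \phi$ and $\phi = \exp^{\prec}(\kappa)$ for any $\kappa \in \bar T^*(T(J))$ vanishing at $\un$. Existence of the solution $\exp^{\prec}(\kappa)$ is immediate from the recursive definition of $\kappa^{\prec n}$, while uniqueness follows from a degree argument: for $a$ of positive degree, every term $a^1 \otimes a^2$ in $\delta_\prec(a)$ has $|a^1|\ge 1$ (since $1 \in S$ in \eqref{HAleft}), hence $|a^2| < |a|$, so the fixed-point equation determines $\phi(a)$ recursively from the lower-degree values.

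The substantive content is the forward implication: if $\kappa$ is an infinitesimal character then $\phi := \exp^{\prec}(\kappa)$ is multiplicative. By iterating the bar-product it suffices to show $\phi(a|b) = \phi(a)\phi(b)$ for $a,b \in T(J)$ of positive degree, which I would prove by induction on $|a|+|b|$. Applying the fixed-point equation together with axiom \eqref{D1},
$$
  \phi(a|b) \;=\; (\kappa \otimes \phi)\bigl(\delta_\prec(a) \cdot \delta(b)\bigr).
$$
Expanding the right-hand side as a sum of terms $(j_S|j_T) \otimes (j_{J^S_{[n]}}|j_{J^T_{[m]}})$ with $1 \in S \subseteq [n]$ and $T \subseteq [m]$, the infinitesimal-character property of $\kappa$ forces $j_T = \un$, i.e.\ $T = \emptyset$. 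The inductive hypothesis applied to $j_{J^S_{[n]}}|b$ (valid because $|j_{J^S_{[n]}}| < |a|$), combined with the fixed-point equation restricted to $a$, collapses the remaining sum to $\phi(b)\phi(a)$.

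For the converse, given a character $\phi$, Lemma~\ref{inverse} furnishes a unique $\kappa \in \bar T^*(T(J))$ with $\kappa(\un)=0$ satisfying $\phi = \exp^{\prec}(\kappa)$, namely $\kappa = (\phi-\varepsilon) \prec \sum_{n \ge 0}(-1)^n(\phi-\varepsilon)^{\shuffle n}$. To show this $\kappa$ is an infinitesimal character I would induct once more on $|a|+|b|$ and expand $\phi(a|b) = (\kappa \otimes \phi)\bigl(\delta_\prec(a) \cdot \delta(b)\bigr)$ exactly as above. All terms indexed by $(S,T)$ with $(S,T) \neq ([n],[m])$ and $T \neq \emptyset$ vanish by the induction hypothesis on $\kappa$; the top term $(S,T) = ([n],[m])$ contributes precisely $\kappa(a|b)$; and the $T = \emptyset$ terms assemble, via the multiplicativity of $\phi$ and the fixed-point equation on $a$, into $\phi(a)\phi(b)$. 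Matching the identity so obtained against $\phi(a|b) = \phi(a)\phi(b)$ forces $\kappa(a|b) = 0$, as required.

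I expect the main obstacle to lie in the careful bookkeeping of $(\kappa \otimes \phi)\bigl(\delta_\prec(a)\cdot\delta(b)\bigr)$: one must simultaneously track which tensor slots live in $T(J)$ versus $\bar T(T(J))$, how bar-concatenation glues the first factors, and how the defining property of infinitesimal characters (vanishing on bar-products of two positive-degree elements) selects exactly the $T=\emptyset$ contributions while eliminating everything else in lower degree. This is where the full strength of the unshuffle bialgebra axioms of Theorem~\ref{thm:bialg} is used, and where the specifically non-cocommutative (planar) nature of the coproduct \eqref{HA} makes a genuine difference from the simpler cocommutative setting on $\bar T(J)$.
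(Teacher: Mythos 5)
Your proof is correct, and the half showing that $\exp^{\prec}(\kappa)$ is a character when $\kappa$ is an infinitesimal character coincides with the paper's argument: both rest on the compatibility $\delta_\prec(a|b)=\delta_\prec(a)\,\delta(b)$ and on the observation that the vanishing of $\kappa$ on higher tensor powers of $T(J)$ kills every term in which $b$ contributes nontrivially to the first tensor leg, after which induction on the total degree closes the computation. Where you diverge is in the direction ``$\phi$ a character $\Rightarrow\kappa$ infinitesimal'': you prove $\kappa(a|b)=0$ directly by induction on $|a|+|b|$, isolating the top term $(S,T)=([n],[m])$ of $(\kappa\otimes\phi)(\delta_\prec(a)\delta(b))$ and cancelling everything else against $\phi(a)\phi(b)$. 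The paper instead sidesteps this bookkeeping: it sets $\mu:=Res(\kappa)$, the restriction of $\kappa$ to $T(J)$ extended by zero on the other tensor powers --- an infinitesimal character by construction --- verifies via the same selection-of-terms computation and the multiplicativity of $\phi$ that $\mu$ also solves $\phi=\varepsilon+\mu\prec\phi$, and concludes $\kappa=\mu$ from the uniqueness that both of you extract from Lemma~\ref{inverse}. The two routes prove the same statement; the paper's avoids exactly the ``careful bookkeeping'' you flag as the main obstacle, at the price of invoking uniqueness of the solution in all of $\bar{T}^*(T(J))$ rather than only among infinitesimal characters. One small point to tighten in your version: an infinitesimal character must vanish on $a|b$ for all $a,b$ in the augmentation ideal, not only for $a,b\in T(J)$, so your induction in the converse should run over general bar-words $w_1|\cdots|w_n$ with $n\geq 2$; the argument extends verbatim.
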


Let us use in the following the shortcut ``Hopf- or Sweedler-type'' notation for $\delta_\prec(w)=:w^{1,\prec}\otimes w^{2,\prec}$.
\begin{proof}
We know from Lemma~\ref{inverse} that the implicit equation $\phi = \varepsilon + \kappa \prec \phi = \exp^{\prec}(\kappa)$ has a unique solution $\kappa$ in $\bar{T}^*(T(J))$. Let us consider the infinitesimal character $\mu:=Res(\kappa)$, and let us show that $\mu$ also solves $\phi = \varepsilon + \mu \prec \phi$; the first part of the Theorem will follow.

Indeed, for an arbitrary $w \in T(T(J)), w=w_1|\cdots|w_n$, notice first that by definition of the product $\prec$, and due to the vanishing of $\mu$ on any tensor power $T(J)^{\otimes k}$, for $k \not= 1$, we have:
$$
	(\mu \prec \phi)(w) 	= \mu(w_1^{1,\prec})\phi(w_1^{2,\prec}|w_2|\cdots|w_n)
					= \kappa(w_1^{1,\prec})\phi(w_1^{2,\prec}|w_2|\cdots|w_n).
$$
We immediately obtain, since 
$$
	\phi(w_1)	=(\varepsilon + \kappa\prec \phi)(w_1)
			=\kappa(w_1^{1,\prec})\phi(w_1^{2,\prec})
			=\mu(w_1^{1,\prec})\phi(w_1^{2,\prec})
$$ 
that, for any $i>1$
$$
	\phi (w_1|\cdots|w_n)	=\phi(w_1)\phi(w_2|\cdots|w_n)
						=\mu(w_1^{1,\prec})\phi(w_1^{2,\prec}|w_2|\cdots|w_n))
						=(\varepsilon + \mu\prec \phi)(w_1|\cdots|w_n),
$$
from which the property follows. Conversely:
$$
	\exp^{\prec}(\kappa)(w_1|\cdots|w_n)
							=(\varepsilon + \kappa\prec\exp^{\prec}(\kappa))(w_1|\cdots|w_n)
							=\kappa(w_1^{1,\prec})\exp^{\prec}(\kappa)(w_1^{2,\prec}|\cdots|w_n).
$$
Assuming by induction that the property $\exp^{\prec}(\kappa)(w_1'|\cdots|w_k')=\exp^{\prec}(\kappa)(w_1') \cdots\exp^{\prec}(\kappa)(w_k')$ holds for elements $w_1'|\cdots|w_k'\in T(T(J))$ of total degree less than the degree of $w_1|\cdots|w_n$, yields
\begin{eqnarray*}
	\exp^{\prec}(\kappa)(w_1|\cdots|w_n)
				&=&\kappa(w_1^{1,\prec})\exp^{\prec}(\kappa)(w_1^{2,\prec})
								\exp^{\prec}(\kappa)(w_2)\cdots\exp^{\prec}(\kappa)(w_n)\\					
				&=&\exp^{\prec}(\kappa)(w_1)\exp^{\prec}(\kappa)(w_2) \cdots \exp^{\prec}(\kappa)(w_n).
\end{eqnarray*}
\end{proof}

The next result shows that equation \eqref{KeyEquation1} has a solution in terms of the proper exponential map defined with respect to the convolution \eqref{ConvProd1}, which splits as a shuffle product \eqref{ConvProd2} in the sense of \eqref{dendassoc}. We recall that $L_{a \rhd}(b):= a \rhd b = a \succ b - b \prec a$, where the product $a \rhd b$ satisfies the pre-Lie relation (\ref{pLrel}). See \cite{cartier2,manchon1} for details and more results on pre-Lie algebras.

\begin{thm}\cite{EM1,EM2}\label{thm:ExpSolution}
Equation \eqref{dendeq} has the exponential solution  
$$
	\phi = \exp^\star\big(\Omega'(\kappa)\big),
$$ 
where $\Omega'(\kappa)$ is the pre-Lie Magnus expansion, which obeys the following recursive equation
$$
	\Omega'(\kappa) = \frac{L_{\Omega' \rhd}}{\exp(L_{\Omega' \rhd})-1}(\kappa)
            =\sum\limits_{m\ge 0} \frac{B_m}{m!}\ L^{m}_{\Omega' \rhd}(\kappa).
$$
Here, the $B_l$'s are the Bernoulli numbers.
\end{thm}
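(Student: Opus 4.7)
The plan is to parametrize the solution of the fixed-point equation \eqref{dendeq} by a continuous variable $s$ and derive a first-order differential equation in the Lie algebra $g_{\mathbb{C}}$ of infinitesimal characters, then recognize that ODE as the defining recursion for the pre-Lie Magnus expansion $\Omega'$.

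Set $\phi_s := \exp^{\prec}(s \kappa)$ for $s \in [0,1]$. By Theorem~\ref{thm:Gg} each $\phi_s$ is a character of $\bar T(T(J))$ solving the rescaled fixed-point equation $\phi_s = \varepsilon + s\kappa \prec \phi_s$, with $\phi_0 = \varepsilon$ and $\phi_1 = \phi$. Since the convolution exponential $\exp^{\star} : g_{\mathbb{C}} \to G_{\mathbb{C}}$ is a bijection, there is a unique curve $s \mapsto \Omega(s)$ of infinitesimal characters with $\Omega(0) = 0$ and $\phi_s = \exp^{\star}(\Omega(s))$. The target identity $\phi = \exp^{\star}(\Omega'(\kappa))$ therefore reduces to $\Omega(1) = \Omega'(\kappa)$.

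Next I would compute $\dot\phi_s \star \phi_s^{\star -1}$ in two ways. Differentiating the fixed-point equation in $s$ yields $\dot\phi_s = \kappa \prec \phi_s + s\, \kappa \prec \dot\phi_s$; using axiom \eqref{A1} to process the iterated left half-shuffles, one can show that the ``logarithmic derivative'' $\dot\phi_s \star \phi_s^{\star -1}$ collapses to $\kappa$. On the other hand, applying the standard formula for the derivative of the exponential in an associative algebra gives
$$\dot\phi_s \star \phi_s^{\star -1} = \frac{e^{\mathrm{ad}_{\star}\Omega(s)} - 1}{\mathrm{ad}_{\star}\Omega(s)}\bigl(\dot\Omega(s)\bigr),$$
where $\mathrm{ad}_{\star} X(Y) := X \star Y - Y \star X$. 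Equating the two expressions and inverting the operator series using the Bernoulli generating function $\frac{x}{e^x - 1} = \sum_{m \geq 0} \frac{B_m}{m!} x^m$ produces a closed formula for $\dot\Omega(s)$ in terms of nested $\mathrm{ad}_{\star}$-operators applied to $\kappa$.

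The crucial final step is to replace the associative commutator series by the pre-Lie left-action $L_{\Omega(s)\rhd}$. Using the shuffle axioms \eqref{A1}--\eqref{A3} together with the definition $a \rhd b = a \succ b - b \prec a$, one shows, following \cite{EM1,EM2}, that after the Bernoulli-weighted resummation the series $\sum_{m \geq 0} \frac{B_m}{m!}(\mathrm{ad}_{\star}\Omega(s))^m(\kappa)$ can be rewritten as $\sum_{m \geq 0} \frac{B_m}{m!} L^m_{\Omega(s)\rhd}(\kappa)$. This yields the ODE
$$\dot\Omega(s) = \sum_{m \geq 0} \frac{B_m}{m!} L^m_{\Omega(s) \rhd}(\kappa),$$
and specializing at $s = 1$ identifies $\Omega(1)$ with $\Omega'(\kappa)$ by its defining recursion. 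The main obstacle is precisely this passage from $\mathrm{ad}_{\star}$ to $L_{\rhd}$: while the elementary identity $\mathrm{ad}_{\star} X(Y) = X \rhd Y - Y \rhd X$ for $X,Y \in g_{\mathbb{C}}$ is immediate from the definitions of $\star$ and $\rhd$, the nested version underlying the Bernoulli resummation is nontrivial and requires the Guin--Oudom / pre-Lie enveloping-algebra machinery developed in \cite{EM1,EM2}. Once this identification is granted, the remainder of the argument is a formal power series manipulation together with Theorem~\ref{thm:Gg}.
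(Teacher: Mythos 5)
The paper does not prove this theorem itself --- it is imported from \cite{EM1,EM2} with an explicit pointer to those references --- so your argument has to stand on its own, and it does not: two of its pivotal identities are false. First, $\dot\phi_s\star\phi_s^{\star-1}$ does \emph{not} collapse to $\kappa$. Axiom \eqref{A1} controls $(a\prec b)\prec c$, not $(a\prec b)\star c$; one has $(\kappa\prec\phi_s)\star\phi_s^{\star-1}=\kappa\prec(\phi_s\star\phi_s^{\star-1})+(\kappa\prec\phi_s)\succ\phi_s^{\star-1}=\kappa+(\kappa\prec\phi_s)\succ\phi_s^{\star-1}$, and the second term survives. Concretely, $\phi_s=\varepsilon+s\kappa+s^2\,\kappa\prec\kappa+O(s^3)$ and $\phi_s^{\star-1}=\varepsilon-s\kappa+s^2\,\kappa\succ\kappa+O(s^3)$ give $\dot\phi_s\star\phi_s^{\star-1}=\kappa-s\,(\kappa\rhd\kappa)+O(s^2)$. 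Second, the proposed rewriting of $\sum_m\frac{B_m}{m!}(\mathrm{ad}_{\star}\Omega)^m(\kappa)$ as $\sum_m\frac{B_m}{m!}L^m_{\Omega\rhd}(\kappa)$ is not an identity: already at $m=1$ one has $\Omega\star\kappa-\kappa\star\Omega=\Omega\rhd\kappa-\kappa\rhd\Omega\neq\Omega\rhd\kappa$, and no resummation miracle closes this gap. Third, even if you were granted the ODE $\dot\Omega(s)=\sum_m\frac{B_m}{m!}L^m_{\Omega(s)\rhd}(\kappa)$, it proves the wrong statement: integrating from $\Omega(0)=0$ yields $\Omega(1)=\kappa-\frac14\,\kappa\rhd\kappa+\cdots$, whereas the theorem's recursion gives $\Omega'(\kappa)=\kappa-\frac12\,\kappa\rhd\kappa+\cdots$, which is indeed the degree-two term of $\log^{\star}\bigl(\exp^{\prec}(\kappa)\bigr)$, as one checks from $\kappa\prec\kappa-\frac12\kappa\star\kappa=-\frac12\kappa\rhd\kappa$. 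The recursion in the statement is a fixed-point equation in the grading, not the time-one value of a flow, so the classical Magnus-ODE template you are imitating is structurally mismatched with the target.

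The actual route in \cite{EM1,EM2} avoids the auxiliary parameter entirely. Its key lemma is the conjugation identity $e^{L_{x\rhd}}=L_{\exp^{\star}(x)\succ}\circ R_{\prec\exp^{\star}(-x)}$ (interpreted with the unit conventions \eqref{unit-dend}), which follows from $L_{x\rhd}=L_{x\succ}-R_{\prec x}$, the commutation of $L_{a\succ}$ with $R_{\prec b}$ coming from \eqref{A2}, and the composition rules $L_{a\succ}L_{b\succ}=L_{(a\star b)\succ}$ and $R_{\prec a}R_{\prec b}=R_{\prec (b\star a)}$ recorded after Proposition \ref{shufflealgTTJ}. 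Writing the recursion in the inverted form $\kappa=\frac{e^{L_{\Omega'\rhd}}-1}{L_{\Omega'\rhd}}(\Omega')$ and substituting the conjugation identity, one verifies directly that $\exp^{\star}\bigl(\Omega'(\kappa)\bigr)$ satisfies $\phi=\varepsilon+\kappa\prec\phi$; uniqueness of the solution of \eqref{dendeq} (Theorem \ref{thm:Gg}, or Lemma \ref{inverse}) then concludes. This is where the pre-Lie product genuinely enters --- through the operator factorization, not through a term-by-term trade of $\mathrm{ad}_{\star}$ for $L_{\rhd}$.
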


For a proof of this theorem we refer the reader to \cite{EM1,EM2}. Note that for a commutative shuffle algebra the pre-Lie product is the null product, and $\Omega'(\kappa)$ reduces to the identity map. In this case the solution to \eqref{dendeq} is given by $\phi =  \exp^\star(\kappa)$. Let us mention that the pre-Lie Magnus expansion $\Omega'(\kappa)$ can also be understood from the point of view of enveloping algebras of pre-Lie algebras \cite{chappat}.


\section{From full to connected and noncrossing planar Green's functions}
\label{sect:FullConNoncross}


\subsection{From full to connected Green's functions}
\label{ssect:FullCon}

Recall the fixed point equation relating the generating functionals $\mathrm{Z}[j]$ and $\mathrm{W}[j]$ in the planar context
\begin{equation}
\label{ftoc}
	\mathrm{Z}[j]=1+\mathrm{W}[j\mathrm{Z}[j]].
\end{equation}
When expanded to compute the planar $n$-point function $\mathrm{Z}^{(n)}_{j_{i_1}\cdots j_{i_n}}$, the equation reads
\begin{equation}
\label{ftocExpanded}
	\mathrm{Z}^{(n)}_{j_{i_1}\ldots j_{i_n}}=\sum_{A=\{1=a_1,\ldots,a_k\}\subset [n]}
	\mathrm{W}^{(k)}_{j_{i_{a_1}}j_{i_{a_2}}\cdots j_{i_{a_k}}}
		\mathrm{Z}^{(a_2-a_1-1)}_{j_{i_{a_1+1}} \cdots j_{i_{a_2-1}}}
		\cdots  \mathrm{Z}^{(n-a_k)}_{j_{i_{a_k+1}}\cdots j_{i_{n}}}.
\end{equation}

We use now the half-(un)shuffle machinery to rewrite, in a rather natural way, this convoluted relation. Recall that $\bar{T}^*(T(J))$ is a unital shuffle algebra for the left and right half-shuffle products $\prec ,\succ$ defined in terms of the unshuffle half-coproducts \eqref{HAleft} respectively \eqref{HAright}. 

We define the linear map $\tau_{\mathrm{W}} \in \bar{T}^*(T(J))$ associated to planar connected $n$-point functions, i.e., $\tau_{\mathrm{W}}(w):=\mathrm{W}^{(|w|)}_w$ for any word $w=j_{i_1}\cdots j_{i_n} \in T_n(J)$,  $\tau_{\mathrm{W}}(\un)=0$, and $\tau_{\mathrm{W}}$ is zero on the components $T_n(T(J))$ for $n\geq 2$. The latter requirement, in particular, is natural in the light of desired connectedness. Hence, the map $\tau_{\mathrm{W}}$ defines an infinitesimal character with respect to the Hopf algebra $\bar T(T(J))$. On the other hand, recall that the planar full Green's function $\tau_{\mathrm{Z}}(w):=\mathrm{Z}^{(|w|)}_w$ is supposed to be multiplicative. Then we find that, by the very definition of the left half-shuffle product $\prec$, the relation between planar full and connected Green's functions (\ref{ftoc}) is encoded equivalently by the linear fixed point equation
\begin{equation}
\label{pFullConGF}
	\tau_{\mathrm{Z}} = \varepsilon + \tau_{\mathrm{W}} \prec \tau_{\mathrm{Z}}.
\end{equation}
This claim follows immediately from unfolding \eqref{pFullConGF} when evaluated on the word $w=j_{i_1}\cdots j_{i_n} \in T_n(J)$. Indeed, the very definition of the left half-unshuffle \eqref{HAleft} on $\bar{T}(T(J))$, together with the multiplicativity of $\tau_{\mathrm{Z}}$ imply 
\begin{align}
	\mathrm{Z}^{(n)}_{j_{i_1}\ldots j_{i_n}}	=\tau_{\mathrm{Z}}(w)
									&=(\tau_{\mathrm{W}} \prec \tau_{\mathrm{Z}}) (w) \nonumber\\
									&=\sum_{S=\{1=s_1,\ldots,s_k\}\subset [n]}
	\mathrm{W}^{(k)}_{j_{i_{s_1}}j_{i_{s_2}}\cdots j_{i_{s_k}}}
		\mathrm{Z}^{(s_2-s_1-1)}_{j_{i_{s_1+1}} \cdots j_{i_{s_2-1}}}
		\cdots  \mathrm{Z}^{(n-s_k)}_{j_{i_{s_k+1}}\cdots j_{i_{n}}}. \label{unshuffleRelation1}						
\end{align}
This exemplifies the naturalness of the (half-)unshuffle structure in the context of the relation between planar full and connected Green's functions. 

\begin{rmk}{\rm{
\begin{enumerate}

\item 
The companion equation $\mathrm{Z}[j]=1+\mathrm{W}[\mathrm{Z}[j]j]$ is encoded in terms of the second left half-unshuffle defined in  Remark \ref{n-unshuffles}, equation \eqref{HAleft-end}.

\item Theorem \ref{thm:ExpSolution} implies that the planar full Green's function can be written as a proper exponential in terms of the planar connected Green's function
$$
	\tau_{\mathrm{Z}} = \exp^\star\big(\Omega'(\tau_{\mathrm{W}})\big). 
$$ 
Recall that the associative shuffle product $a \star b = a \prec b + a \succ b$ is non-commutative.  

\item The last item should be seen in the context of the non-planar setting, in which the full and connected Green's functions are related through the exponential map. We refer the reader to  subsection \ref{non-planar-case}.
\end{enumerate}
}}
\end{rmk}

Let us illustrate equation \eqref{unshuffleRelation1} by expanding its solution in terms of the ordered exponential of the infinitesimal character of planar connected Green's functions, $\tau_{\mathrm{Z}} = \exp^{\prec}(\tau_{\mathrm{W}} )$, up to order four (compare with the identities obtained at the beginning of the article using relation (\ref{ftoc})). At order one we have 
\begin{align*}
	\tau_{\mathrm{Z}}(j_1) &=  (\varepsilon + \tau_{\mathrm{W}} + \tau_{\mathrm{W}} \prec  \tau_{\mathrm{W}} 
	+  \tau_{\mathrm{W}} \prec (\tau_{\mathrm{W}} \prec  \tau_{\mathrm{W}} ) + \cdots )(j_1)\\
	&= \tau_{\mathrm{W}}(j_1),
\end{align*}
since $\varepsilon(j_1)=0$ and powers of left half-shuffle products beyond order one applied to the letter $j_1$ are zero as well due to $\tau_{\mathrm{W}}(\un)=0$. For the word $w=j_1j_2$ the left half-unshuffle yields $\delta_{\prec}(j_1j_2)=j_1j_2 \otimes \un + j_1 \otimes j_2$, such that
\begin{align*}
	\tau_{\mathrm{Z}}(j_1j_2) &= \tau_{\mathrm{W}}(j_1j_2) +  \tau_{\mathrm{W}}(j_1) \tau_{\mathrm{W}}(j_2).
\end{align*}
For the order three word $w=j_1j_2j_3$ the left half-unshuffle yields
\begin{align*}
	 \tau_{\mathrm{Z}}(j_1j_2j_3) 	&=  \tau_{\mathrm{W}}(j_1j_2j_3) 
								+ \tau_{\mathrm{W}}(j_1)\tau_{\mathrm{W}}(j_2j_3)
								+ \tau_{\mathrm{W}}(j_1j_3)\tau_{\mathrm{W}}(j_2)  \\
							&\quad\quad
								+ \tau_{\mathrm{W}}(j_1j_2)\tau_{\mathrm{W}}(j_3) 
								+ \tau_{\mathrm{W}}(j_1)\tau_{\mathrm{W}}(j_2) \tau_{\mathrm{W}}(j_3).
\end{align*}
And for the order four word $w=j_1j_2j_3j_4$, the left half-unshuffle results in the lengthy expansion
\begin{align}
	\tau_{\mathrm{Z}}(j_1j_2j_3j_4) &=  \tau_{\mathrm{W}}( j_{1} j_{2} j_{3}j_{4}) 
					+ \tau_{\mathrm{W}}(j_{1} j_{2}) \tau_{\mathrm{Z}}(j_{3}j_{4})
					+ \tau_{\mathrm{W}}(j_{1} j_{3}) \tau_{\mathrm{Z}}(j_{2})\tau_{\mathrm{Z}}(j_{4})
					+ \tau_{\mathrm{W}}(j_{1} j_{4}) \tau_{\mathrm{Z}}(j_{2}j_{3}) 				\label{recOrd4}\\
				   &\quad + \tau_{\mathrm{W}}(j_1)\tau_{\mathrm{Z}}(j_{2}j_{3}j_4)
				   	+ \tau_{\mathrm{W}}(j_{1} j_{2}j_3) \tau_{\mathrm{Z}}(j_{4})
					+ \tau_{\mathrm{W}}(j_{1} j_{2}j_4) \tau_{\mathrm{Z}}(j_{3})
					+ \tau_{\mathrm{W}}(j_{1} j_{3}j_4) \tau_{\mathrm{Z}}(j_{2})					\nonumber\\
				   &= \tau_{\mathrm{W}}( j_{1} j_{2} j_{3}j_{4}) 
					+ \tau_{\mathrm{W}}(j_{1} j_{2}) \tau_{\mathrm{W}}(j_{3}j_{4}) 
					+ \tau_{\mathrm{W}}(j_{1} j_{4}) \tau_{\mathrm{W}}(j_{2}j_{3}) 
					+ \tau_{\mathrm{W}}(j_{1} j_{3}) \tau_{\mathrm{W}}(j_{2})\tau_{\mathrm{W}}(j_{4}) 	\nonumber\\ 
				   &\quad   
				   	+ \tau_{\mathrm{W}}(j_{1} j_{2}) \tau_{\mathrm{W}}(j_{3})\tau_{\mathrm{W}}(j_{4}) 
					+ \tau_{\mathrm{W}}(j_{1} j_{4}) \tau_{\mathrm{W}}(j_{2})\tau_{\mathrm{W}}(j_{3})  
					+ \tau_{\mathrm{W}}(j_3j_4)\tau_{\mathrm{W}}(j_1)\tau_{\mathrm{W}}(j_2)		\nonumber\\
				   &\quad
						+ \tau_{\mathrm{W}}(j_2j_3)\tau_{\mathrm{W}}(j_1)\tau_{\mathrm{W}}(j_4)  
						+\tau_{\mathrm{W}}(j_2j_4) \tau_{\mathrm{W}}(j_1)\tau_{\mathrm{W}}(j_3) 
						+ \tau_{\mathrm{W}}(j_1)\tau_{\mathrm{W}}(j_2)\tau_{\mathrm{W}}(j_3) \tau_{\mathrm{W}}(j_4) \nonumber\\
				  &\quad	
				  	+ \tau_{\mathrm{W}}(j_1)\tau_{\mathrm{W}}(j_{2}j_{3}j_4)
					+ \tau_{\mathrm{W}}(j_{1} j_{2}j_3) \tau_{\mathrm{W}}(j_{4})
					+ \tau_{\mathrm{W}}(j_{1} j_{2}j_4) \tau_{\mathrm{W}}(j_{3})
					+ \tau_{\mathrm{W}}(j_{1} j_{3}j_4) \tau_{\mathrm{W}}(j_{2}). \nonumber
\end{align}
Recall the definitions of $\tau_{\mathrm{W}}$ and $\tau_{\mathrm{Z}}$, and compare lines \eqref{recOrd4} and \eqref{planar1}. Note that multiplicativity of $ \tau_{\mathrm{Z}}$ enters at order four in the term $ \tau_{\mathrm{W}}(j_{1} j_{3}) \tau_{\mathrm{Z}}(j_{2})\tau_{\mathrm{Z}}(j_{4})$ corresponding to the tensor product $j_1j_3 \otimes j_2 | j_4 \in T(J) \otimes T_2(T(J))$.


\subsection{A bialgebra of non-crossing partitions}
\label{subsect:ncpart}

The following sections aim at developing further the  combinatorics underlying planar theories through the notion of non-crossing partitions. Recall that a partition $L$ of a (finite) set $[n]:=\{1,\ldots,n\}$ consists of a collection of (non-empty) subsets $L=\{L_1,\ldots,L_b\}$ of $[n]$, called blocks, which are mutually disjoint, i.e., $L_i \cap L_j=\emptyset$ for all $i\neq j$, and whose union $\cup_{i=1}^b L_i=[n]$ \cite{beissinger,speed}. By $|L|:=b$ the number of blocks of the partition $L$ is denoted, and $|L_i|$ is the number of elements in the $i$th block $L_i$. Given $p,q \in [n]$ we will write that $p \sim_{L} q$ if and only if they belong to the same block. The lattice of set partitions of $[n]$ is denoted by $P_n$. It has a partial order of refinement: $L \leq K$ if $L$ is a finer partition than $K$. The partition $\hat{1}_n = \{L_1\}$ consists of a single block, i.e., $|L_1|=n$, and is the maximum element in $P_n$. The partition $\hat{0}_n=\{L_1,\ldots,L_n\}$ has $n$ singleton blocks, and is the minimum partition in $P_n$. A set partition $L=\{L_1,\ldots,L_k\}$ of $[n]$ ($L_1\coprod \cdots \coprod L_k=[n]$) is called non-crossing if for $p_1,p_2,q_1,q_2 \in [n]$ the following property does not occur
$$
	1\leq p_1<q_1<p_2<q_2\leq n
$$
and 
$$
	p_1\sim_{L}p_2\nsim_{L}q_1\sim_{L}q_2.
$$

The set of non-crossing partitions of $[n]$ will be denoted by $NC_n$, we also set $NC:=\cup_{n\in\Nb}NC_n$. The reader is referred to the standard reference \cite{nicaspeicher} for more details. See also \cite{beissinger,novaksniady}. The common pictorial representation of (non-crossing) partitions is envoked. For example, 
$$
		{\scalebox{0.7}{	\nci\ \,} \atop 1}
		\qquad\ 
		{\scalebox{0.7}{	\ncii\ \,} \atop {1\hspace{0.4cm}\; 2} }
		\qquad\ 
		{\scalebox{0.7}{	\nci \quad \nci \quad \nci\ \,} \atop  1\;\; 2\;\; 3 }
$$
the first represents the singleton $\hat{0}_1=\hat{1}_1=\{1\}$ in $P_1$. The second is the single block partition, i.e., the maximal element $\hat{1}_2=\{1,2\} \in P_2$. Then follows the minimal element in $P_3$, i.e., the partition of the set $[3]$ into singletons, $\hat{0}_3=\{\{1\},\{2\},\{3\}\}$. The partition $\{\{1,3\},\{2, 4\}\}$ is represented 
$$
	\scalebox{0.7}{ \nciiib }
$$ 
and is not a non-crossing partition, whereas  $\{\{1,9\},\{2, 6,8\},\{3,5\},\{4\},\{7\}\}$ and $\{\{1,3,7\},\{2\},$ $\{4,5,6\}\}$ are proper partitions without crossings.

Non-crossing partitions of arbitrary subsets of the integers are defined similarly. For example, $\{\{1,6,10\},\{2\}, \{7,9\}\}$ is a non-crossing partition of $\{1,2,6,7,9,10\}$. We will use implicitly various elementary properties of non-crossing partitions \cite{nicaspeicher}. In particular, we will use the fact that, if $L$ is a non-crossing partition of $[n]$, then its restriction to an arbitrary subset $S$ of $[n]$ (by intersecting the blocks of $L$ with $S$) defines a non-crossing partition of $S$.

In planar quantum field theories the planarity constraint translates into the property that the various propagators joining external sources in the diagrammatic expansion of Green's functions never cross. In particular, given a Feynman graph $\gamma$ in the expansion of a complete $n$-point Green's function $\mathrm{Z}^{(n)}_{j_1,\ldots,j_n}$ with $k$ connected components, the partition of the external sources $j_1,\ldots,j_n$ according to their common belonging  to one connected component of the graph induces a non-crossing partition $L(\gamma)=(L_1,\ldots,L_k)$ of $[n]$. For example, the following Feynman diagram in planar $\Phi^4$ theory 
$$
	\FeynGraphPart
$$
is associated to the non-crossing partition
$$
	\scalebox{0.7}{\nciiic}\scalebox{0.7}{\ncii}
$$

Let us write $\gamma^{(n)}_{j_1 \cdots j_n}$ for the amplitude associated to a given Feynman diagram $\gamma$, normalized by the proper symmetry factor, so that the sum of all amplitudes when $\gamma$ runs over all diagrams with $n$ external legs gives the full Green's function $\mathrm{Z}^{(n)}_{j_1 \cdots j_n}$ (resp.~the connected Green's function $\mathrm{W}^{(n)}_{j_1 \cdots j_n}$ when the sum runs over connected diagrams).

Summing up all these amplitudes $\gamma^{(n)}_{j_1 \cdots j_n}$ of the Feynman diagrams associated to a given non-crossing partition $L=\{L_1,\ldots,L_k\}$ of $[n]$ defines a new Green's function 
$$
	\mathrm{L}^{(n)}_{j_1 \cdots j_n}:=\sum\limits_{L(\gamma)=L}\gamma^{(n)}_{j_1 \cdots j_n}
$$ 
such that, setting for $S=\{s_1,\ldots,s_p\} \subset [n]$, $\mathrm{W}^S:=\mathrm{W}^{(p)}_{j_{s_1} \cdots j_{s_p}}$, one gets 
$$
	\mathrm{L}^{(n)}_{j_1 \cdots j_n} = \prod_{i=1}^k \mathrm{W}^{L_i}.
$$
One step further, full Green's functions split according to non-crossing partitions:
$$
	\mathrm{Z}^{(n)}_{j_1 \cdots j_n}=\sum_{L\in NC_n}\mathrm{L}^{(n)}_{j_1 \cdots j_n}.
$$ 
This is the phenomenon this section aims at investigating from a combinatorial point of view.

Let $L=\{L_1,\ldots ,L_k\}$ be an arbitrary non-crossing partition of $[n]:=\{1,\ldots ,n\}$ with $\inf(L_i)<\inf(L_{i+1})$ for $i=1,\ldots ,n-1$. Let us write $L_i < L_j$ if $\forall a \in L_i$ and $\forall b \in L_j$ we have $a<b$. We define a partial order $<_{L}$ on the blocks $L_i$ as follows: $L_i<_{L} L_j$ if and only if, for all $m \in L_i$, $\inf (L_j) < m < \sup(L_j)$. The very definition of non-crossing partitions shows that this partial order is well-defined. Moreover, given two distinct blocks $L_i, L_j\in L$, then one and only one of the following inequalities holds
$$
	L_i < L_j,\quad L_j < L_i,\quad L_i<_L L_j,\quad L_j<_L L_i.
$$

As an example we consider the particular non-crossing partition $L \in P_{10}$ with five blocks $L=\{L_1,L_2,L_3,L_4,L_5\} =\{ \{1,3,8\},\{2\},\{4,6,7\},\{5\},\{9,10\} \}$
$$
	\scalebox{0.7}{\nciiiiiia}
$$
The block $L_5 > L_i$, $i=1,2,3,4$, and $L_2 <_L L_1$, $L_4 <_L L_3 <_LL_1$.

A partition of the blocks of $L$ into two (possibly empty) subsets
$$
	L = Q \coprod T = \{Q_1,\ldots ,Q_i\} \coprod \{T_1,\ldots ,T_{k-i}\}
$$ 
will be said {\it{admissible}} if and only if for all $p \leq i$, $q \leq k-i$, $Q_p \not<_L T_q$, that is, $T_q <_L Q_p $ or the two subsets of $[n]$ are incomparable for the partial order. We write then $L=Q {\coprod \atop {\rm{adm}}} T$. Admissible partitions of non-crossing partitions of arbitrary finite subsets $S$ of the integers are defined accordingly. Returning to the above example, we have (note that the list is not exhaustive)
$$
	L= \{L_1,L_2,L_3,L_4\} {\coprod \atop {\rm{\smop{adm}}}} \{L_5 \}
	  = \{L_1,L_2,L_5\} {\coprod \atop {\mathrm{\smop{adm}}}} \{L_3,L_4\}
	  = \{L_1,L_5\} {\coprod \atop {\mathrm{\smop{adm}}}} \{L_2, L_3,L_4\}.
$$

%

Given two (canonically ordered) subsets $S \subseteq U$ of the set of integers $\bf N$, recall that a connected component of $S$ relative to $U$ is a maximal sequence $s_1, \ldots , s_n$ in $S$, such that there are no $ 1 \leq i < n$ and $t\in U$, such that $s_i < t <s_{i+1}$. In particular, a connected component of $S$ in $\bf N$ is simply a maximal sequence of successive elements $s,s+1,\ldots ,s+n$ in $S$.

For an admissible partition $L=Q {\coprod \atop {\rm{adm}}} U$ as above, we consider the connected components $J_1,\ldots , J_{k(L ,Q)}$ of $[n]-(Q_1\cup \cdots \cup Q_i)$, that we will call slightly abusively from now on the {\it{connected components}} of $[n]-Q$ . The definition of $<_L$ implies that $J_i\cap U_j$ is empty or equals $U_j$. We write $J_i^{L,Q}$ for the set of all non-empty intersections $J_i \cap U_j,\ j=1, \ldots , k-i$ and notice that, since $L$ is a non-crossing partition of $[n]$, $J_i^{L,Q}$ is, by restriction, a non-crossing partition of the component $J_i$. For the same reason, $Q$ is a non-crossing partition of $Q_1\cup \cdots \cup Q_i$.

Let us recall now that, given a finite subset $S$ of cardinality $n$ of the integers, the standardization map $st$ is the (necessarily unique) increasing bijection between $S$ and $[n]$. By extension, we write also $st$ for the induced map on the various objects associated to $[n]$ (such as partitions). For example, the standardization of the non-crossing partition $L:=\{\{3,6,10\},\{4,5\},\{8\}\}$ of the set $\{3,4,5,6,8,10\}$ is the non-crossing partition $st(L):=\{\{1,4,6\},\{2,3\},\{5\}\}$ of $[6]=st(\{3,4,5,6,8,10\})$.

The linear span $\mathcal{NC}$ of all non-crossing partitions can then be equipped with a coproduct map $\Delta$ from $\mathcal{NC}$ to $\mathcal{NC}\otimes T(\mathcal{NC})$ defined by (using our previous notations as well as the bar-$|$ notation for elements in $T(\mathcal{NC})$)
$$
	\Delta(L)=\sum\limits_{Q\!\coprod\limits_{\rm{adm}}\!U = L} 
	st(Q) \otimes \big(st(J_1^{L,Q})| \cdots | st(J_{k(L, Q)}^{L,Q})\big).
$$
A few examples are in order at this stage. 
$$
	\Delta(\{\{1,4\},\{2, 3\}\}) = \{\{1,4\},\{2, 3\}\} \otimes \un + \un \otimes \{\{1,4\},\{2, 3\}\} 
						+ \{1,2\} \otimes \{1,2\}
$$
\begin{eqnarray*}
	\Delta(\{\{1,5\},\{2\}, \{3,4\}\}) &=&\{\{1,5\},\{2\}, \{3,4\}\} \otimes \un + \un \otimes \{\{1,5\},\{2\}, \{3,4\}\}\\
						& &
						+  \{\{1,3\},\{2\}\} \otimes \{1,2\} 
						+ \{\{1,4\},\{2, 3\}\} \otimes \{1\}\\
						& &
						+ \{1,2\} \otimes \{\{1\}, \{2,3\}\} 
\end{eqnarray*}
\begin{eqnarray*}
	\Delta(\{\{1,2\}, \{3\},\{4\}\}) &=& \{\{1,2\}, \{3\},\{4\}\} \otimes \un + \un \otimes \{\{1,2\}, \{3\},\{4\}\}
						+ \{1,2\} \otimes \{\{1\}\{2\}\}\\
					       & & + 2 \{\{1,2\},\{3\}\} \otimes \{1\} 
					       		+ \{1\}  \otimes \{\{1,2\},\{3\}\}
							+ \{1\}  \otimes \{1,2\} | \{1\}\\
						& & 	 \{\{1\},\{2\}\}  \otimes \{1,2\}
\end{eqnarray*}
The graphical notation of the coproduct simplifies the calculus since it is automatically standardized (note that the bar of the bar notation is written in bold to distinguish it from the single element partition)
$$
	\Delta(\scalebox{0.5}{\nci}\ ) = \scalebox{0.5}{\nci} \otimes \un + \un \otimes \scalebox{0.5}{\nci} 
	\qquad\
	\Delta(\scalebox{0.5}{\ncii}\ ) = \scalebox{0.5}{\ncii} \otimes \un + \un \otimes \scalebox{0.5}{\ncii}
$$
$$
	\Delta(\scalebox{0.5}{\nciiia}\ ) = \scalebox{0.5}{\nciiia} \otimes \un + \un \otimes \scalebox{0.5}{\nciiia} 
							\ +\  \scalebox{0.5}{\ncii} \otimes \scalebox{0.5}{\nci}
$$
$$
	\Delta(\scalebox{0.5}{\nciiic}\ ) =  \scalebox{0.5}{\nciiic} \otimes \un + \un \otimes \scalebox{0.5}{\nciiic} 
							\ +\  \scalebox{0.5}{\ncii} \otimes \scalebox{0.5}{\ncii}
$$
\begin{eqnarray*}
	\Delta(\scalebox{0.5}{\nciiiiia}\ ) &=& \scalebox{0.5}{\nciiiiia} \otimes \un + \un \otimes \scalebox{0.5}{\nciiiiia}
								\ + \ \scalebox{0.5}{\nciiia} \otimes \scalebox{0.5}{\ncii} 
								\ + \ \scalebox{0.5}{\nciiic} \otimes \scalebox{0.5}{\nci}
								\ +\ \scalebox{0.5}{\ncii} \otimes \scalebox{0.5}{\nci \ncii}
\end{eqnarray*}
\begin{eqnarray*}
	\Delta(\scalebox{0.5}{\ncii \nci \nci}\ ) &=& \scalebox{0.5}{\ncii \nci \nci} \otimes \un 
									+ \un \otimes \scalebox{0.5}{\ncii \nci \nci}
								\ + \ \scalebox{0.5}{\ncii} \otimes \scalebox{0.5}{\nci \nci} 
								\ + \ \scalebox{0.5}{\nci} \otimes \scalebox{0.5}{\ncii \nci}
								\ + \ \scalebox{0.5}{\nci} \otimes \scalebox{0.5}{\ncii} \;\; \baarr\ \scalebox{0.5}{\nci}
								\ +\ 2 \scalebox{0.5}{\ncii \nci} \otimes \scalebox{0.5}{\nci}
								\ +\  \scalebox{0.5}{\nci \nci} \otimes \scalebox{0.5}{\ncii}
\end{eqnarray*}

The map $\Delta$ is then extended multiplicatively to a coproduct on $\bar T(\mathcal{NC})$ 
$$
	\Delta (L_1|\cdots |L_n):=\Delta(L_1)\cdots \Delta(L_n), \quad \Delta(\un) = \un \otimes \un.
$$
Here, $\bar{T}(\mathcal{NC})$ is equipped with the structure of a free associative algebra over $\mathcal{NC}$ by the concatenation map, $(L_1|\cdots |L_k)\cdot (L_{k+1}|\cdots |L_n):=(L_{1}|\cdots |L_k | L_{k+1}| \cdots |L_n)$.

\begin{thm} 
\label{thm:HANC}\cite{EP2}
The graded algebra $\bar T(\mathcal{NC})$ equipped with the coproduct $\Delta$ is a connected graded non-commutative and non-cocommutative Hopf algebra. 
\end{thm}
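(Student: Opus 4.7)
The plan is to proceed by analogy with the proof of Theorem \ref{thm:HA} for $\bar T(T(J))$, with non-crossing partitions playing the role of words and admissible splittings of the block set playing the role of subsets $S \subseteq [n]$. Since $\Delta$ is extended multiplicatively and $\bar T(\mathcal{NC})$ is free associative for concatenation, the compatibility $\Delta(x \cdot y) = \Delta(x)\Delta(y)$ holds by construction, so the only non-trivial axiom to verify is coassociativity of $\Delta$ on generators $L \in \mathcal{NC}$.

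First I would set up the grading: a partition $L \in NC_n$ has degree $n$ (the cardinality of its underlying set), extended additively to bar-tensors $L_1|\cdots|L_k$. Concatenation clearly respects this grading, and the coproduct does too because for every admissible decomposition $L = Q \coprod_{\rm adm} U$ one has $|Q_1 \cup \cdots \cup Q_i| + \sum_\ell |J_\ell^{L,Q}| = n$. The degree-zero component is $\mathbb{K}\un$, hence $\bar T(\mathcal{NC})$ is connected graded; once coassociativity is verified, the antipode exists by the standard convolution-inverse recursion and the Hopf algebra assertion follows automatically. Non-commutativity of concatenation is immediate, and non-cocommutativity is already visible in the worked examples preceding the theorem, for instance in $\Delta(\scalebox{0.5}{\nciiiiia}\;)$, which is manifestly not symmetric under the tensor flip.

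For coassociativity, my strategy is to check that both $(\Delta \otimes \id)\circ \Delta(L)$ and $(\id \otimes \Delta)\circ \Delta(L)$ enumerate the same set of \emph{doubly admissible decompositions} of $L$ into three parts: an ordered partition of the block set, $L = Q^1 \sqcup Q^2 \sqcup U$, such that $Q^1$ is admissible in $L$ (i.e.\ $L = Q^1 \coprod_{\rm adm} (Q^2 \cup U)$), and such that within each connected component of $[n] - (Q^1_1 \cup \cdots)$ the induced split of the restricted non-crossing partition into $Q^2$-blocks and $U$-blocks is again admissible. On the left side, $\Delta \otimes \id$ re-splits the standardized first tensor factor $st(Q^1 \cup Q^2)$ into an admissible pair $(Q^1, Q^2)$; on the right side, $\id \otimes \Delta$ re-splits each standardized connected component $st(J_\ell^{L, Q^1})$. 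Both constructions produce, after de-standardization, the same triple $(Q^1, Q^2, U)$, with the same bar-tensor arrangement of the terminal connected components, yielding the desired equality termwise.

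The main obstacle is the careful bookkeeping of standardizations and restrictions: one needs a small technical lemma stating that the partial order $<_L$ (and hence admissibility) restricts correctly to each connected component of $[n] - (Q^1_1 \cup \cdots)$ and commutes with standardization of that component. This rests on the defining property of non-crossing partitions, which guarantees that each block of $L$ is either contained in $Q^1$ or lies entirely inside a single connected component of its complement, so no block is split by the operation of restriction. Once this compatibility lemma is in place, the bijection between the indexing sets of the two iterated coproducts is tautological and coassociativity follows, completing the proof.
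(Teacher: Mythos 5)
The paper offers no proof of Theorem \ref{thm:HANC}: the statement is imported from \cite{EP2} and left unproved in the text, so there is nothing to compare your argument against line by line. Taken on its own terms, your proposal is sound and follows what is surely the intended route: multiplicativity of $\Delta$ and the grading are immediate from the construction, connectedness yields the antipode for free, and everything reduces to coassociativity on a generator $L\in\mathcal{NC}$, proved by matching the index sets of the two iterated coproducts. Two points are worth tightening. First, the equivalence of the two descriptions of a ``doubly admissible decomposition'' $(Q^1,Q^2,U)$ --- the one read off from $(\Delta\otimes\id)\circ\Delta$, where one re-splits $st(Q^1\cup Q^2)$, and the one read off from $(\id\otimes\Delta)\circ\Delta$, where one re-splits each component --- should be made explicit: both unpack to the same three pairwise constraints, namely that no block of $Q^1$ is nested inside (i.e.\ $<_L$) a block of $Q^2$ or of $U$, and no block of $Q^2$ is nested inside a block of $U$; once stated this way the bijection of index sets really is tautological. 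Second, your ``small technical lemma'' must cover slightly more than the non-splitting of blocks (which indeed follows from non-crossingness \emph{together with} admissibility: a block $B\notin Q^1$ meeting two components of $[n]-\mathrm{supp}(Q^1)$ would force some $Q^1$-block to be nested inside $B$): one also has to check that the bar-words in the remaining tensor legs agree, i.e.\ that the connected components of $\mathrm{supp}(Q^2)$ relative to $\mathrm{supp}(Q^1\cup Q^2)$ are exactly the nonempty intersections with the components $J_i$ of $[n]-\mathrm{supp}(Q^1)$, and similarly for $U$; this is automatic because distinct $J_i$ are separated by elements of $\mathrm{supp}(Q^1)$, but it is part of the bookkeeping. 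Adding the (trivial) counit verification coming from the extreme terms $Q=L$ and $Q=\emptyset$ completes the proof.
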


This coproduct can be split into two parts as follows. On $\mathcal{NC}$ define the {\it{left half-coproduct}} by 
\begin{equation}
\label{HANCprec+}
	\Delta_{\prec}(L)=\sum\limits_{Q\!\coprod\limits_{\rm{adm}}\!U=L \atop 1 \in Q_1}
					st(Q)\otimes \big(st(J_1^{L,Q})| \cdots |st(J_{k(L , Q)}^{L,Q})\big),
\end{equation}
and
\begin{equation}
\label{HANCprec}
	\bar\Delta_{\prec}(L) := \Delta_{\prec}(L) - L \otimes \un. 
\end{equation}
The {\it{right half-coproduct}} is defined by
\begin{equation}
\label{HANCsucc+}
	\Delta_{\succ}(L)=\sum\limits_{Q\!\coprod\limits_{\rm{adm}}\!U=L \atop 1\notin Q_1}
					st(Q)\otimes \big(st(J_1^{L,Q})| \cdots |st(J_{k(L , Q)}^{L,Q})\big),
\end{equation}
and
\begin{equation}
\label{HANCsucc}
	\bar\Delta_{\succ}(L) := \Delta_{\succ}(L) -  \un \otimes L .
\end{equation}
Which yields $\Delta = \Delta_{\prec} + \Delta_{\succ}$, and for $L \in \mathcal{NC}$
$$
	\Delta(L) = \bar\Delta_{\prec}(L) + \bar\Delta_{\succ}(L) + L \otimes \un + \un \otimes  L.
$$
This is extended to $\bar T(\mathcal{NC})$ by defining
\begin{eqnarray}
\label{extens}
	\Delta_{\prec}(L_1 | \cdots | L_m) &:=& \Delta_{\prec}(L_1)\Delta(L_2) \cdots \Delta(L_m) \\\label{extens2}
	\Delta_{\succ}(L_1 | \cdots | L_m) &:=& \Delta_{\succ}(L_1)\Delta(L_2) \cdots \Delta(L_m). 
\end{eqnarray}

\begin{thm} \label{thm:bialgNC}\cite{EP2}
The bialgebra $\bar T(\mathcal{NC})$ equipped with $\Delta_{\succ}$ and $\Delta_{\prec}$ is an unshuffle bialgebra. 
\end{thm}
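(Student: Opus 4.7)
The plan is to deal with the two compatibility axioms (D1), (D2) first, since they follow essentially by construction, and then to reduce the three half-coassociativity axioms (C1), (C2), (C3) to a combinatorial analysis on a single non-crossing partition.

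First, the identities (D1) and (D2) are immediate from the very definitions \eqref{extens} and \eqref{extens2}. Indeed, for $a = L_1 | \cdots | L_m$ and $b = L'_1 | \cdots | L'_k$ one has
\[
  \Delta_\prec(a \cdot_B b) = \Delta_\prec(L_1) \Delta(L_2) \cdots \Delta(L_m) \Delta(L'_1) \cdots \Delta(L'_k) = \Delta_\prec(a) \cdot_B \Delta(b),
\]
and analogously for $\Delta_\succ$. Combining these with coassociativity of $\Delta$ (Theorem \ref{thm:HANC}) shows that to establish (C1), (C2), (C3) as identities on all of $\bar T(\mathcal{NC})$ it suffices to verify them on generators $L \in \mathcal{NC}$.

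The key combinatorial observation is that iterated applications of $\Delta_\prec$ and $\Delta_\succ$ on $L \in \mathcal{NC}_n$ are indexed by nested admissible decompositions $R \coprod_{\text{adm}} V \coprod_{\text{adm}} U = L$. Each such nested triple produces a term in $\mathcal{NC} \otimes \bar T(\mathcal{NC}) \otimes \bar T(\mathcal{NC})$, obtained by standardizing the restrictions of $L$ to the connected components at each level. The defining constraint selecting $\bar\Delta_\prec$ rather than $\bar\Delta_\succ$ at a given step is precisely that the block containing the current minimum element ``$1$'' is retained on the outer factor. Thus the four compositions $(\bar\Delta_\epsilon \otimes \id)\bar\Delta_{\epsilon'}$ partition the set of nested admissible triples according to the position of the block containing $1$ at each level.

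The plan for (C1), (C2), (C3) is then to exhibit, for each axiom, a bijection between the nested triples indexing the left-hand side and those indexing the right-hand side, and to check that the associated standardized partitions and connected components match. For (C1), both sides will correspond to nested triples $R \coprod_{\text{adm}} V \coprod_{\text{adm}} U = L$ with $1 \in R$, on the LHS read as $(R \mid V \mid U)$ and on the RHS read via $\bar\Delta$ acting multiplicatively on the bar-word of connected components of $[n] - R$; the two views coincide once one notes that an admissible decomposition of $L$ whose ``top'' piece is $R$ restricts to admissible decompositions of each $st(J_i^{L,R})$, and conversely. Axioms (C2) and (C3) are treated analogously, with the position of $1$ tracked accordingly.

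The main obstacle is the careful bookkeeping of how connected components decompose under iteration: one must check that the connected components of $[n] - R$ further decomposed by $V$ match exactly the connected components of $[n] - (R \cup V)$ read via the multiplicative extension of $\Delta$, after standardization, and that the partial order $<_L$ behaves well under these restrictions. This is the combinatorial heart of the statement, but the phenomenon is exactly the one already encountered in the proof of Theorem \ref{thm:bialg} for $\bar T(T(J))$ in \cite{EP1}, and the argument transposes mutatis mutandis, with subsets of $[n]$ replaced by unions of blocks of $L$ and intervals of $[n]$ replaced by restrictions to connected components relative to $L$.
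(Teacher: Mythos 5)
The paper does not actually prove this theorem: it is quoted from \cite{EP2} with no argument supplied, so there is no in-paper proof to measure you against. Judged on its own terms, your architecture is the natural one and, I believe, sound. The compatibilities (D1) and (D2) are indeed immediate from \eqref{extens} and \eqref{extens2}; the reduction of (C1)--(C3) to a single partition $L\in\mathcal{NC}$ via multiplicativity and coassociativity is standard (modulo the usual care with reduced versus unreduced coproducts and the unit); and the three half-coassociativity axioms are exactly the coassociativity bijection on nested admissible decompositions refined by recording in which tensor factor the block containing $1$ lands --- your tracking of that position across (C1), (C2), (C3) is consistent.

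The gap is concentrated in your final paragraph. The claim that the combinatorial heart ``transposes mutatis mutandis'' from Theorem \ref{thm:bialg} for $\bar T(T(J))$ conceals precisely what is new here: in the double tensor algebra every subset $S\subseteq[n]$ may be extracted, whereas in $\bar T(\mathcal{NC})$ only admissible decompositions are allowed, and admissibility is governed by the nesting order $<_L$, which has no counterpart in $\bar T(T(J))$. A complete proof must establish at least the following. (i) The two-step extractions $Q\coprod_{\mathrm{adm}}U=L$ followed by $R\coprod_{\mathrm{adm}}V=st(Q)$ are in bijection with the extractions $R\coprod_{\mathrm{adm}}(V\cup U)=L$ followed by componentwise admissible decompositions of the $st(J_i^{L,R})$, with admissibility holding on one side if and only if it holds on the other; this uses that $<_L$ restricted to the blocks of $Q$ coincides with $<_{st(Q)}$ under standardization, and that a block of $V$ and a block of $U$ lying in distinct connected components of $[n]-\mathrm{supp}(R)$ are automatically $<_L$-incomparable. (ii) Connected components compose under iterated removal: the components of $[n]-\mathrm{supp}(R\cup V)$ are obtained, in the correct left-to-right order, by further decomposing each component of $[n]-\mathrm{supp}(R)$ after removing $\mathrm{supp}(V)$; one must also check that the left tensor slot produced by $\bar\Delta$ on the bar-word (one letter per component of $[n]-\mathrm{supp}(R)$, not further subdivided) matches the component bar-word produced on the other side. (iii) Standardization commutes with restriction, so the partitions in each slot agree on the nose. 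All three assertions are true and elementary, but together they constitute the actual content of the theorem beyond its $\bar T(T(J))$ analogue; as written, your proposal correctly names them but does not prove them.
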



\subsection{Non-crossing Green's functions}
\label{sect:ncGreen}

We have introduced, at the begining of this section, the notion of Green's functions parametrized by non-crossing partitions,  $\mathrm{L}^{(n)}_{ j_1 \ldots j_n}$. We will call them simply non-crossing Green's functions. These Green's functions are parametrized by the pair of a non-crossing partition of $[n]$ and $n$ external sources, denoted from now on by $L( j_1\cdots  j_n)$ and called decorated non-crossing partitions. The latter are represented graphically by adding decorations to the graphical representations of non-crossing partitions. For example, for $L=\{\{1,3\},\{2,4\}\}$, $L(j_1j_2j_3j_4)$ is represented by
 $$
 	\begin{array}{c}
			\scalebox{0.5}{\nciiic}\\[-0.25cm] 
			{\hspace{0.1cm}\scriptstyle{j_1 j_2 \ j_3  j_4}}
	\end{array}.
$$
 
The present section shows how to mimick, for these data, the analog of the calculus that has been developed previously in order to understand from a Hopf and group-theoretical point of view the link between full and connected Green's functions.
 
For a set $J=\{j_i\}_{i>0}$ of non-commuting external sources, let us write ${\mathcal{NC}}(J)$ for the graded vector space spanned by $J$-decorated (or simply decorated) non-crossing partitions. For $S=\{s_1,\ldots,s_k\}\subset[n]$ and $j:=j_{i_1}\cdots j_{i_n}$, we set $j_S:=j_{i_{s_1}} \cdots  j_{i_{s_k}}$. Similarly, for $Q=\{Q_1,\ldots,Q_i\}$ a non-crossing partition of the set $S$, we write $j_Q:=j_S$.

The definitions and results in the previous paragraph carry over to decorated non-crossing partitions in a straightforward way. For example, the coproduct map $\Delta$ is defined on ${\mathcal{NC}}(J)$ by
\begin{eqnarray*}
	\Delta(L(j_{i_1}\cdots j_{i_n}))=
	\sum\limits_{Q\!\coprod\limits_{\rm{adm}}\!U=L}
	\big(st(Q)\otimes j_Q\big) \otimes \big(st(J_1^{L,Q})\otimes (j_{J_1^{L,Q}})| \cdots |st(J_{k(L, Q)}^{L,Q})
	\otimes (j_{J_{k(L, Q)}^{L,Q}})\big),
\end{eqnarray*}
where $L$ is a non-crossing partition of $[n]$. As an example we calculate
$$
	\Delta\Big(\!\!\!	\begin{array}{c}
				\scalebox{0.5}{\nciiic}\\[-0.25cm] 
				{\hspace{0.1cm}\scriptstyle{j_1 j_2 \ j_3  j_4}}
				\end{array}\!\!\!\Big) 
			=  	\begin{array}{c}
				\scalebox{0.5}{\nciiic}\\[-0.25cm] 
				{\hspace{0.1cm}\scriptstyle{j_1 j_2 \ j_3  j_4}}
				\end{array} \!\!\otimes \un 
				+ \un \otimes\!\! 
				\begin{array}{c}
				\scalebox{0.5}{\nciiic}\\[-0.25cm] 
				{\hspace{0.1cm}\scriptstyle{j_1 j_2 \ j_3  j_4}}
				\end{array} 
							+
				\begin{array}{c}
				\scalebox{0.5}{\ncii}\\[-0.25cm] 
				{\hspace{0.1cm}\scriptstyle{j_1 \ j_4}}
				\end{array} 
				\!\! \otimes\!\! 
				\begin{array}{c}
				\scalebox{0.5}{\ncii}\\[-0.25cm] 
				{\hspace{0.1cm}\scriptstyle{j_2 \ j_3}}
				\end{array}
$$
It is then extended to $T({\mathcal{NC}}(J))$ multiplicatively as in the previous sections; the other structural maps on $T({\mathcal{NC}})$ are extended similarly to $T({\mathcal{NC}}(J))$, and are written using the same symbols as before. We obtain finally

\begin{thm} \label{thm:bialgNCA}
The bialgebra $\bar T(\mathcal{NC}(J))$ equipped with $\Delta_{\succ}$ and $\Delta_{\prec}$ is an unshuffle bialgebra. 
\end{thm}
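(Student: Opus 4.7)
The plan is to deduce Theorem \ref{thm:bialgNCA} from Theorem \ref{thm:bialgNC} by observing that the $J$-decoration is carried passively by all of the structural maps. Concretely, I would first set up a ``decoration-forgetting'' linear surjection $\pi : \bar T(\mathcal{NC}(J)) \to \bar T(\mathcal{NC})$ which sends $L(j_{i_1}\cdots j_{i_n})$ to $L$ and extends as an algebra morphism. The definitions of $\Delta$, $\Delta_\prec$ and $\Delta_\succ$ on ${\mathcal{NC}}(J)$ in Section~\ref{sect:ncGreen} have been chosen precisely so that $\pi$ intertwines the corresponding operations on the two algebras, i.e., $(\pi\otimes\pi)\circ\Delta_\ast = \Delta_\ast\circ\pi$ for $\ast \in \{{\prec},{\succ},\emptyset\}$. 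This is because, for every admissible splitting $L = Q \coprod_{\mathrm{adm}} U$ of an underlying non-crossing partition of $[n]$, the decoration on the left tensor factor is $j_Q$ and the decorations on the connected components $J_1^{L,Q},\ldots,J_{k(L,Q)}^{L,Q}$ are $j_{J_1^{L,Q}},\ldots,j_{J_{k(L,Q)}^{L,Q}}$, both of which are entirely determined by the combinatorial data $(Q,U)$ and the original word.

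Next I would rerun the proof of Theorem~\ref{thm:bialgNC} with decorations in place. The key point is that identities (C1), (C2), (C3) from Definition~\ref{def:unshufCoalg} are combinatorial statements about pairs, respectively triples, of nested admissible splittings of the underlying non-crossing partition, and in each of these identities the decorations on both sides match monomial-by-monomial because they are determined by the same choice of subsets of $[n]$. More precisely, both iterated coproducts $(\bar\Delta_\prec\otimes \id)\circ\bar\Delta_\prec(L(j_{i_1}\cdots j_{i_n}))$ and $(\id\otimes \bar\Delta)\circ\bar\Delta_\prec(L(j_{i_1}\cdots j_{i_n}))$ unfold into sums parametrized by the same set of nested admissible splittings, and in each term the three tensor factors receive the decorations $j_{Q_1}$, $j_{Q_2\cap J_i^{L,Q_1}}$, etc., in a way that is manifestly the same on both sides; the corresponding undecorated identity is Theorem~\ref{thm:bialgNC}. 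The same argument handles (C2) and (C3).

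For the bialgebra compatibilities (D1) and (D2), I would simply invoke the multiplicative extension \eqref{extens}--\eqref{extens2}, which is defined in exactly the same way on $\bar T(\mathcal{NC}(J))$ as on $\bar T(\mathcal{NC})$, and note that concatenation visibly carries along the decorations without interaction. Thus (D1) and (D2) follow directly from the degree-one case plus the multiplicative definition, just as in the proof of Theorem~\ref{thm:bialgNC}.

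The main obstacle, such as it is, is purely notational: one has to keep track of the induced decorations $j_Q$, $j_{J_i^{L,Q}}$, $j_{J_i^{L,Q}\cap J_\ell^{L',Q'}}$, etc., through two or three levels of iterated admissible splittings, and verify that they coincide on the two sides of each shuffle-coassociativity relation. No new combinatorial phenomenon appears beyond the undecorated setting, so the verification reduces in each case to matching bookkeeping of indices via the standardization map $st$ on the relevant subsets of $[n]$.
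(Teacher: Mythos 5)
Your proposal is correct and takes essentially the same route as the paper, which simply asserts that "the definitions and results in the previous paragraph carry over to decorated non-crossing partitions in a straightforward way" and deduces Theorem \ref{thm:bialgNCA} from the undecorated case (Theorem \ref{thm:bialgNC}, itself deferred to \cite{EP2}). Your elaboration — that every term in each iterated half-coproduct identity is indexed by the same nested admissible splittings as in the undecorated case, with decorations determined functorially by those subsets of $[n]$ and transported by $st$ — is exactly the bookkeeping the paper leaves implicit.
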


Since $\bar T(\mathcal{NC}(J))$ is in particular a Hopf algebra, the set of linear maps, $Lin(\bar{T}(\mathcal{NC}(J)),\Kb)$, is a $\mathbb{K}$-algebra with respect to the convolution product, which is defined in terms of the coproduct $\Delta$, i.e., for $f,g\in Lin(\bar{T}(\mathcal{NC}(J)),\Kb)$
$$ 
	f \star g := m_\mathbb{K}\circ (f\otimes g)\circ \Delta,
$$ 
where $m_\mathbb{K}$ stands for the product map in $\mathbb{K}$. Notice that, motivated by the next proposition, we will also use later a shuffle notation for this product: $f\star g=:f\shuffle g$. We define accordingly the left and right convolution half-products on $Lin(T(\mathcal{NC}(J)),\Kb)$:
$$
	f\prec g:=m_\mathbb{K}\circ (f\otimes g)\circ \Delta_\prec ,
$$
$$
	f\succ g:=m_\mathbb{K}\circ (f\otimes g)\circ \Delta_\succ .
$$

\begin{prop}
The space $T^\ast(\mathcal{NC}(J)):=Lin(T(\mathcal{NC}(J)),\mathbb{K})$ equipped with $(\prec, \succ)$ is a shuffle algebra.
\end{prop}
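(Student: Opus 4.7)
The plan is to observe that this proposition is a direct consequence of Theorem~\ref{thm:bialgNCA}, which asserts that $\bar T(\mathcal{NC}(J))$ is an unshuffle bialgebra, together with the general principle (already noted in the remark after Proposition~\ref{shufflealgTTJ}) that the construction of a shuffle algebra on the linear dual depends only on the unshuffle bialgebra structure of the source. Hence the argument is essentially a transcription of the proof of Proposition~\ref{shufflealgTTJ}, with $\delta_\prec,\delta_\succ$ replaced by $\Delta_\prec,\Delta_\succ$ on $\bar T(\mathcal{NC}(J))$.

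Concretely, for arbitrary $f,g,h\in T^\ast(\mathcal{NC}(J))$, the first step is to rewrite each of the three shuffle axioms \eqref{A1}, \eqref{A2}, \eqref{A3} as identities in $\mathrm{Lin}(\bar T(\mathcal{NC}(J))^{\otimes 3},\mathbb{K})$ by unfolding the definitions
\begin{eqnarray*}
(f\prec g)\prec h &=& m_{\mathbb{K}^{[3]}}\circ(f\otimes g\otimes h)\circ(\Delta_\prec\otimes \id)\circ \Delta_\prec,\\
f\prec(g\shuffle h) &=& m_{\mathbb{K}^{[3]}}\circ(f\otimes g\otimes h)\circ(\id\otimes\bar\Delta)\circ \Delta_\prec,
\end{eqnarray*}
and analogously for the two other axioms. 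Then each axiom on the dual side becomes a direct consequence of the dual (co)shuffle identities \eqref{C1}, \eqref{C2}, \eqref{C3} satisfied by $\Delta_\prec,\Delta_\succ$ on $\bar T(\mathcal{NC}(J))$, which are themselves guaranteed by Theorem~\ref{thm:bialgNCA}.

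The second step is to equip $T^\ast(\mathcal{NC}(J))$ with the counit $\varepsilon$ of $\bar T(\mathcal{NC}(J))$ (the null map on $T(\mathcal{NC}(J))$ and the identity on $\mathbb{K}\un$) and to check the unit relations
\[
f\prec\varepsilon=f=\varepsilon\succ f,\qquad \varepsilon\prec f=0=f\succ\varepsilon,
\]
which follow immediately from the standard definitions \eqref{D3}, \eqref{D4} together with the fact that the reduced coproduct $\bar\Delta$ takes values in $T(\mathcal{NC}(J))\otimes T(\mathcal{NC}(J))$, where $\varepsilon$ vanishes.

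No genuine obstacle is expected: the only point requiring care is the bookkeeping on the extensions \eqref{extens}, \eqref{extens2} of $\Delta_\prec,\Delta_\succ$ to $\bar T(\mathcal{NC}(J))$, but the compatibility relations \eqref{D1}, \eqref{D2} encoded in the unshuffle bialgebra axioms ensure that the coalgebraic half-shuffle identities \eqref{C1}--\eqref{C3} continue to hold on all of $T(\mathcal{NC}(J))$, not merely on $\mathcal{NC}(J)$. Once this is noted, the whole verification reduces to the formal manipulations already carried out in the proof of Proposition~\ref{shufflealgTTJ}.
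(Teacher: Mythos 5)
Your proposal is correct and follows essentially the same route as the paper: the paper's own proof likewise unfolds $(f\prec g)\prec h$ and $f\prec(g\shuffle h)$ into compositions with $(\Delta_\prec\otimes \id)\circ\Delta_\prec$ and $(\id\otimes\bar\Delta)\circ\Delta_\prec$, invokes the unshuffle coalgebra identities guaranteed by the unshuffle bialgebra structure of $\bar T(\mathcal{NC}(J))$, and then adjoins the counit $\varepsilon$ as the shuffle unit with exactly the relations you state. No gap.
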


For completeness, and in view of the importance of this proposition for forthcoming developments, we recall briefly its proof: for arbitrary $f,g,h\in T^\ast(\mathcal{NC}(J))$,
$$
	(f\prec g)\prec h	
			=m_\mathbb{K}\circ ((f\prec g)\otimes h)\circ\Delta_\prec 
			=m_\mathbb{K}^{[3]}\circ (f\otimes g\otimes h)\circ(\Delta_\prec\otimes I)\circ \Delta_\prec,
$$
where $m_\mathbb{K}^{[3]}$ stands for the product map from $\mathbb{K}^{\otimes 3}$ to $\mathbb{K}$. Similarly 
\begin{eqnarray*}
	f\prec (g\shuffle h)
			&=&m_\mathbb{K}\circ (f\otimes (g\shuffle h))\circ \Delta_\prec \\
			&=&m_\mathbb{K}^{[3]}\circ (f\otimes g\otimes h)\circ (I\otimes \overline\Delta)\circ\Delta_\prec,
\end{eqnarray*}
so that the identity $(f\prec g)\prec h=f\prec (g\shuffle h)$ follows from $(\Delta_\prec\otimes I)\otimes\Delta_\prec =(I\otimes \overline\Delta)\circ\Delta_\prec$, and similarly for the other identities characterizing shuffle algebras.

As usual, we equip the shuffle algebras $T^\ast(\mathcal{NC}(J))$ with a unit. That is, we set $\bar{T}^\ast(\mathcal{NC}(J)):=T^\ast(\mathcal{NC}(J))\oplus \mathbb{K} \un\cong Lin(\bar T(\mathcal{NC}(J)),\mathbb{K})$, where in the last isomorphism the unit $\un \in \bar{T}^\ast(\mathcal{NC}(J))$ is identified with the augmentation map $\varepsilon \in Lin(\bar T(\mathcal{NC}(J)),\mathbb{K})$ -- the null map on $T(\mathcal{NC}(J))$, and the identity map on $\mathcal{NC}(J)^{\otimes 0}\cong \mathbb{K}$. Moreover, for an arbitrary $f$ in ${T}^\ast(\mathcal{NC}(J))$,
$$
	f\prec \varepsilon=f=\varepsilon\succ f,\ \ \varepsilon\prec f=0=f\succ \varepsilon.
$$

Let now $\phi$ be a linear form on $\mathcal{NC}(J)$, for example the one written $\tau_{\mathrm{nc}}$ associated with non-crossing Green's functions, and defined by
$$
	\tau_{\mathrm{nc}}(L(j_{i_1}\cdots j_{i_n})):=\mathrm{L}^{(n)}_{j_{i_1} \cdots  j_{i_n}}.
$$
It extends uniquely to a multiplicative linear form $\Phi$ on $T(\mathcal{NC}(J))$ (still written $\tau_{\mathrm{nc}}$ when $\phi=\tau_{\mathrm{nc}}$) by setting
$$
	\Phi(w_1| \cdots |w_n):=\phi(w_1) \cdots \phi(w_n),
$$
(and to a unital and multiplicative linear form on $\bar T(\mathcal{NC}(J))$ by setting $\Phi(\un):=1$). Conversely any such multiplicative map $\Phi$ gives rise to a linear form on $\mathcal{NC}(J)$ by restriction of its domain.

\begin{defn}
A linear form $\Phi \in \bar T(\mathcal{NC}(J))$ is called a character if it is unital, $\Phi(\un)=1$, and multiplicative, i.e., for all $a,b \in T(\mathcal{NC}(J))$, $\Phi(a|b)=\Phi(a)\Phi(b).$ A linear form $\kappa\in \bar T(\mathcal{NC}(J))$ is called an infinitesimal character, if $\kappa(\un)=0$, and if for all $a,b\in T(\mathcal{NC}(J))$, $\kappa(a|b)=0.$
\end{defn}

We write ${G}_{NC}$ for the set of characters in $\bar T(\mathcal{NC}(J))$ and ${g}_{NC}$ for the corresponding set of infinitesimal characters. Notice that, by its very definition, an infinitesimal character is entirely determined by its restriction to $\mathcal{NC}(J)\subset T(\mathcal{NC}(J))$.


\begin{thm}\label{thm:Gg2}\cite{EP2}
There exists a natural bijection between ${G}_{NC}$, the set of characters, and $g_{NC}$, the set of infinitesimal characters on $\bar T(\mathcal{NC}(J))$. More precisely, for $\Phi \in G_{NC}, \exists ! \kappa  \in g_{NC}$ such that $\Phi=\varepsilon+\kappa\prec \Phi ,$ and conversely, for $\kappa\in g_{NC}(J)$
$$
	\Phi:=\varepsilon+\kappa +\kappa\prec \kappa +\kappa\prec (\kappa\prec \kappa)+
	\kappa\prec (\kappa\prec (\kappa\prec \kappa))+\cdots=:
	\exp^{\prec}( \kappa)
$$
is a character.
\end{thm}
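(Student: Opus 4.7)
The plan is to mimic verbatim the strategy of the proof of Theorem~\ref{thm:Gg}, since the only structural features used there were (i) that the ambient bialgebra is an unshuffle bialgebra, so that its dual is a shuffle algebra, and (ii) that the extended left half-coproduct, when applied to a bar-product $L_1|\cdots|L_n$, acts nontrivially only on the first tensor factor $L_1$. Both properties hold here: (i) is exactly Theorem~\ref{thm:bialgNCA}, and (ii) is the very definition \eqref{extens} of $\Delta_\prec$ on $\bar T(\mathcal{NC}(J))$, combined with the fact that an infinitesimal character vanishes on $T_k(\mathcal{NC}(J))$ for $k\ge 2$.

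First I would invoke Lemma~\ref{inverse} inside the shuffle algebra $\bar T^\ast(\mathcal{NC}(J))$: writing $\Phi = \varepsilon + Y$ with $Y$ vanishing on $\un$, the recursion $\Phi = \varepsilon + \kappa \prec \Phi$ has a unique solution
\[
	\kappa \;=\; Y \prec \Bigl(\sum_{n\ge 0}(-1)^n Y^{\shuffle n}\Bigr) \;\in\; T^\ast(\mathcal{NC}(J)),
\]
and equivalently $\Phi = \exp^{\prec}(\kappa)$. This settles the existence and uniqueness of a linear solution $\kappa$, without using multiplicativity of $\Phi$ yet.

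Next I would show that when $\Phi \in G_{NC}$, the $\kappa$ so produced is an infinitesimal character. Let $\mu := Res(\kappa)$, i.e.~the restriction of $\kappa$ to $\mathcal{NC}(J)$, extended by zero on the higher tensor powers. Using the Sweedler notation $\Delta_\prec(L) =: L^{1,\prec} \otimes L^{2,\prec}$ (with $L^{1,\prec} \in \mathcal{NC}(J)$ by construction of $\Delta_\prec$) and the multiplicative extension \eqref{extens}, one has for any $w = L_1|\cdots|L_n \in T(\mathcal{NC}(J))$
\[
	(\mu \prec \Phi)(w) \;=\; \mu(L_1^{1,\prec})\,\Phi(L_1^{2,\prec}|L_2|\cdots|L_n)
					\;=\; \kappa(L_1^{1,\prec})\,\Phi(L_1^{2,\prec}|L_2|\cdots|L_n),
\]
because $\mu$ and $\kappa$ agree on $\mathcal{NC}(J)$. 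Since $\Phi(L_1) = \kappa(L_1^{1,\prec})\Phi(L_1^{2,\prec})$ by the recursion applied on $\mathcal{NC}(J)$, and since $\Phi$ is multiplicative, this rewrites as $\Phi(L_1)\Phi(L_2|\cdots|L_n) = \Phi(w) = (\varepsilon + \mu \prec \Phi)(w)$. Hence $\mu$ also solves the linear fixed point equation, and uniqueness forces $\kappa = \mu$, i.e.~$\kappa \in g_{NC}$.

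For the converse I would argue by induction on the total degree. Fix $\kappa \in g_{NC}$ and set $\Phi := \exp^{\prec}(\kappa)$, which satisfies $\Phi = \varepsilon + \kappa \prec \Phi$. For $w = L_1|\cdots|L_n$ with $n\ge 2$, the vanishing of $\kappa$ on tensor powers of length $\ge 2$ together with \eqref{extens} gives
\[
	\Phi(w) \;=\; \kappa(L_1^{1,\prec})\,\Phi(L_1^{2,\prec}|L_2|\cdots|L_n).
\]
The element $L_1^{2,\prec}|L_2|\cdots|L_n$ has strictly smaller total degree than $w$, so by the induction hypothesis one may factor $\Phi(L_1^{2,\prec}|L_2|\cdots|L_n) = \Phi(L_1^{2,\prec})\Phi(L_2)\cdots\Phi(L_n)$, and then reassemble $\kappa(L_1^{1,\prec})\Phi(L_1^{2,\prec}) = \Phi(L_1)$ via the recursion again, yielding $\Phi(w) = \Phi(L_1)\cdots\Phi(L_n)$. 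Thus $\Phi \in G_{NC}$, establishing the bijection.

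The only genuine point requiring care (and the step I expect to be the main obstacle in turning this sketch into a clean proof) is to verify that the multiplicative extension \eqref{extens} really does kill the contributions of $L_2,\ldots,L_n$ when paired with $\mu$ (resp.~$\kappa$). This is where Remark~\ref{Cor:singleInterval} type arguments enter: one must check that $\Delta_\prec(L_1|\cdots|L_n)$ always has its left factor of the form $L_1^{1,\prec}$ times (a product living entirely in $T(\mathcal{NC}(J))$), so that the infinitesimal character can only pick up $L_1^{1,\prec}$; everything else is a direct transcription of the proof of Theorem~\ref{thm:Gg}.
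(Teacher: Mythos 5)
Your proposal is correct and follows essentially the same route as the paper's own sketch: unique solvability of the linear fixed point equation in the graded shuffle algebra (you invoke Lemma~\ref{inverse}, exactly as the paper does in the proof of Theorem~\ref{thm:Gg}), replacement of $\kappa$ by its restriction $\mu=Res(\kappa)$ using multiplicativity of $\Phi$, and induction on total tensor degree for the converse. The point you flag at the end --- that an infinitesimal character paired against $\Delta_\prec(L_1|\cdots|L_n)$ can only pick up the $L_1^{1,\prec}$ component, since the left factor is $L_1^{1,\prec}$ bar-concatenated with left legs of $\Delta(L_2)\cdots\Delta(L_n)$ --- is precisely the observation the paper spells out in the proof of Theorem~\ref{thm:Gg} and uses implicitly here, so it is not an obstacle.
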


Recall that the analogous Theorem holds for the group $G_{\mathbb C}$ and Lie algebra $g_{\mathbb C}$. 

In view of the importance of the Theorem for our forthcoming developments, we include a sketch of the proof. Since we are dealing with graded structures, the implicit equation $\Phi= \varepsilon+\kappa\prec \Phi$ can be shown recursively to have a unique solution $\kappa$ in $\bar T^\ast(\mathcal{NC}(J))$. Let us consider the infinitesimal character $\mu$ equal to $\kappa$ on the component $\mathcal{NC}(J)$ of $T(\mathcal{NC}(J))$, and let us show that $\mu$ also solves $\Phi=e+\mu\prec \Phi$; the first part of the Theorem will follow.

Indeed, for an arbitrary $w=w_1|\cdots|w_n \in T(\mathcal{NC}(J))$, we have: 
$$
	\Phi(w_1)	=(e+\kappa\prec \Phi)(w_1)
			=\kappa(w_1^{1,\prec})\Phi(w_1^{2,\prec})
			=\mu(w_1^{1,\prec})\Phi(w_1^{2,\prec}),
$$
so that for any $n > 1$
$$
	\Phi (w_1|\cdots|w_n)	=\Phi(w_1)\Phi(w_2|\cdots|w_n)
					=\mu(w_1^{1,\prec})\Phi(w_1^{2,\prec}|w_2|\cdots|w_n))
					=(e+\mu\prec \Phi)(w_1|\cdots|w_n),
$$
from which the property follows.
Conversely, 
$$
	\exp^{\prec}(\kappa)(w_1|\cdots|w_n)=(e+\kappa\prec\exp^{\prec}(\kappa))(w_1|\cdots|w_n)
							=\kappa(w_1^{1,\prec})\exp^{\prec}(\kappa)(w_1^{2,\prec}|\cdots|w_n).
$$
Assuming (by induction on the total tensor degree of the expressions) that $\exp^\prec$ acts as a character on $(w_1^{2,\prec}|\cdots|w_n)$, we obtain:
\begin{eqnarray*}
	\exp^{\prec}(\kappa)(w_1|\cdots|w_n)
				&=&(\kappa(w_1^{1,\prec})\exp^{\prec}(\kappa)(w_1^{2,\prec}))
								\exp^{\prec}(\kappa)(w_2)\cdots\exp^{\prec}(\kappa)(w_n)\\					
				&=&\exp^{\prec}(\kappa)(w_1)\exp^{\prec}(\kappa)(w_2) \cdots \exp^{\prec}(\kappa)(w_n).
\end{eqnarray*}


\subsection{From full to non-crossing Green's functions}
\label{subsect:fulltonc}

We turn now to the relationship between planar full and connected Green's functions viewed through the prism of non-crossing partitions.

\begin{defn}
The splitting map $Sp$ is the map from $T(J)$ to $\mathcal{NC}(J)$ defined by:
$$
	Sp(j_1 \cdots j_n):=\sum\limits_{L \in {NC}_n}L(j_1 \cdots j_n).
$$
It is extended multiplicatively to a unital map $Sp$ from $\bar T(T(J))$ to $\bar T(\mathcal{NC}(J))$, i.e., for $x_1,\ldots ,x_k\in T(J)$,
$$
	Sp(x_1|\cdots |x_k):=Sp(x_1)|\cdots |Sp(x_k).
$$
\end{defn} 

The name ``splitting map" is chosen because, on dual spaces it permits to ``split'' the value of a linear form, $\phi$, on $T(J)$ (typically the full Green's function), into a sum of terms indexed by non-crossing partitions (typically, the non-crossing Green's functions).

\begin{thm} \label{thm:splitHom}\cite{EP2}
The map $Sp$ from $\bar T(T(J))$ to $\bar T(\mathcal{NC}(J))$ is an unshuffle bialgebra morphism.
\end{thm}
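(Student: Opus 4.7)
The map $Sp$ is an algebra morphism by its very definition (it is multiplicative on generators and sends $\un$ to $\un$), and it obviously intertwines the counits (both vanish on strictly positive degree components). Since both coproducts $\delta$ and $\Delta$ and both pairs of half-coproducts are defined to be multiplicative with respect to the concatenation products on $\bar T(T(J))$ and $\bar T(\mathcal{NC}(J))$, it suffices to verify the three identities
\begin{equation*}
  \Delta\circ Sp = (Sp\otimes Sp)\circ\delta,\qquad
  \Delta_\prec\circ Sp = (Sp\otimes Sp)\circ\delta_\prec,\qquad
  \Delta_\succ\circ Sp = (Sp\otimes Sp)\circ\delta_\succ,
\end{equation*}
on a single word $w=j_{i_1}\cdots j_{i_n}\in T(J)$; compatibility on arbitrary bar-monomials then follows from multiplicativity.

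The core of the argument is a bijection between the two indexing sets. On the right-hand side of $(Sp\otimes Sp)\circ\delta(w)$ one sums over subsets $S\subseteq[n]$ together with a choice of a non-crossing partition $M$ of $S$ and, for each connected component $J^S_r$ of $[n]-S$, a non-crossing partition $L_r$ of $J^S_r$. On the left-hand side $\Delta(Sp(w))$ one sums over pairs $(L,Q\coprod_{\rm adm}U)$ with $L\in NC_n$. The claim is that the correspondence
\begin{equation*}
  (L,Q\coprod\nolimits_{\rm adm}U)\;\longleftrightarrow\;(S,M,L_1,\dots,L_k),
  \qquad S:=\textstyle\bigcup_i Q_i,\quad M:=Q,\quad L_r:=U|_{J^S_r},
\end{equation*}
is a bijection whose inverse takes $(S,M,L_1,\dots,L_k)$ to the pair $L:=M\sqcup L_1\sqcup\cdots\sqcup L_k$, $Q:=M$, $U:=L_1\sqcup\cdots\sqcup L_k$. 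The main verifications are:
(a) each connected component $J^S_r$ is an interval of $[n]$ squeezed between two consecutive elements of $S$ (or an endpoint of $[n]$), so every block of $U$ in an admissible decomposition lies inside a single $J^S_r$;
(b) the resulting $L$ is genuinely non-crossing, because blocks of $M$ (in $S$) cannot interleave with blocks of any $L_r$ (confined to $J^S_r$) since no element of $S$ lies strictly inside $J^S_r$; and
(c) admissibility of $Q\coprod U$ holds automatically under the inverse assignment, for the very same reason: no element of $Q_i\subset S$ can lie in the open interval $(\inf U_j,\sup U_j)\subset J^S_r$. Standardization of $Q$ and of the $J^{L,Q}_r$ in the coproduct formula matches exactly the factor $M(j_S)$ and the factors $L_r(j_{J^S_r})$ in $(Sp\otimes Sp)\circ\delta(w)$, and the decorations match because $j_Q=j_S$ by definition.

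Having established this bijection, compatibility with the splitting $\delta=\delta_\prec+\delta_\succ$ and $\Delta=\Delta_\prec+\Delta_\succ$ is immediate: the restriction defining $\delta_\prec$ is $1\in S$, while the restriction defining $\Delta_\prec$ is $1\in Q_1$. Under the bijection, $1\in S=\bigcup_i Q_i$ if and only if $1$ belongs to the block $Q_1$ (since the blocks of $Q$ are ordered by their minima), so the two conditions coincide, and symmetrically for the right half-coproducts.

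I expect the main technical obstacle to be the careful bookkeeping in step (a)--(c), namely producing a transparent proof that the reassembly $L=M\sqcup L_1\sqcup\cdots\sqcup L_k$ is non-crossing and that admissibility is automatic; once the geometric picture ``each $J^S_r$ is an interval separated from the rest of $[n]$ by elements of $S$'' is in place, the rest amounts to matching decorations and tracking the standardization maps, which is routine.
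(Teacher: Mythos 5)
Your proposal is correct, and it is worth noting that the paper itself gives no proof of this theorem: it is quoted from \cite{EP2}. Your argument is the natural one and is consistent with all the machinery the paper sets up. The reduction to a single word $w=j_{i_1}\cdots j_{i_n}$ via multiplicativity of $Sp$, $\delta$, $\Delta$ and of the half-coproduct extensions \eqref{extens}--\eqref{extens2} is legitimate, and the term-by-term bijection $(L,Q\coprod_{\rm adm}U)\leftrightarrow(S,M,L_1,\ldots,L_k)$ is exactly the right correspondence; the compatibility of the conditions $1\in S$ and $1\in Q_1$ (blocks being ordered by their minima) then gives the half-coproduct statements for free.

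One point in your step (a) is stated too quickly: the fact that each $J^S_r$ is an interval does \emph{not} by itself imply that every block $U_q$ of $U$ lies inside a single $J^S_r$, since the blocks of $U$ partition $[n]-S$ and could a priori straddle several components. The correct argument combines non-crossingness with admissibility: if $a<b$ lie in $U_q$ but in different components, some $s\in Q_p$ satisfies $a<s<b$, and then any element $m\in Q_p$ outside the open interval $(\inf U_q,\sup U_q)$ would produce a crossing ($m<a<s<b$ or $a<s<b<m$ with $m\sim_L s\nsim_L a\sim_L b$); hence $Q_p<_L U_q$, contradicting admissibility. This is precisely the assertion ``$J_i\cap U_j$ is empty or equals $U_j$'' that the paper records when defining the coproduct on $\mathcal{NC}$, so you may simply invoke it, but your own justification should be the crossing argument rather than the interval property alone. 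Your steps (b) and (c) for the inverse direction are sound (in (b) you should also dispose of the trivial case of two blocks lying in distinct components $J^S_r$, $J^S_{r'}$, which cannot interleave because the components are disjoint intervals), and checking that the two assignments are mutually inverse is immediate since $S$, the components $J^S_r$, and the restrictions $U|_{J^S_r}$ are all determined by the data on either side. With these small repairs the plan is a complete proof.
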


The previous constructions dualize, that is, the linear dual of an unshuffle coalgebra is a shuffle algebra, and a morphism $f$ between two unshuffle coalgebras induces a morphism of shuffle algebras written $f^\ast$ between the linear duals. 
In particular:

\begin{lem} \label{thm:splitHomdual}
The map $Sp$ from $T(T(J))$ to $T(\mathcal{NC}(J))$ induces a morphism of shuffle algebras with units, $Sp^\ast$ from $\bar T^\ast(J)$ to $\bar T^\ast(T(J))$.
\end{lem}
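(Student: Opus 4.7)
The plan is to proceed by pure duality, invoking Theorem~\ref{thm:splitHom} together with the general principle (recalled in the excerpt just before the lemma) that the linear dual of a morphism of unshuffle coalgebras is a morphism of shuffle algebras. First, I would note that Theorem~\ref{thm:splitHom} already gives that $Sp : \bar T(T(J)) \to \bar T(\mathcal{NC}(J))$ is a morphism of unshuffle bialgebras; in particular it is a counital morphism of unshuffle coalgebras, so it satisfies $\delta_\prec^{\mathcal{NC}} \circ Sp = (Sp \otimes Sp) \circ \delta_\prec^{T}$, the analogous identity for $\delta_\succ$, and $\varepsilon_{\mathcal{NC}} \circ Sp = \varepsilon_T$. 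Correspondingly, I would take $Sp^\ast$ to be the transpose, sending $\phi \in \bar T^\ast(\mathcal{NC}(J))$ to $\phi \circ Sp \in \bar T^\ast(T(J))$ (thus correcting what appears to be a typographical error in the statement, which should read $Sp^\ast : \bar T^\ast(\mathcal{NC}(J)) \to \bar T^\ast(T(J))$).

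The central computation is the routine unwinding of the definitions of $\prec$ and $\succ$ on linear duals via the unshuffle half-coproducts. For $\alpha, \beta \in \bar T^\ast(\mathcal{NC}(J))$ and $x \in \bar T(T(J))$, I would compute
\begin{align*}
\bigl(Sp^\ast(\alpha) \prec Sp^\ast(\beta)\bigr)(x)
&= m_{\mathbb{K}} \circ (Sp^\ast(\alpha) \otimes Sp^\ast(\beta)) \circ \delta_\prec^{T}(x) \\
&= (\alpha \otimes \beta) \circ (Sp \otimes Sp) \circ \delta_\prec^{T}(x) \\
&= (\alpha \otimes \beta) \circ \delta_\prec^{\mathcal{NC}} \circ Sp(x)
 = Sp^\ast(\alpha \prec \beta)(x),
\end{align*}
and identically for $\succ$. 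Thus $Sp^\ast$ intertwines both half-shuffle products, and by summing it also intertwines the full shuffle product $\shuffle$. Preservation of units reduces to $\varepsilon_{\mathcal{NC}} \circ Sp = \varepsilon_T$, which holds because $Sp$ is a unital algebra morphism (it sends $\un$ to $\un$ and sends the augmentation ideal of $\bar T(T(J))$ into that of $\bar T(\mathcal{NC}(J))$).

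There is essentially no obstacle here beyond being careful about the definitions; the only subtle point is that the half-shuffle unit axioms \eqref{unit-dend} involve asymmetric rules ($\un \prec a = 0$, $a \succ \un = 0$), so I would verify explicitly that $Sp^\ast$ respects these by treating the $\un$-component separately. This is immediate since $Sp(\un) = \un$, so $Sp^\ast(\varepsilon_{\mathcal{NC}}) = \varepsilon_T$, and the asymmetric relations then transport automatically. Once these verifications are in place the lemma follows at once.
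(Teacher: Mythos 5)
Your proposal is correct and follows exactly the route the paper takes: the paper's entire ``proof'' is the remark preceding the lemma that a morphism of unshuffle coalgebras dualizes to a morphism of shuffle algebras, applied to Theorem~\ref{thm:splitHom}, and your computation simply spells out that transposition. You are also right that the stated domain $\bar T^\ast(J)$ is a typo for $\bar T^\ast(\mathcal{NC}(J))$, as confirmed by the use of $Sp^\ast$ on $G_{NC}$ in Lemma~\ref{lemMap}.
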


\begin{lem}\label{lemMap}The map $Sp^\ast$ restricts to maps from $G_{NC}$ to $G_{\mathbb C}$ and from $g_{NC}$ to $g_{\mathbb C}$.
\end{lem}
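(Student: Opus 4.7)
The plan is to derive both statements from the fact that $Sp$ is a unital algebra morphism for the concatenation products, which is part of the content of Theorem~\ref{thm:splitHom} (every unshuffle bialgebra morphism is, in particular, a bialgebra morphism). By the very definition $Sp^\ast(f):=f\circ Sp$, so composition with $Sp$ will automatically transport multiplicativity and the vanishing properties characterizing $G_{NC}$ and $g_{NC}$ to their analogues on $\bar T(T(J))$.

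For the character statement, I would check the two defining axioms of $G_{\mathbb C}$ directly. Unitality follows from
$$Sp^\ast(\Phi)(\un)=\Phi(Sp(\un))=\Phi(\un)=1,$$
using that $Sp$ is unital. Multiplicativity is obtained, for $x,y\in T(T(J))$, by
$$Sp^\ast(\Phi)(x|y)=\Phi(Sp(x|y))=\Phi(Sp(x)|Sp(y))=\Phi(Sp(x))\Phi(Sp(y))=Sp^\ast(\Phi)(x)\,Sp^\ast(\Phi)(y),$$
where one uses successively that $Sp$ respects the concatenation products and that $\Phi$ is a character.

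For the infinitesimal character statement, the key observation is that $Sp$ sends the augmentation ideal $T(T(J))$ into the augmentation ideal $T(\mathcal{NC}(J))$. This is immediate from Definition of $Sp$: a generating word $j_{i_1}\cdots j_{i_n}$ is mapped to a sum of decorated non-crossing partitions lying in $\mathcal{NC}(J)\subset T(\mathcal{NC}(J))$, and multiplicativity of $Sp$ then guarantees that bar-products of such elements land again in $T(\mathcal{NC}(J))$. Granting this, for $\kappa\in g_{NC}$ one verifies $Sp^\ast(\kappa)(\un)=\kappa(\un)=0$ and, for any $x,y\in T(T(J))$,
$$Sp^\ast(\kappa)(x|y)=\kappa(Sp(x)|Sp(y))=0,$$
since $Sp(x),Sp(y)\in T(\mathcal{NC}(J))$ and $\kappa$ vanishes on all such products.

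There is no genuine obstacle: the lemma is a purely formal consequence of $Sp$ being a unital algebra morphism that preserves the augmentation ideal, and the only thing to spell out is that last point, which is visible from the explicit formula defining $Sp$.
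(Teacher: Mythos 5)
Your argument is correct and coincides with the paper's own proof: both rest on the fact that $Sp$ is a unital morphism for the concatenation (bar) products, $Sp(x_1|\cdots|x_n)=Sp(x_1)|\cdots|Sp(x_n)$, from which multiplicativity of $Sp^\ast(\Phi)$ follows by direct computation. The paper merely states ``a similar argument holds for infinitesimal characters''; your explicit verification, using that $Sp$ sends $T(T(J))$ into $T(\mathcal{NC}(J))$, is exactly the intended argument.
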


Indeed, the map $Sp$ from $T(T(J))$ to $T({\mathcal NC}(J))$ is induced multiplicatively by a map from $T(J)$ to ${\mathcal NC}(J)$. With the same notation as at the begining of this section, we find that $Sp(x_1|\cdots|x_n)=Sp(x_1)|\cdots|Sp(x_n)$. Therefore, for $\Phi\in  G_{NC}$ we have that
\begin{eqnarray*}
	Sp^\ast(\Phi)(x_1|\cdots|x_n) &=& \Phi\circ Sp(x_1|\cdots|x_n)\\
						    &=& \Phi(Sp(x_1)|\cdots|Sp(x_n))\\
						    &=& \Phi\circ Sp(x_1)\cdots\Phi\circ Sp(x_n)\\
						    &=& Sp^\ast(\Phi)(x_1)\cdots Sp^\ast(\Phi)(x_n),
\end{eqnarray*}
and $Sp^\ast(\Phi)\in G_{\mathbb C}$. A similar argument holds for  infinitesimal characters.

\ \par

 Notice that elements in  $G_{NC}$ and $g_{NC}$ are entirely characterized by their restrictions to $\mathcal{NC}(J)$; similarly elements in $G_{\mathbb C}$ and $g_{\mathbb C}$ are characterized by their restrictions to $T(J)$. It follows that any section $\sigma$ of the map $Sp^\ast$ from $Lin(\mathcal{NC}(J),\mathbb{K})$ to $Lin(T(J),\mathbb{K})$ induces a  right inverse to the map $Sp^\ast$ from  ${G}_{NC}$ (resp. $g_{NC}$) to $G_{\mathbb C}$ (resp. $g_{\mathbb C}$). The existence of such sections, and the surjectivity of $Sp^\ast$ follow by direct inspection. 

\begin{thm}\label{Thm:commut}
The isomorphism of Theorem \ref{thm:Gg2} commutes with $Sp^\ast$, in the sense that, given $\Phi$ and $\kappa$
$$
	Sp^\ast(\Phi)=\varepsilon +Sp^\ast(\kappa)\prec Sp^\ast(\Phi) = \exp^{\prec}( Sp^\ast(\kappa)).
$$
\end{thm}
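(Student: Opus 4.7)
The plan is to deduce the claim directly from the fact that $Sp^\ast$ is a morphism of unital shuffle algebras that maps characters to characters and infinitesimal characters to infinitesimal characters, so the fixed-point equation of Theorem \ref{thm:Gg2} is transported verbatim from $\bar T^\ast(\mathcal{NC}(J))$ to $\bar T^\ast(T(T(J)))$.

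More precisely, I would first record the compatibilities of $Sp^\ast$. Since $Sp$ is an unshuffle bialgebra morphism (Theorem \ref{thm:splitHom}), in particular it intertwines the left half-coproducts $\Delta_\prec$ (on $\bar T(\mathcal{NC}(J))$) and $\delta_\prec$ (on $\bar T(T(J))$); dualizing via the definition of $\prec$ as $m_{\mathbb K}\circ(\cdot\otimes\cdot)\circ\Delta_\prec$ resp.\ $m_{\mathbb K}\circ(\cdot\otimes\cdot)\circ\delta_\prec$ yields, for any $f,g\in\bar T^\ast(\mathcal{NC}(J))$,
\[
Sp^\ast(f\prec g)\ =\ Sp^\ast(f)\prec Sp^\ast(g).
\]
Unitality of $Sp$ gives $Sp^\ast(\varepsilon_{\mathcal{NC}})=\varepsilon_{T(J)}$ (both are the pullback of the augmentation of $\mathbb K$ along the respective counits). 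Lemma \ref{lemMap} says moreover that $Sp^\ast(\Phi)\in G_{\mathbb C}$ when $\Phi\in G_{NC}$ and $Sp^\ast(\kappa)\in g_{\mathbb C}$ when $\kappa\in g_{NC}$.

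Next, I would start from the fixed-point equation furnished by Theorem \ref{thm:Gg2} on the non-crossing side,
\[
\Phi\ =\ \varepsilon_{\mathcal{NC}}+\kappa\prec \Phi,
\]
and apply $Sp^\ast$ termwise. Using $\mathbb K$-linearity of $Sp^\ast$, the intertwining identity above, and $Sp^\ast(\varepsilon_{\mathcal{NC}})=\varepsilon_{T(J)}$, one obtains
\[
Sp^\ast(\Phi)\ =\ \varepsilon_{T(J)}+Sp^\ast(\kappa)\prec Sp^\ast(\Phi).
\]
This is a linear fixed-point equation on $\bar T^\ast(T(T(J)))$ whose unknown is a character ($Sp^\ast(\Phi)\in G_{\mathbb C}$) and whose driver is an infinitesimal character ($Sp^\ast(\kappa)\in g_{\mathbb C}$), which is exactly the setting of Theorem \ref{thm:Gg}. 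By the uniqueness part of Theorem \ref{thm:Gg} (or equivalently by Lemma \ref{inverse}), this identifies $Sp^\ast(\Phi)=\exp^\prec(Sp^\ast(\kappa))$.

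Alternatively, and as a consistency check I would include, the same identity is visible directly by induction: since $Sp^\ast$ commutes with $\prec$ and sends $\varepsilon_{\mathcal{NC}}$ to $\varepsilon_{T(J)}$, it sends each iterated left half-shuffle $\kappa^{\prec n}$ to $Sp^\ast(\kappa)^{\prec n}$, hence (termwise in the grading, so no convergence issue arises) $Sp^\ast(\exp^\prec(\kappa))=\exp^\prec(Sp^\ast(\kappa))$. I do not anticipate a genuine obstacle here; the only point that needs a little care is the bookkeeping verifying that the augmentation $\varepsilon$ on each side plays the role of the unit of the appropriate unital shuffle algebra, so that the equation $\Phi=\varepsilon+\kappa\prec\Phi$ is preserved coefficient-by-coefficient under $Sp^\ast$.
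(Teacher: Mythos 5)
Your proposal is correct and follows essentially the same route as the paper: the paper's own proof consists of the single remark that the theorem follows from Lemma \ref{thm:splitHomdual} (that $Sp^\ast$ is a morphism of unital shuffle algebras) and Lemma \ref{lemMap} (that it preserves characters and infinitesimal characters), which is exactly the argument you spell out in detail. Your elaboration — applying $Sp^\ast$ termwise to $\Phi=\varepsilon+\kappa\prec\Phi$ and then invoking the uniqueness in Theorem \ref{thm:Gg} to identify the solution with $\exp^{\prec}(Sp^\ast(\kappa))$ — is precisely the content the paper leaves implicit.
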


The Theorem follows from Lemma \ref{thm:splitHomdual} and Lemma \ref{lemMap}.

\ \par

In view of Theorems \ref{Thm:commut}, \ref{thm:Gg2}, \ref{thm:Gg}, and of the previous results in this section, one can therefore use any section $\sigma$ of the map $Sp^\ast$ from $Lin(\mathcal{NC}(J),\mathbb{K})$ to $Lin(T(J),\mathbb{K})$ to lift the equation $\tau_\mathrm{Z}=\varepsilon+\tau_\mathrm{W}\prec \tau_\mathrm{Z}$ relating full and connected Green's functions to non-crossing partitions, that is, to $Lin(\mathcal{NC}(J),\mathbb{K})$. 

We introduce for that reason a so-called ``standard section''. In free probabilities, this leads to a new presentation of the classical M\"obius-inversion type relations between free moments and free cumulants \cite{EP2}.  Other choices of sections are theoretically possible, that would lead to other lifts to non-crossing partitions of the relations between planar full and connected Green's functions. However, in planar QFT, the choice of a section is governed by physics and seems to be essentially unique from this point of view: other sections than the ``standard section'' to be introduced now would be probably pointless. 

\begin{defn}
Let $\kappa$ be a unital map from $T(J)$ to $\mathbb{K}$. We call the linear form $sd(\kappa)$  on $\mathcal{NC}(J)$ the standard section of $\kappa$, defined by
$$
	sd(\kappa)(L (j_{i_1}\cdots j_{i_n})):=\kappa(j_{i_1}\cdots j_{i_n}),
$$
if $L$ is the trivial non-crossing partition ($L =[n]$), and zero else.
\end{defn}

For an arbitrary non-crossing partition $L=\{L_1,\ldots ,L_k\}$ of $[n]$, we write $\kappa^L(j_{i_1}\cdots j_{i_n}):=\prod_{i=1}^k\kappa(j_{L_i})$. We write $\phi$ for the solution to $\phi=\varepsilon +\kappa\prec\phi$ with the notation of Theorem \ref{thm:Gg}.

\begin{prop}\label{keyrell}
The solution $\Psi$ of the equation $\Psi=\varepsilon +sd(\kappa)\prec\Psi$ in $Lin(\bar T(\mathcal{NC}(J)),\mathbb{K})$ satisfies the identity:
\begin{equation}
\label{principEq}
	\Psi(Sp(j_{i_1}\cdots j_{i_n}))=\sum\limits_{L\in NC_n}\kappa^L(j_{i_1}\cdots j_{i_n})=\phi(j_{i_1}\cdots j_{i_n}),
\end{equation}
More precisely, for an arbitrary non-crossing partition $L\in NC_n$ we have that
\begin{equation}
\label{fundEqn}
\Psi(L(j_{i_1}\cdots j_{i_n}))=\kappa^L(j_{i_1}\cdots j_{i_n}).
\end{equation}
\end{prop}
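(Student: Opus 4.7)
My plan is to prove the identity \eqref{fundEqn} by induction on $n$ (using \eqref{principEq} as an immediate corollary), with the key observation being that the fixed point equation $\Psi = \varepsilon + sd(\kappa) \prec \Psi$ simplifies drastically because $sd(\kappa)$ vanishes on every non-crossing partition except the trivial single-block one.

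First I would unfold $\Psi(L(j_{i_1}\cdots j_{i_n})) = (sd(\kappa) \prec \Psi)(L(j_{i_1}\cdots j_{i_n}))$ for $n \geq 1$ using \eqref{HANCprec+}. The sum ranges over admissible decompositions $L = Q \coprod_{\mathrm{adm}} U$ with $1 \in Q_1$, but $sd(\kappa)(st(Q))$ is zero unless $st(Q)$ is the trivial (one-block) non-crossing partition, hence unless $Q$ itself consists of a single block. Since $1$ must lie in $Q$, that block is forced to be $L_1$, the block of $L$ containing $1$. One then checks that $Q = \{L_1\}$ is always admissible: the condition $L_1 \not<_L T_q$ holds for every other block $T_q$ because $1 \in L_1$ prevents $L_1$ from being nested inside anything.

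Next I would identify the residual components. By the non-crossing property, each block $L_i$ with $i>1$ is contained in exactly one of the open intervals $I_1, \ldots, I_p$ lying between consecutive elements of $L_1 = \{a_1 = 1 < a_2 < \cdots < a_p\}$ (including the final interval after $a_p$). Hence the connected components $J_j^{L,\{L_1\}}$ of $[n] - L_1$ are the $I_j$, and $st(J_j^{L,\{L_1\}})$ decorated by $j_{I_j}$ is precisely the standardization of the restricted non-crossing partition $L|_{I_j}$. Substituting this into the recursion, and using that $\Psi$ is a character on $\bar T(\mathcal{NC}(J))$ (Theorem~\ref{thm:Gg2}), one obtains
$$
\Psi(L(j_{i_1}\cdots j_{i_n})) = \kappa(j_{L_1}) \cdot \prod_{j=1}^{p} \Psi\bigl(st(L|_{I_j})(j_{I_j})\bigr).
$$
The inductive hypothesis applied to each factor (each $|I_j| < n$) gives $\Psi(st(L|_{I_j})(j_{I_j})) = \prod_{L_i \subseteq I_j} \kappa(j_{L_i})$, and the product telescopes into $\prod_{i=1}^{k}\kappa(j_{L_i}) = \kappa^L(j_{i_1}\cdots j_{i_n})$, which proves \eqref{fundEqn}.

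For \eqref{principEq} I would sum \eqref{fundEqn} over all $L \in NC_n$ using the definition of $Sp$, yielding the middle equality. The final equality $\sum_{L \in NC_n}\kappa^L(j_{i_1}\cdots j_{i_n}) = \phi(j_{i_1}\cdots j_{i_n})$ follows by applying Theorem~\ref{Thm:commut}: a direct computation shows $Sp^\ast(sd(\kappa)) = \kappa$ on $T(J)$ (only the trivial partition survives evaluation of $sd(\kappa)$ against $Sp(j_{i_1}\cdots j_{i_n})$), while $Sp^\ast(sd(\kappa))$ vanishes on higher tensor powers since $sd(\kappa)$ does, so $Sp^\ast(sd(\kappa))$ agrees with $\kappa$ as infinitesimal characters. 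By the uniqueness in Theorem~\ref{thm:Gg}, $Sp^\ast(\Psi) = \phi$. The main obstacle is the combinatorial bookkeeping in the inductive step: making the identification of $J_j^{L,\{L_1\}}$ with the restriction $L|_{I_j}$ precise, and verifying that admissibility truly reduces the sum to the single term $Q=\{L_1\}$.
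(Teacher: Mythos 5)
Your proposal is correct and follows essentially the same route as the paper: an induction on $n$ for \eqref{fundEqn} driven by the vanishing of $sd(\kappa)$ on all non-trivial non-crossing partitions (so that only $Q=\{L_1\}$ survives in $\Delta_\prec$), combined with the multiplicativity of $\Psi$, and then Theorem~\ref{Thm:commut} together with $Sp^\ast(sd(\kappa))=\kappa$ and uniqueness to identify $Sp^\ast(\Psi)$ with $\phi$. If anything, your explicit identification of the residual connected components of $[n]-L_1$ with the restrictions $L|_{I_j}$ is slightly more careful than the paper's shorthand $st(L_2)(j_{L_2})|\cdots|st(L_k)(j_{L_k})$, but the argument is the same.
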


The identity $\Psi(Sp(j_{i_1}\cdots j_{i_n}))=\phi(j_{i_1}\cdots j_{i_n})$ follows from Theorem \ref{Thm:commut} since 
$$
	\Psi\circ Sp=Sp^\ast(\Psi)=\varepsilon +Sp^\ast\circ sd(\kappa)\prec Sp^\ast(\Psi)=\varepsilon +\kappa\prec Sp^\ast(\Psi),
$$
from which we obtain (by unicity of the solution) that $Sp^\ast(\Psi)=\phi$ on $T(L)$.

The second identity, from which $\Psi(Sp(j_{i_1}\cdots j_{i_n}))=\sum\limits_{L\in NC_n}\kappa^L(j_{i_1}\cdots j_{i_n})$ is deduced, follows by induction on $[n]$. Let us assume that identity (\ref{fundEqn}) holds for non-crossing partitions of $[p]$, $p<n$. We get:
$$
	\Psi(L(j_{i_1}\cdots j_{i_n}))
	    = \sum\limits_{Q\!\coprod\limits_{\rm{adm}}\!P=L}
	    sd(\kappa)
	    \big(st(Q)(j_Q)\big)\Psi\big(st(J_1^{L,Q})(j_{J_1^{L,Q}})|\cdots |st(J_{k(L,Q}^{L,Q})(j_{J_{k(L,Q)}^{L,Q}})\big).
$$
However, since $sd(\kappa)$ vanishes on all non-crossing partitions except the trivial ones, terms on the right hand-side vanish except when $Q$ is the component of $L$ containing 1, and the expression reduces finally to
\begin{eqnarray*}
	\Psi(L(j_{i_1}\cdots j_{i_n})
		&=& sd(\kappa)(st(L_1)(j_{L_1})\Psi(st(L_2)(j_{L_2})| \cdots | st(L_k)(j_{L_k}))\\
		&=& \kappa( j_{L_1})\Psi(st(L_2)(j_{L_2}| \cdots | st(L_k)(j_{L_k})),
\end{eqnarray*}
where $L=\{L_1,\ldots ,L_k\}$. From the induction hypothesis and the multiplicativity of $\Psi$, we get the expected identity
$$
	\Psi(L(j_{i_1}\cdots j_{i_n})=\prod\limits_{i=1}^k\kappa(j_{L_i}).
$$

\begin{cor}
Let $\kappa$ be the infinitesimal character on $T(J)$ defined by the connected Green's function
$$
	\kappa(j_{i_1} \cdots j_{i_n}):=\mathrm{W}^{(n)}_{j_{i_1} \cdots j_{i_n}}.
$$
We have: $sd(\kappa)(L(j_{i_1}\cdots j_{i_n}))=\mathrm{W}^{(n)}_{j_{i_1} \cdots j_{i_n}}$ if $L=[n]$ and is equal to $0$ else. Moreover,
the solution to $\Psi=\varepsilon +sd(\kappa)\prec\Psi$ is the non-crossing Green's function, that is
$$
	\Psi(L(j_{i_1}\cdots j_{i_n}))=\mathrm{L}^{(n)}_{j_{i_1}\cdots j_{i_n}}.
$$
Besides, we recover from the properties of the splitting map $Sp$ the relation linking full and non-crossing Green's functions
$$
	\mathrm{Z}^{(n)}_{j_{i_1} \cdots j_{i_n}}
				=\Psi(Sp(j_{i_1}\cdots j_{i_n}))
				=\sum\limits_{L\in NC_n}\mathrm{L}^{(n)}_{j_{i_1} \cdots  j_{i_n}}.
$$
\end{cor}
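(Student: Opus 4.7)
The proof is largely a direct specialization of Proposition \ref{keyrell} to the case where $\kappa$ is the connected Green's function infinitesimal character, so my plan is to verify the three assertions in turn, each by invoking a previously established result.

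First I would dispatch the claim about $sd(\kappa)$: it is immediate from the definition of the standard section, which assigns to a decorated non-crossing partition $L(j_{i_1}\cdots j_{i_n})$ the value $\kappa(j_{i_1}\cdots j_{i_n}) = \mathrm{W}^{(n)}_{j_{i_1}\cdots j_{i_n}}$ when $L$ is the trivial partition $[n]$ and zero otherwise. No further computation is required.

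Next, for the identification of $\Psi$, I would apply Proposition \ref{keyrell} directly. That proposition guarantees that the solution of $\Psi = \varepsilon + sd(\kappa) \prec \Psi$ satisfies $\Psi(L(j_{i_1}\cdots j_{i_n})) = \kappa^L(j_{i_1}\cdots j_{i_n})$ for every non-crossing partition $L=\{L_1,\ldots,L_k\}$ of $[n]$. By our choice of $\kappa$,
$$
\kappa^L(j_{i_1}\cdots j_{i_n}) = \prod_{i=1}^k \kappa(j_{L_i}) = \prod_{i=1}^k \mathrm{W}^{L_i},
$$
which is precisely the definition of the non-crossing Green's function $\mathrm{L}^{(n)}_{j_{i_1}\cdots j_{i_n}}$ given at the beginning of Section \ref{subsect:ncpart}. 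This yields the second claim.

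Finally, for the relation between planar full and non-crossing Green's functions, I would combine the two formulas from Proposition \ref{keyrell}: from the first part of the proposition we know $\Psi(Sp(j_{i_1}\cdots j_{i_n})) = \sum_{L\in NC_n}\kappa^L(j_{i_1}\cdots j_{i_n})$, and by what we just established each summand equals $\mathrm{L}^{(n)}_{j_{i_1}\cdots j_{i_n}}$. To conclude that this sum equals $\mathrm{Z}^{(n)}_{j_{i_1}\cdots j_{i_n}}$ I would invoke Theorem \ref{Thm:commut}, which ensures that $Sp^*(\Psi)$ solves the same fundamental fixed point equation $Sp^*(\Psi) = \varepsilon + \kappa \prec Sp^*(\Psi)$ on $\bar T(T(J))$; by uniqueness of the solution (Theorem \ref{thm:Gg}) and the fact that $\tau_\mathrm{Z} = \varepsilon + \tau_\mathrm{W} \prec \tau_\mathrm{Z}$ with $\tau_\mathrm{W} = \kappa$, we obtain $Sp^*(\Psi) = \tau_\mathrm{Z}$. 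Evaluating at $j_{i_1}\cdots j_{i_n}$ gives the desired identity. No real obstacle is anticipated; the only point requiring slight care is to check that the restriction of $\tau_\mathrm{W}$ to $T(J)$ (viewed as an infinitesimal character on $\bar T(T(J))$) coincides with $\kappa$ and that both solve the same fixed-point equation, so that the uniqueness clause of Theorem \ref{thm:Gg} legitimately identifies them.
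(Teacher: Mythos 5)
Your proof is correct and follows essentially the same route the paper intends: the Corollary is stated as an immediate specialization of Proposition \ref{keyrell} to $\kappa=\tau_{\mathrm{W}}$, and your three steps (definition of the standard section, the identity $\Psi(L(\cdot))=\kappa^L(\cdot)=\prod_i\mathrm{W}^{L_i}=\mathrm{L}^{(n)}$, and the identification $Sp^\ast(\Psi)=\phi=\tau_{\mathrm{Z}}$ via Theorem \ref{Thm:commut} and the uniqueness clause of Theorem \ref{thm:Gg}) are exactly the details the paper leaves implicit.
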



\subsection{From full to connected (non-planar)}
\label{non-planar-case}

Recall that in the non-planar case, the classical relation between full and connected Green's functions reads
\begin{equation}
\label{lpfc}
	Z(j)=\exp\big(W(j)\big).
 \end{equation}
By functional derivation we therefore obtain
$$
	\frac\partial{\partial j_1}Z(j)=(\frac\partial{\partial j_1}W(j))Z(j).
$$
Since the underlying algebra is commutative, the usual Leibniz rule of differential calculus applies and we get
\begin{align*} 
	Z^{(n)}_{j_{i_1} \cdots j_{i_n}} 	&= \left. \frac{\partial^n}{\partial_{j_{i_n}} \cdots \partial_{j_{i_1}}}\right |_{j=0} \exp\big( W(j) \big) \\
							&= \left. \frac{\partial^{n-1}}{\partial_{j_{i_n}} \cdots \partial_{j_{i_{2}}}} \Big( (\partial_{j_{i_1}} W(j)) Z[j]  \Big)\right|_{j=0}\\
							&=\sum_{S=\{1,s_2\ldots,s_k\}\subset [n]}
								W^{(k)}_{j_{1}j_{i_{s_2}}\cdots j_{i_{s_k}}}
								Z^{(n - k)}_{j_{i_{2}} \cdots j_{i_{s_2 - 1}} j_{i_{s_2 + 1}} 
								\cdots  j_{i_{s_k - 1}} j_{i_{s_k + 1}} \cdots j_{i_{n}}}\\
							&= \sum\limits_{S=\{1=s_1,\ldots,s_k\}\subset [n]}
								W^{(k)}_{j_{i_{s_1}}\cdots j_{i_{s_k}}}Z^{(n-k)}_{j_{i_{l_1}}\cdots j_{i_{l_{n-k}}}},
\end{align*}
where $L=\{l_1,\ldots,l_{n-k}\}$ and $L\coprod S=[n]$. 

We will now show that the relation between non-planar full and connected Green's functions is naturally described in the context of the cocommutative unshuffle bialgebra $\bar T(J)$ with coproduct \eqref{deshufsplit}. Indeed, cocommutativity $\Delta^{\!\!\sh}_{\succ} = \tau\circ \Delta^{\!\!\sh}_{\prec}$ implies that, for arbitrary $\alpha, \beta \in T^*(J)$, we have
$$
	\alpha \prec \beta = \alpha \succ \beta,
$$
so that $\bar T^*(J)=Lin(\bar T(J),{\mathbb C})$ is a unital commutative shuffle algebra for the left and right half-convolution products $\prec$ and $\succ$ defined in \eqref{deshufleft} respectively \eqref{deshufright}. Let  now $\phi : \bar T(J) \to \mathbb{C}$ be a unital map in $\bar T^*(J)$, and consider the linear fixed point equation
\begin{equation}
\label{nonplanarGF}
	\phi = \iota + \tau \prec \phi.  
\end{equation}
Here, the map $\iota : \bar T(J) \to \mathbb{C}$ is the identity on ${\mathcal J}^{\otimes 0}$, and the null map on higher tensor powers of $\mathcal J$. Let $w=j_i \in J \hookrightarrow T(J)$ be a single letter different from the empty word. Then $\Delta^{\!\!\sh}_{\prec}(j_i)=j_i \otimes \un$. With $\phi(\un)=1$ we find
$$
	\phi(j_1) = \tau (j_1).
$$
Next we look at the word $w=j_1j_2 \in T_2(J)$. Recall that the left half-unshuffle $\Delta^{\!\!\sh}_{\prec}(j_1j_2)=j_1j_2 \otimes \un + j_1 \otimes j_2$, such that
$$
	\phi(j_1j_2) = \tau(j_1j_2) + \tau(j_1)\tau(j_2).
$$
For the order three word $w=j_1j_2j_3 \in T_3(J)$ the left half-unshuffle yields
$$
	\phi(j_1j_2j_3) = \tau(j_1j_2j_3) 
						+ \tau(j_1)\tau(j_2j_3)
						+ \tau(j_1j_3)\tau(j_2)  
						+ \tau(j_1j_2)\tau(j_3) 
						+ \tau(j_1)\tau(j_2) \tau(j_3).
$$
And for the order four word $w=j_1j_2j_3j_4 \in T_4(J)$, the left half-unshuffle gives 
\begin{align*}
	\phi(j_1j_2j_3j_4) &=  \tau( j_{1} j_{2} j_{3}j_{4}) 
					+ \tau(j_{1} j_{2}) \phi(j_{3}j_{4})
					+ \tau(j_{1} j_{3}) \phi(j_{2}j_{4})
					+ \tau(j_{1} j_{4}) \phi(j_{2}j_{3})\\
				   &\quad + \tau(j_1)\phi(j_{2}j_{3}j_4)
				   	+ \tau(j_{1} j_{2}j_3) \phi(j_{4})
					+ \tau(j_{1} j_{2}j_4) \phi(j_{3})
					+ \tau(j_{1} j_{3}j_4) \phi(j_{2})\\
				   &= \tau( j_{1} j_{2} j_{3}j_{4}) 
					+ \tau(j_{1} j_{2}) \tau(j_{3}j_{4}) 
					+ \tau(j_{1} j_{4}) \tau(j_{2}j_{3}) 
					+ \tau(j_{1} j_{3}) \tau(j_{2}j_{4}) \\ 
				   &\quad   + \tau(j_{1} j_{3}) \tau(j_{2})\tau(j_{4})
				   	+ \tau(j_{1} j_{2}) \tau(j_{3})\tau(j_{4}) 
					+ \tau(j_{1} j_{4}) \tau(j_{2})\tau(j_{3}) \\
				   &\quad + \tau(j_3j_4)\tau(j_1)\tau(j_2)
						+ \tau(j_2j_3)\tau(j_1)\tau(j_4)  
						+\tau(j_2j_4) \tau(j_1)\tau(j_3) 
						+ \tau(j_1)\tau(j_2)\tau(j_3) \tau(j_4)\\
				  &\quad	
				  	+ \tau(j_1)\tau(j_{2}j_{3}j_4)
					+ \tau(j_{1} j_{2}j_3) \tau(j_{4})
					+ \tau(j_{1} j_{2}j_4) \tau(j_{3})
					+ \tau(j_{1} j_{3}j_4) \tau(j_{2}).
\end{align*}

Next, observe that by defining the linear map $\tau(w):=\tau_W(w):=W^{(|w|)}_w$ for $w=j_{i_1}\cdots j_{i_n} \in T_n(J)$ and $\tau(\un):=0$, we obtain that the above relations equal the relations \eqref{nonplanar1} between full and connected non-planar Green's functions up to order four, i.e., $\phi(w)=\tau_Z(w):=Z^{(|w|)}_W$. 

Hence, analog to the planar case, we obtain that the relation between non-planar full and connected Green's functions, $\tau_{Z}$ respectively $\tau_{W}$, is encoded by the linear fixed point equation.
\begin{equation}
\label{npFullConGF}
	\tau_Z = \iota + \tau_W \prec \tau_Z,
\end{equation}
in $\bar{T}^*(J)$. Indeed, by the very definition of the left half-shuffle product $\prec$, this equation gives for any word  $w=j_{i_1}\cdots j_{i_n} \in T_n(J)$
\begin{align}
		Z^{(n)}_{j_{i_1}\ldots j_{i_n}}	=\tau_{Z}(w)
								&=(\tau_{W} \prec \tau_{Z}) (w) \nonumber\\
								&=\sum_{S=\{1=s_1,\ldots,s_k\}\subset [n]}
	W^{(k)}_{j_{i_{s_1}}j_{i_{s_2}}\cdots j_{i_{s_k}}}
		Z^{(n - k)}_{j_{i_{2}} \cdots j_{i_{s_2 - 1}} j_{i_{s_2 + 1}} \cdots  j_{i_{s_k - 1}} j_{i_{s_k + 1}} \cdots j_{i_{n}}}. \label{unshuffleRelation2}		
\end{align}

The solution to this fixed point equation is given by exponentials.

\begin{thm}\label{thm:expGF}
Expanding the linear fixed point equation \eqref{npFullConGF} yields
\begin{equation}
\label{Solution1}
	\tau_Z=\exp^{\prec}\big(\tau_W\big). 
\end{equation}
A closed solution is given in terms of the proper exponential map defined with respect to the commutative shuffle product $\shuffle : \bar  T^*(J) \otimes \bar  T^*(J) \to \bar  T^*(J)$
\begin{equation}
\label{Solution2}
	\tau_Z=\exp^{\shuffle}\big(\tau_W\big),
\end{equation}
which is the convolution product in $\bar  T^*(J)$, defined in terms of the left and right half-shuffles defined in \eqref{deshufleft} respectively \eqref{deshufright}, i.e., $\shuffle = \prec + \succ$.  
\end{thm}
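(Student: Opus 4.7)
The plan is to derive both formulas directly from the linear fixed-point equation \eqref{npFullConGF} using the shuffle-algebra formalism already set up for $\bar T^*(T(J))$, but now applied to the commutative shuffle algebra $\bar T^*(J)$. The first identity is essentially a formal book-keeping statement; the second exploits commutativity, which is available here and not in the planar case.

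For \eqref{Solution1} I would observe that \eqref{npFullConGF} is exactly the instance of the generic fixed-point equation \eqref{KeyEquation1} with $x = \tau_W$ and $X = \tau_Z$, once one notes that the map $\iota$ is the counit of $\bar T(J)$ and plays the role of the shuffle-algebra unit $\un$ specified in \eqref{unit-dend}. Since $\tau_W(\un) = 0$ and $\tau_W$ restricted to $T_n(J)$ lives in the degree-$n$ component, iterated substitution of $\tau_Z$ on the right-hand side of \eqref{npFullConGF} terminates on each graded piece and produces
$$
\tau_Z = \iota + \tau_W + \tau_W \prec \tau_W + \tau_W \prec (\tau_W \prec \tau_W) + \cdots = \exp^{\prec}(\tau_W),
$$
by the very definition of $\exp^{\prec}$ recalled before \eqref{KeyEquation1}.

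To pass from \eqref{Solution1} to \eqref{Solution2} I would invoke Remark \ref{rmk:cocom}, which says $\tau \circ \Delta^{\!\!\shuffle}_\succ = \Delta^{\!\!\shuffle}_\prec$. Dualizing, this becomes $\alpha \succ \beta = \beta \prec \alpha$ for all $\alpha, \beta \in \bar T^*(J)$, so $\bar T^*(J)$ is a \emph{commutative} shuffle algebra. In any such algebra the pre-Lie bracket $\alpha \vartriangleright \beta = \alpha \succ \beta - \beta \prec \alpha$ vanishes identically, hence the pre-Lie Magnus expansion of Theorem \ref{thm:ExpSolution} collapses to $\Omega'(\tau_W) = \tau_W$. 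That theorem (whose proof depends only on the shuffle-algebra axioms and so transports verbatim from $\bar T^*(T(J))$ to $\bar T^*(J)$) then yields
$$
\tau_Z = \exp^{\star}\!\big(\Omega'(\tau_W)\big) = \exp^{\star}(\tau_W) = \exp^{\shuffle}(\tau_W),
$$
since $\star = \prec + \succ = \shuffle$ by \eqref{dendassoc}.

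If one prefers a self-contained derivation avoiding the Magnus machinery, the same conclusion follows from the elementary identity $x^{\shuffle n} = n!\, x^{\prec n}$, valid in every commutative shuffle algebra. This in turn reduces, by a short induction on $n$, to the relation $x^{\prec n} \prec x = n\, x^{\prec(n+1)}$: axiom \eqref{A1} gives $x^{\prec n} \prec x = x \prec (x^{\prec(n-1)} \shuffle x)$, and commutativity together with the inductive hypothesis identify the bracketed factor with $n\, x^{\prec n}$. Summing then yields $\exp^{\shuffle}(\tau_W) = \sum_{n \geq 0} \tau_W^{\shuffle n}/n! = \sum_{n \geq 0} \tau_W^{\prec n} = \exp^{\prec}(\tau_W)$. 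There is no genuine obstacle in either route; the only point that deserves care is verifying that the results stated for $\bar T(T(J))$ in Section~2 are in fact shuffle-algebraic in nature and so apply equally to the cocommutative case.
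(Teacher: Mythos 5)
Your proposal is correct and follows essentially the same route as the paper: iterate the fixed point equation for \eqref{Solution1}, then use cocommutativity to get a commutative shuffle algebra in which the pre-Lie product vanishes so that $\Omega'(\tau_W)=\tau_W$ in Theorem \ref{thm:ExpSolution}, together with the identity $a^{\shuffle n}=n!\,a^{\prec n}$. The paper's proof is terser but invokes exactly these two ingredients; your induction establishing $x^{\prec n}\prec x = n\,x^{\prec (n+1)}$ is a correct filling-in of the detail the paper leaves as "easy to verify".
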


\begin{proof}
The first statement is obvious and follows from comparing the expansions of \eqref{Solution1} with the iteration of \eqref{npFullConGF}. The second statement follows from the fact that the unshuffle coproduct \eqref{unshuffle} is cocommutative. In the resulting commutative shuffle algebra Theorem \ref{thm:ExpSolution} says that $\Omega'(\tau_W)=\tau_W$. Eventually, it is easy to verify that the shuffle algebra identity $a^{\shuffle n} = n!\ a^{\prec{n}}$ holds. 
\end{proof}	

Note that the actual statement is the interesting identity of exponentials
$$
	Z^{(n)}_{j_{i_1} \cdots j_{i_n}} 	
	= \left. \frac{\partial^n}{\partial_{j_{i_n}} \cdots \partial_{j_{i_1}}}\right |_{j=0} \exp\big( W(j) \big) 
	= \exp^{\shuffle}(\tau_W)(j_{i_1} \cdots j_{i_n}),
$$
which follows immediately from the Leibniz rule.


\subsection{Partition Green's function (non-planar)}
\label{sub:partition}

We saw that the relation between full and connected Green's functions can be refined in the planar case by introducing non-crossing Green's functions. The relations between full and connected Green's functions can be refined similarly in the classical, i.e., non-planar, case. The point of view we develop here seems to be new. We sketch the arguments that follow the same lines as for the planar case.

A Feynman graph $\Gamma$ with external sources $j_1,\ldots, j_n$ in a given QFT splits into $n\geq k\geq 1$ connected components. Grouping external sources according to their belonging to a common connected component defines a partition $L=\{L_1,\ldots,L_k\}$ of $[n]$. We assume from now on that the $L_i$ are ordered according to their minimal elements, so that $1\in L_1$. Summing over all partitions of $[n]$ (the set of partitions of $[n]$ is written $P_n$), we get:
$$
	Z^{(n)}_{j_1 \cdots j_n}
	=\sum\limits_{L\in P_n}\prod\limits_{i=1}^kW^{|L_i|}_{j_{l_1^i}\cdots j_{l_{|L_i|}^i}},
$$
where $L_i=\{l_1^i, \ldots , l_{|L_i|}^i\}$. We abbreviate $W^{|L_i|}_{j_{l_1^i}\cdots j_{l_{|L_i|}^i}}$ to $W^{L_i}$ from now on and define the partitioned Green's functions by
$$
	L^{(n)}_{j_1 \cdots j_n}:=\prod\limits_{i=1}^kW^{|L_i|}_{j_{l_1^i}\cdots j_{l_{|L_i|}^i}},
$$ 
so that
$$
	Z^{(n)}_{j_1 \cdots j_n}=\sum\limits_{L\in P_n}L^{(n)}_{j_1 \cdots j_n}.
$$
Notice that we use deliberately a similar notation as for the planar case since it will always be clear from the context that one works with a planar or non-planar theory.

The corresponding bialgebra of partitions is defined as follows. We write $P$ (resp. $\bar{P}$) for the linear span of the $P_n$, $n\in\Nb^\ast$ (resp. in $\Nb$). The product is the shifted concatenation of partitions: for $L=\{L_1,\ldots,L_k\}$ a partition of $[n]$ and $K=\{K_1,\ldots,K_l\}$ a partition of $[m]$, $L\cdot K$ is the partition $\{L_1,\ldots,L_k,K_1+n,\ldots,K_l+n\}$ of $[n+m]$ where, for a subset $S=\{s_1,\ldots,s_p\}$ of the integers, $S+n$ stands for $\{s_1+n,\ldots,s_p+n\}$.
The coproduct acting on partitions is simply the (standardized) unshuffling coproduct:
$$
	\Delta(L)=\Delta(\{L_1,\ldots,L_k\})=\sum\limits_{I\coprod J=[n]}st(L_I)\otimes st(L_J),
$$
where $L_I=\{L_{i_1},\ldots,L_{i_k}\}$ for $I=\{i_1,\ldots,i_k\}$. This coproduct is a map of algebras from $P$ to $P\otimes P$, and it is coassociative:
$$
	(\Delta\otimes \id)\otimes\Delta (L)=
	\sum\limits_{I\coprod J\coprod K=[n]}st(L_I)\otimes st(L_J)\otimes st(L_K)=(\id \otimes \Delta)\circ\Delta (L).
$$
It splits naturally into two half-coproducts
$$
	\Delta_\prec (L):=\sum\limits_{I\coprod J=[n] \atop 1\in I}st(L_I)\otimes st(L_J)
$$ 
and $\Delta_\succ:= \Delta - \Delta_\prec$.

Decorated partitions are defined exactly as in the case of non-crossing partitions, and the previous constructions hold in the presence of decorations as well. We finally get:

\begin{thm}
The algebra ${P}$ and its decorated version ${P}(J)$ equipped with the concatenation product and the coproduct $\Delta = \Delta_\prec + \Delta_\succ$ define an unshuffle bialgebras.
\end{thm}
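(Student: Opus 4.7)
The plan is to parallel the verification already carried out for $\bar T(J)$ with its cocommutative unshuffle coproduct $\Delta^{\!\!\shuffle}$, exploiting the fact that a partition $L = \{L_1, \ldots, L_k\}$ can be viewed as a ``word in its blocks'' with the first block $L_1$ playing the role of the distinguished letter containing $1$. The essential combinatorial content of $\Delta(L) = \sum_{I \coprod J = [k]} st(L_I) \otimes st(L_J)$ is exactly the unshuffling coproduct acting on the sequence $L_1 \cdots L_k$ viewed as letters; standardization is a bijective relabelling that commutes with all extractions and therefore does not affect any identity that holds at the level of $\bar T(B)$, where $B$ denotes the set of blocks. The splitting $\Delta = \Delta_\prec + \Delta_\succ$ corresponds on this tensor-algebra picture to selecting whether the letter $L_1$ (the block containing $1$) sits in the left or right factor, mirroring \eqref{deshufleft}--\eqref{deshufsplit}.

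For the coalgebra axioms \eqref{C1}--\eqref{C3}, I would expand both sides of each identity directly as sums over three-fold set-partitions $[k] = A \coprod B \coprod C$ of the index set of blocks of $L$, tracked by standardization of the corresponding sub-partitions $st(L_A), st(L_B), st(L_C)$. The identity \eqref{C1} then amounts to saying that both iterations sum over triples $(A,B,C)$ with $1 \in A$, the identity \eqref{C2} to $1 \in A$ on the left and $1 \in B$ on the right (both extracting $L_A$ first and $L_B$ second), and \eqref{C3} to $1 \in C$. Each identity reduces to a purely bookkeeping statement on three disjoint subsets of $[k]$, with standardization behaving naturally under taking sub-partitions of sub-partitions because it commutes with restriction. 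Along the way one sees that $\tau \circ \Delta_\succ = \Delta_\prec$, so the unshuffle coalgebra is in fact cocommutative, in perfect analogy with Remark \ref{rmk:cocom} for $\bar T(J)$.

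For the bialgebra compatibility \eqref{D1}--\eqref{D2}, the decisive observation is that in the shifted concatenation $L \cdot K$ the integer $1$ lies always in a block coming from $L$ (since $K$'s elements are shifted by $n = |[n]|$). Hence a splitting of the blocks of $L \cdot K$ into two parts with $1$ in the left part decomposes uniquely and freely into a splitting of the blocks of $L$ with $1$ on the left, times an arbitrary splitting of the blocks of $K$. This is precisely $\Delta_\prec(L \cdot K) = \Delta_\prec(L) \cdot \Delta(K)$, and the same reasoning with $1$ on the right gives \eqref{D2}. Standardization is compatible with concatenation in the obvious shifted sense, so no additional work is needed.

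The decorated case $P(J)$ requires no new argument: the external source labels $j_{i_1}, \ldots, j_{i_n}$ are carried passively along under restriction, shifted concatenation and standardization, so the same identities go through verbatim. The only mildly delicate step is keeping the standardization consistent across the three-fold iterations in \eqref{C1}--\eqref{C3}, but this is the same book-keeping as in the classical tensor case and presents no genuine obstacle; the novelty here is merely that the ``letters'' are themselves blocks of a partition rather than elements of an alphabet.
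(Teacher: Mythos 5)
Your proposal is correct and follows essentially the route the paper intends (the paper offers no separate proof, asserting only that the arguments run parallel to the planar and non-crossing cases): you reduce everything to the unshuffle coproduct on the word of blocks, using that standardization commutes with restriction and with shifted concatenation, and that the element $1$ always lies in a block coming from the left factor of a shifted concatenation, which yields \eqref{D1}--\eqref{D2}. One bookkeeping detail worth stating precisely: in \eqref{C2} \emph{both} iterated coproducts reduce to the sum over ordered triples $A\coprod B\coprod C$ of block indices with the block containing $1$ indexed in the \emph{middle} subset $B$ (and $A$, $C$ nonempty), so the matching condition is $1\in B$ on both sides rather than ``$1\in A$ on the left and $1\in B$ on the right.''
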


Using \it mutatis mutandis \rm the same notations as for the planar case, the splitting map $Sp$ is now defined as a map from $T(J)$ to ${P}(J)$:
$$
	Sp(j_{i_1}\cdots j_{i_n}):=\sum\limits_{L\in P_n}L(j_{i_1}\cdots j_{i_n}),
$$
and we get  the following proposition (the proof of which is left to the reader):

\begin{prop}
The splitting map $Sp$ is a map of unshuffle bialgebras from $\bar T(J)$ to $\bar {P}(J)$.
\end{prop}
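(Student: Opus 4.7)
The plan is to mirror the proof of the planar analog (Theorem \ref{thm:splitHom}), exploiting the fact that all non-crossing constraints disappear in the cocommutative setting. The verification splits into three parts: that $Sp$ is a coalgebra morphism, that it respects the two half-coproducts, and that it is compatible with the algebra structures.

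I will start with the coalgebra morphism, which carries the combinatorial content. For a word $w=j_{i_1}\cdots j_{i_n}$, write $L=\{L_1,\ldots,L_{k(L)}\}$ with $1\in L_1$ and unfold
\[
\Delta\circ Sp(w) \;=\; \sum_{L\in P_n}\sum_{I\coprod J=[k(L)]} st(L_I)(j_{\bigcup_{i\in I}L_i})\otimes st(L_J)(j_{\bigcup_{j\in J}L_j}).
\]
The central observation is the bijection between the summation data $(L,\,I\coprod J)$ and triples $(S\subseteq[n],\,P_1\in P(S),\,P_2\in P([n]\setminus S))$: from $(L,I,J)$ set $S:=\bigcup_{i\in I}L_i$, $P_1:=\{L_i\}_{i\in I}$, $P_2:=\{L_j\}_{j\in J}$; conversely any such triple reassembles uniquely into a partition $L$ of $[n]$ together with the bipartition of its blocks recording which blocks came from $P_1$ and which from $P_2$. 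After reindexing and standardization, the sum collapses to $\sum_{S\subseteq[n]}Sp(j_S)\otimes Sp(j_{[n]\setminus S})$, which is exactly $(Sp\otimes Sp)\circ\Delta^{\!\!\shuffle}(w)$ by \eqref{unshuffle1}.

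Next I will handle the half-coproducts via the same bijection. On the partition side, $\Delta_\prec$ imposes $1\in I$, i.e.\ the block $L_1$ containing the element $1$ is sent to the first tensor factor; under the bijection this translates to $1\in S$, which is precisely the condition defining $\Delta^{\!\!\shuffle}_\prec$ in \eqref{deshufleft}. Hence $\Delta_\prec\circ Sp=(Sp\otimes Sp)\circ\Delta^{\!\!\shuffle}_\prec$, and the identity for $\Delta_\succ$ follows by subtraction from the total coproduct identity. The compatibility with the products on $\bar T(J)$ and $\bar P(J)$ is then verified directly from the respective definitions of concatenation and shifted concatenation, in parallel to the bar-multiplicative extension used in the planar case. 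The hard part will be the careful bookkeeping around the central bijection, specifically verifying that the restriction-plus-standardization map $(L,I,J)\mapsto(S,st(P_1),st(P_2))$ is a well-defined bijection commuting with the decoration by sources $j_{i_1},\ldots,j_{i_n}$; once this is granted, the rest unwinds cleanly from the planar analog.
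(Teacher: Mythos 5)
Your central bijection $(L,\,I\coprod J)\leftrightarrow(S,\,P_1,\,P_2)$ is correct and does establish $\Delta\circ Sp=(Sp\otimes Sp)\circ\Delta^{\!\!\shuffle}$ and $\Delta_\prec\circ Sp=(Sp\otimes Sp)\circ\Delta^{\!\!\shuffle}_\prec$ (with the $\succ$ identity by subtraction): every block of $L=P_1\cup P_2$ lies entirely in $S$ or entirely in $[n]\setminus S$, the block containing $1$ goes to the left factor exactly when $1\in S$, and standardization is compatible with the decorations. Since the paper explicitly leaves this proof to the reader, this is the expected argument, and it is precisely the part that the subsequent statements (the dual map $Sp^\ast$ being a shuffle-algebra morphism, and the fixed-point computation for partitioned Green's functions) actually use. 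Your reading of the summation as running over $I\coprod J=[k(L)]$ rather than $[n]$ is also the correct interpretation of the paper's formula.

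The gap is your last step. Compatibility with the products is not ``verified directly'' --- it fails. The source $\bar T(J)$ carries concatenation and the target $\bar P(J)$ carries shifted concatenation, and already $Sp(j_1)\cdot Sp(j_2)=\{\{1\},\{2\}\}(j_1j_2)$ while $Sp(j_1j_2)=\{\{1\},\{2\}\}(j_1j_2)+\{\{1,2\}\}(j_1j_2)$. More generally, a shifted concatenation of a partition of $[n]$ with a partition of $[m]$ never contains a block meeting both $\{1,\ldots,n\}$ and $\{n+1,\ldots,n+m\}$, whereas $Sp(ww')$ sums over \emph{all} partitions of $[n+m]$. The planar analogy you invoke does not transfer: there the source is the double tensor algebra $\bar T(T(J))$ and $Sp$ is bar-multiplicative by definition, so there is nothing to check; here the source is the single tensor algebra, which is generated as an algebra by the letters, so any multiplicative map agreeing with $Sp$ on letters would send every word to the discrete partition alone. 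The proposition is therefore only tenable as the assertion that $Sp$ is a morphism of counital unshuffle coalgebras --- which your bijection proves, and which is all that the rest of the section needs --- and your write-up should either restrict the claim to that, or explicitly flag that the product compatibility does not hold as literally stated.
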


The standard section of a unital map $\kappa$ from $T(J)$ to the scalars is defined by 
$$
	st(\kappa)(L(j_{i_1}\cdots j_{i_k})):=\kappa(j_{i_1}\cdots j_{i_k})
$$
if $L$ is the trivial partition $(L=[n]$), and zero else. For an arbitrary partition $L=\{L_1,\ldots,L_k\}$ of $[n]$ we write
$$
	\kappa^L(j_{i_1}\cdots j_{i_k}):=\prod\limits_{i=1}^k\kappa(j_{L_i}).
$$
Finally, we arrive at the next result.

\begin{prop}
Let $\kappa$ be the linear form on $T(J)$ defined by the connected Green's function
$$
	\kappa(j_{i_1}\cdots j_{i_n}):=W^{(n)}_{j_{i_1}\cdots j_{i_n}}.
$$
Then we have that: $sd(\kappa)(L(j_{i_1}\cdots j_{i_n}))=W^{(n)}_{j_{i_1}\cdots j_{i_n}}$ if $L=[n]$, and zero else. 
Moreover, the solution to linear fixed point equation 
$$	
	\Psi=\iota +sd(\kappa)\prec\Psi
$$ 
gives the partitioned Green's function, that is:
$$
	\Psi(L(j_{i_1}\cdots j_{i_n}))=L^{(n)}_{j_{i_1}\cdots j_{i_n}}.
$$
From the properties of the splitting map $Sp$ we recover the relation linking full and partitioned Green's functions:
$$
	Z^{(n)}_{j_{i_1}\cdots j_{i_n}}=\Psi(Sp(j_{i_1}\cdots j_{i_n}))
	=\sum\limits_{L\in P_n}L^{(n)}_{j_{i_1}\cdots j_{i_n}}.
$$
\end{prop}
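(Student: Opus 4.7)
The statement is the exact non-planar analogue of Proposition~\ref{keyrell} and its corollary, so my plan is to transport the argument line by line, replacing the unshuffle bialgebra $\bar T(\mathcal{NC}(J))$ by $\bar T(P(J))$ and the planar splitting map by its partition-theoretic counterpart. The first claim, that $sd(\kappa)(L(j_{i_1}\cdots j_{i_n}))$ equals $W^{(n)}_{j_{i_1}\cdots j_{i_n}}$ when $L=[n]$ and vanishes otherwise, is immediate from the definition of $sd$.

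For the second claim, I would prove by induction on $n$ that $\Psi(L(j_{i_1}\cdots j_{i_n}))=\prod_{i=1}^{k}W^{|L_i|}_{j_{L_i}}$ for every partition $L=\{L_1,\ldots,L_k\}\in P_n$ with $1\in L_1$. Evaluating $\Psi=\iota+sd(\kappa)\prec\Psi$ on $L(j_{i_1}\cdots j_{i_n})$ and using the formula for $\Delta_\prec$ on decorated partitions gives
\[
\Psi(L(j_{i_1}\cdots j_{i_n}))
=\sum_{I\coprod J=[k],\,1\in I}sd(\kappa)\bigl(st(L_I)(j_{L_I})\bigr)\,\Psi\bigl(st(L_J)(j_{L_J})\bigr).
\]
Since $sd(\kappa)$ is supported on trivial one-block partitions, only the term with $I=\{1\}$ survives, reducing the right-hand side to
\[
\kappa(j_{L_1})\cdot\Psi\bigl(st(\{L_2,\ldots,L_k\})(j_{L_2\cup\cdots\cup L_k})\bigr).
\]
The induction hypothesis applied to the partition $st(\{L_2,\ldots,L_k\})$, together with multiplicativity of $\Psi$ and the identification $\kappa(j_{L_1})=W^{|L_1|}_{j_{L_1}}$, yields the desired product. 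This is exactly $L^{(n)}_{j_{i_1}\cdots j_{i_n}}$.

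For the third claim, I would invoke the non-planar splitting map $Sp:\bar T(J)\to\bar T(P(J))$, which by the preceding proposition is an unshuffle bialgebra morphism. Dualising gives a shuffle algebra morphism $Sp^\ast$, and a direct computation shows $Sp^\ast(sd(\kappa))=\kappa=\tau_W$ (only the trivial partition in $Sp(j_{i_1}\cdots j_{i_n})=\sum_{L\in P_n}L(j_{i_1}\cdots j_{i_n})$ contributes to $sd(\kappa)$). Applying $Sp^\ast$ to $\Psi=\iota+sd(\kappa)\prec\Psi$ therefore yields $Sp^\ast(\Psi)=\iota+\tau_W\prec Sp^\ast(\Psi)$, which is precisely the fixed-point equation \eqref{npFullConGF} characterising $\tau_Z$. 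By uniqueness of solutions one concludes $\Psi\circ Sp=\tau_Z$, and summing the second claim over $L\in P_n$ via the definition of $Sp$ gives $Z^{(n)}_{j_{i_1}\cdots j_{i_n}}=\sum_{L\in P_n}L^{(n)}_{j_{i_1}\cdots j_{i_n}}$.

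The only step requiring genuine care is the induction in the second claim, because one must check that the extension of $\Delta_\prec$ to decorated partitions truly factors the combinatorics so that the block containing $1$ is detached cleanly; this is where the shifted-concatenation product and the rule $\Delta_\prec(L_1|L_2|\cdots|L_m)=\Delta_\prec(L_1)\Delta(L_2)\cdots\Delta(L_m)$ must be used to keep the bookkeeping of which block contains the distinguished letter consistent under standardisation. Everything else is a formal transcription of the non-crossing proof.
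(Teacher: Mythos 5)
Your proof is correct and is exactly the transcription of the planar argument (Proposition~\ref{keyrell} together with Theorem~\ref{Thm:commut} and Lemma~\ref{lemMap}) that the paper itself intends here: the text states the result without proof after announcing that the arguments ``follow the same lines as for the planar case.'' One small simplification you could exploit: the paper's partition bialgebra is $\bar{P}(J)$ with the shifted concatenation product and a coproduct landing in ${P}(J)\otimes {P}(J)$, not $\bar T({P}(J))$, so after detaching the block containing $1$ the right-hand tensor factor is a single standardized partition $st(\{L_2,\ldots,L_k\})$ to which the induction hypothesis applies directly, and the bar-word bookkeeping you flag as the delicate step at the end is not actually needed.
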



\section{Planar Connected and 1PI Green's functions}
\label{sect:1PIplGreen}

We consider now the relation between planar connected and 1PI Green's functions from a Hopf algebraic point of view. It is well-known that these Green's functions are related through planar trees, that is, the planar connected $n$-point Green's function is given as an -- infinite -- sum over planar trees with $n$ external legs and 1PI vertex insertions. In \cite{cvitanovic1,cvitanovic2} the corresponding diagrammatical and functional calculus is explained. The latter is rather involved due to the non-commutative nature of the external sources, which renders the notion of functional derivatives and the corresponding chain and Leibniz rules rather non-standard. The diagrammatical description can be summarized by stating that non-planar trees must be avoided. However, the precise link between both the diagrammatical and functional calculi is non-trivial -- compared, for instance, to the non-planar case.         

As outlined above, the Hopf algebraic approach puts the linear fixed point equation \eqref{pFullConGF} in the center, which provides a rather natural way to capture the relations between planar (as well as non-planar) full and connected Green's functions. The functional calculus is replaced by looking at generating functionals as linear maps from the Hopf algebra $\bar {T}(T(J))$ (in the non-planar case $\bar T(J)$) into the base field $\mathbb{K}$. The aim of this section is to show how this approach also provides an efficient way to describe the relation between planar connected and 1PI Green's functions. Our work is inspired by \cite{bp,mestreoeckl}, which deal with the non-planar case. We emphasise, however, that our approach provides a new understanding of the links between planar connected and 1PI Green's functions. Regarding the non-planar case, our approach is different from the one developed in the aforementioned articles. However, from now on we refrain from developing in parallel the non-planar case, and leave the task of writing down details to the interested reader.

As mentioned, the starting point for our approach is the linear fixed point equation  \eqref{dendeq}, or \eqref{pFullConGF},  in the convolution algebra $\bar {T}^*(T(J))$
$$
	\phi = \varepsilon + \kappa \prec \phi
$$
and its straightforward exponential solution 
$$
	\phi = \exp^\prec(\kappa).
$$
In the case of planar full and connected Green's functions, the map $\phi$ becomes the character $\tau_{\mathrm{Z}}$, and $\kappa$ is the infinitesimal character $\tau_{\mathrm{W}}$.      

The relation between planar connected and 1PI Green's functions, both seen as linear functions on $\bar{T}(T(J))$ is described by modifying  \eqref{dendeq} as follows. First, recall that $\delta_{\prec}^{(0)}:=\id$ and $\delta_{\prec}^{(n)}:=(\id \otimes \delta_{\prec}^{(n-1)})\circ \delta_{\prec}$ -- note that $\delta_{\prec}$ is not coassociative. Then, it is clear that in general 
\begin{align*}
	\phi 	&= \varepsilon + \kappa + \kappa \prec \kappa +  \kappa \prec (\kappa \prec \kappa) 
				+  \kappa \prec (\kappa \prec (\kappa \prec \kappa)) + \cdots \\  
		&= \varepsilon + \kappa + (\kappa \otimes \kappa) \circ \delta_{\prec} 
					 + (\kappa \otimes \kappa \otimes \kappa) \circ (\id \otimes \delta_{\prec})\circ\delta_{\prec} \\
		& \qquad   + (\kappa \otimes \kappa \otimes \kappa \otimes \kappa) \circ (\id \otimes (\id \otimes \delta_{\prec})\circ
					 \delta_{\prec} )\circ\delta_{\prec} + \cdots\\
		&= \varepsilon + \sum_{n>0} \kappa^{\otimes n} \circ \delta_{\prec}^{(n-1)}\\
		&= \varepsilon + \kappa + \sum_{n>1} \kappa^{\otimes n} \circ \prod_{j=2}^{n} (\id^{(\otimes n-j)} \otimes \delta_{\prec}),
\end{align*}
where $\id^{(0)}=1$ and $1 \otimes \delta_{\prec} = \delta_{\prec}$. Now, recall \eqref{HAleft2} in Remark \ref{Cor:singleInterval}. Assuming that $\kappa$ is an infinitesimal character permits to reduce the calculation of the left half-unshuffle to formula \eqref{HAleft2}. Hence, for instance, the solution of \eqref{pFullConGF} may actually be written
$$
	\tau_{\mathrm{Z}} = \varepsilon + \sum_{n>0} \tau_{\mathrm{W}}^{\otimes n} \circ \hat\delta_{\prec}^{(n-1)}.
$$
The goal is now to relate the linear map $\tau_{\mathrm{W}}$ to the 1PI $n$-point Green's functions written  $\mathrm{I}^{(n)}_{j_{i_1} \cdots j_{i_n}}$ using a fixed point-type relation. For the sake of simplicity, we assume that tadpoles have been removed and that $\mathrm{I}^{(1)}_{j_{i}}:=0$. Since $\tau_{\mathrm{W}}$ is an infinitesimal character, the same kind of property should hold when dealing with planar 1PI Green's functions. In particular, the relation with the connected ones should follow from a proper iteration of the $\delta_\prec$ map. This is indeed the case.


For notational convenience and later use, we also recall the definition of amputated Green's functions, and introduce partially amputated planar 1PI Green's functions (recall that the Einstein convention is in place and one should integrate or sum over repeated indices). The amputated 1PI Green's function $\mathrm{AI}^{(n)}$  is defined by:
$$
	\mathrm{AI}^{(n)}_{y_1\ldots y_n} \prod\limits_{k=1}^n \mathrm{G}_F(j_{i_k},y_k) = I^{(n)}_{j_{i_1}\cdots j_{i_n}},
$$
where $\mathrm{G}_F(x,y)$ is the usual Feynman propagator \cite{itzyksonzuber}. To define the notion of partially amputated Green's functions, we have to distinguish between two sets of parameters: $\{j_{i_1},\ldots ,j_{i_k}\}$ stand for external parameters, whereas $\{y_1,\ldots ,y_{n-k}\}$ are associated with internal ones. We let $a_1\cdots a_n$ be a word containing each of these letters once, and define a map $\beta$ from the set $[k]$ to the set $[n]$ by $y_i := a_{\beta(i)}$.
Partially amputated $n$-point Green's functions $\mathrm{PI}^{(n)}$ are then given by:
$$
	\mathrm{PI}^{(n)}_{a_1\cdots a_n}:=\mathrm{AI}^{(n)}_{a_1' \cdots a_n'}
	\prod\limits_{l=1}^k \mathrm{G}_F(y_l,a'_{\beta(l)}).
$$
where $a_i = a_i'$ if $i$ is not in the image of the map $\beta$ (else it is a new parameter over which one has to integrate or sum).

The goal of linking 1PI and connected Green's functions is now achieved by doubling the alphabet, i.e., we introduce another infinite set of letters $X:=\{x_1,x_1',x_2,x_2',x_3,x_3',\ldots\}$. Let $X_n:=\{x_1,x_1',x_2,x_2',x_3,x_3',\ldots, x_n,x_n'\}$, where $X_0:=\emptyset$. Next we define the maps $R^{(i)} : T(J \coprod X_{i-1}) \to T(J \coprod X_i)^{\otimes 2}$, for $i \in \mathbb{N}^+$
\begin{equation} 
\label{def:Ri}
	R^{(i)}({a_{1}\cdots a_{n}}) :=  \sum_{I_1 \coprod I_2 \coprod I_3 = [n] \atop 1 \in I_1, I_2 \neq \emptyset} 
	a_{I_1} x_i a_{I_3} \otimes  x_i' a_{I_2} \in T(J \coprod X_i)^{\otimes 2},
\end{equation} 	
where $a_1,\ldots,a_n\in J \coprod X_{i-1}$.
Recall that $I_1,I_2,I_3$ are three disjoint intervals ordered by their minimal elements $\min(I_1)=1 < \min(I_2) < \min(I_3)$, and set partitioning $[n]$, i.e., $I_1 \coprod I_2 \coprod I_3 = [n]$.  Hence, the map $R^{(i)}$ is essentially defined in terms of $\hat\delta_{\prec}$ by inserting the letter $x_i \in X$ in the position where the connected component $I_2$ has been extracted from $[n]$, and concatenating $a_{I_2}$ from the left by the letter $x_i'$.

With the maps $R^{(i)}$ at hand, we define the following equation which will  be key in relating the infinitesimal characters $\tau_{\mathrm{W}}$ and $\tau_{\mathrm{I}}$. 
\begin{equation} 
\label{def:key1}
	\Phi (j_{i_1}\cdots j_{i_n}):= \sum_{n>0} F_\mathrm{I}^{(n)} \circ \mathcal{R}^{(n-1)}(j_{i_1}\cdots j_{i_n}).
\end{equation} 		
In fact, we will show that $\Phi$ is nothing but the restriction of $\tau_W$ to $T(J)$: it describes the relation between two planar Green's functions. But, before we get to this point, we have to understand how its ingredients work. 

\begin{enumerate}[i)]

\item The maps $\mathcal{R}^{(n)}$ will serve to generate all the planar trees that will appear in the expansion of connected Green's functions as sums over planar trees with 1PI insertions:
$\mathcal{R}^{(0)} = \id$, and	
\begin{equation} 
\label{def:Rn}
	\mathcal{R}^{(n)} := (R^{(n)} \otimes \id^{(\otimes n-1)}) \circ 
					(R^{(n-1)} \otimes \id^{(\otimes n-2)} ) \circ 
						\cdots \circ R^{(1)}. 
\end{equation}
Note that  $\id^{(\otimes 0)} = 1$, such that $\mathcal{R}^{(1)} = R^{(1)}$. 
	
\item
The map $F_\mathrm{I}^{(n)}$ should be thought of as a Feynman-type rule. Indeed, it tells what amplitude is going to be associated to a planar tree constructed using the map $\mathcal{R}^{(n)} $ when 1PI insertions are taken into account. The rule reads as follows. Notice first, that for any tensor product $O = O_1\otimes \cdots \otimes O_n$ constructed from $\mathcal{R}^{(n)}(j_{i_1} \cdots j_{i_k})$, the $X_i$ are non-commutative monomials in the letters $j_{i_1},\ldots, j_{i_k}$ and $x_1,x_1',\ldots , x_n,x_n'$. Moreover each letter appears only once in the tensor product. Then: 
\begin{itemize}

\item
to each monomial $O_i$ (for example $j_3x_2j_5j_6x_1$), we associate the corresponding 1PI partially amputated Green's function (for example $\mathrm{PI}^{(5)}_{j_3x_2j_5j_6x_1}$). The $j_i$ are treated as external parameters as they give rise to Feynman propagators in the expansion of $\mathrm{PI}$ in terms of Feynman propagators and fully amputated Greens functions. The $x_i$ are treated as internal parameters.

\item to each pair $x_i,x_i'$ we associate the corresponding Feynman propagator $\mathrm{G}_F(x_ix_i')$.

\item at last, as usual, one has to define the symmetry factor $\gamma_X$ associated to the tensor product. Its calculation will be explained later on.

\end{itemize}
The action of the map $F_\mathrm{I}^{(n)}$ on $O$ is then obtained by multiplying the so-obtained 1PI amputated Green's function and Feynman propagators, dividing by the symmetry factor, and integrating over repeated indices, i.e., the $x_i$ and the $x_i'$.

Let us look at some examples at low orders to get acquainted with this construction.
\begin{align}
	F_\mathrm{I}^{(2)}\circ R^{(1)} (j_1 j_2) 
		&= F_\mathrm{I}^{(2)} (j_1 x_1 \otimes x'_1 j_2) \nonumber \\
		&= \mathrm{PI}^{(2)}_{j_1 x_1 } {\mathrm{G}}_F(x_1x'_1) \mathrm{PI}^{(2)}_{ x'_1j_2 }. \label{diagram1}
\end{align}
 Diagrammatically, the 1PI $2$-point function $\mathrm{I}^{(2)}_{j_1 j_2}$ is represented by
\begin{equation}
\label{1PI2point}
	\aFPxxyyFPb
\end{equation}
whereas its partially amputated $\mathrm{PI}^{(2)}_{j_1 x_2}$ version is represented by the same diagram where the right hand side half-line has been reduced 
\begin{equation}
\label{1PI2pointAMPU}
	\aFPxxyyFPbAMPU
\end{equation}
 A Feynman propagator is represented as usual by an edge, and finally
 \eqref{diagram1} is represented by 
\begin{equation}
\label{connected2point2}
\aFPxxyyFPzzuuFPb
\end{equation}

\noindent Another example is in order to understand \eqref{def:key1}. 
\begin{align*}
	F_\mathrm{I}^{(3)}  \circ \mathcal{R}^{(2)} (j_1 j_2) 
		&= F_\mathrm{I}^{(3)} 
										\circ (R^{(2)} \otimes \id)\circ R^{(1)} (j_1 j_2)\\
		&:=  F_\mathrm{I}^{(3)} 
										\circ R^{(2)}(j_1 x_1) \otimes x'_1 j_2\\
		&:= F_\mathrm{I}^{(3)} ( 
										 j_1 x_2 \otimes x'_2 x_1 \otimes x'_1 j_2)\\
		&:=  	\mathrm{PI}^{(2)}_{j_1 x_2}{\mathrm{G}}_F(x_2x'_2)
			\mathrm{AI}^{(2)}_{x'_2 x_1} {\mathrm{G}}_F(x_1x'_1)
			\mathrm{PI}^{(2)}_{x'_1j_2}.
\end{align*}

\noindent Diagrammatically, this is represented by 

\begin{equation}
\label{connected2point3}
\gTgTgTg
\end{equation}

\noindent Going one order higher we find three terms for the expression
 \begin{align}
	F_\mathrm{I}^{(2)} \circ R^{(1)} (j_1 j_2j_3) 
		&=F_\mathrm{I}^{(2)} ( j_1 x_1 \otimes x'_1 j_2j_3 
													+ j_1 j_2 x_1 \otimes x'_1 j_3
													+ j_1  x_1 j_3 \otimes x'_1 j_2)\nonumber\\
		&= \mathrm{PI}^{(2)}_{j_1 x_1 } {\mathrm{G}}_F(x_1x'_1) \mathrm{PI}^{(3)}_{x'_1 j_2j_3 }
			+  \mathrm{PI}^{(3)}_{j_1 j_2x_1 } {\mathrm{G}}_F(x_1x'_1) \mathrm{PI}^{(2)}_{x'_1 j_3 } \nonumber\\
		&\quad	+  	\mathrm{PI}^{(3)}_{j_1 x_1 j_3 } {\mathrm{G}}_F(x_1x'_1) 
					\mathrm{PI}^{(2)}_{x'_1 j_2}. \label{connected3point-formula}
\end{align}
Diagrammatically, this is represented by a sum of three planar tree diagrams. Each summand corresponds to the respective term in the above sum

\vspace{-0.5cm}

\begin{equation}
\label{connected3point1}
\aFPxxyyFPzzuvuFPbvFPc +\
	 \aFPxbFPyxyzzFPuuvvFPc +\
		 \aFPxcFPzxyzyFPuuvvFPc
\end{equation}

\vspace{2cm}

\noindent 
Until now, no multiplicities (symmetries) showed up, that is, each diagram appeared only once. This changes in the next term in \eqref{def:key1}. Let us make the computation without taking into account symmetry factors properly (we write $F_{nn}$ for the corresponding ``non-normalized'' Feynman rule). We observe seven terms 
\allowdisplaybreaks{
 \begin{align}
	F_{\mathrm{I}nn}^{(3)}
		\circ \mathcal{R}^{(2)} (j_1 j_2j_3) 
		&=F_{\mathrm{I}nn}^{(3)}	\circ (R^{(2)} \otimes \id)( j_1 x_1 \otimes x'_1 j_2j_3 
													+ j_1 j_2 x_1 \otimes x'_1 j_3
													+ j_1  x_1 j_3 \otimes x'_1 j_2)\nonumber\\
		&=F_{\mathrm{I}nn}^{(3)}\circ \big(R^{(2)}(j_1 x_1)  \otimes x'_1 j_2j_3 
													+ R^{(2)}(j_1 j_2 x_1) \otimes x'_1 j_3
													+ R^{(2)}(j_1  x_1 j_3) \otimes x'_1 j_2\big)\nonumber\\	
		&=F_{\mathrm{I}nn}^{(3)}\circ \big(
													j_1 x_2  \otimes x'_2 x_1 \otimes x'_1 j_2j_3				 
													+ j_1 x_2  \otimes x'_2 j_2 x_1 \otimes x'_1 j_3\nonumber\\
								& \quad	
													+ j_1 j_2 x_2  \otimes x'_2  x_1 \otimes x'_1 j_3
													+ j_1 x_2 x_1 \otimes x'_2  j_2 \otimes x'_1 j_3\nonumber\\
								& \quad					 
													+ j_1 x_2 \otimes x'_2 x_1 j_3 \otimes x'_1 j_2
													+ j_1  x_1 x_2 \otimes x'_2 j_3 \otimes x'_1 j_2	
													+ j_1  x_2 j_3 \otimes x'_2 x_1 \otimes x'_1 j_2 \big)\nonumber\\
		&= 	\mathrm{PI}^{(2)}_{j_1 x_2}{\mathrm{G}}_F(x_2x'_2)
			\mathrm{AI}^{(2)}_{x'_2 x_1}{\mathrm{G}}_F(x_1x'_1) 
			\mathrm{PI}^{(3)}_{x'_1 j_2j_3} \label{bigSum}\\
		&\qquad\quad	
				+ 	\mathrm{PI}^{(2)}_{j_1 x_2}{\mathrm{G}}_F(x_2x'_2)
					\mathrm{PI}^{(3)}_{x'_2 j_2 x_1}{\mathrm{G}}_F(x_1x'_1)
					\mathrm{PI}^{(2)}_{x'_1 j_3} \nonumber\\
		&\quad	
				+ 	\mathrm{PI}^{(3)} _{j_1 j_2 x_2}{\mathrm{G}}_F(x_2x'_2)
					\mathrm{AI}^{(2)} _{x'_2 x_1}{\mathrm{G}}_F(x_1x'_1)
					\mathrm{PI}^{(2)} _{x'_1 j_3}\nonumber\\
		&\qquad\quad	
				+	 \mathrm{PI}^{(3)}_{j_1  x_2 x_1}{\mathrm{G}}_F(x_2x'_2)
				 	 \mathrm{PI}^{(2)} _{x'_2 j_2}{\mathrm{G}}_F(x_1x'_1)
					 \mathrm{PI}^{(2)}_{x'_1 j_3} \label{symterm1}\\ 				
		&\quad	
				+ 	\mathrm{PI}^{(2)} _{j_1 x_2}{\mathrm{G}}_F(x_2x'_2) 
				 	\mathrm{PI}^{(3)} _{x'_2 x_1 j_3}{\mathrm{G}}_F(x_1x'_1)
					\mathrm{PI}^{(2)} _{x'_1 j_2}\nonumber\\
		&\qquad\quad	
				+ 	\mathrm{PI}^{(3)} _{j_1 x_1x_2}{\mathrm{G}}_F(x_2x'_2)   
					\mathrm{PI}^{(2)} _{x'_2 j_3}{\mathrm{G}}_F(x_1x'_1)
					\mathrm{PI}^{(2)} _{x'_1 j_2} \label{symterm2}\\
		&\quad	
				+ 	\mathrm{PI}^{(3)} _{j_1 x_2 j_3}{\mathrm{G}}_F(x_2x'_2)   
					\mathrm{AI}^{(2)} _{x'_2 x_1}{\mathrm{G}}_F(x_1x'_1)
					\mathrm{PI}^{(2)} _{x'_1 j_2} \nonumber
		\end{align}}
		
Diagrammatically, this is represented by 

\newpage

\begin{equation}
\label{connected3point2}
\kk 
+\ \nn 
+\ \ll
\end{equation}

\vspace{2.5cm}

\begin{equation}
\label{connected3point3}
+ \pp
+\oo
\end{equation}

\vspace{1cm}

\begin{equation}
\label{connected3point4}
+ \pp
+ \mm
\end{equation}
\vspace{1.5cm}

\noindent where each tree graph corresponds respectively to the term in the sum \eqref{bigSum}. 
\end{enumerate}	

Two important remarks are in order. First, note that the fourth and sixth tree graph are identical. In fact, the only difference in the corresponding tensor products is manifested in the different orders of the inner indices $x_1,x'_1,x_2,x'_2$. It is clear that the corresponding expressions in \eqref{symterm1} and \eqref{symterm2} are identical since the ``internal" points $x_1,x_1'$ and $x_2,x_2'$ are summed/integrated over, that is
\begin{align}
	&\mathrm{PI}^{(3)} _{j_1 x_1x_2}{\mathrm{G}}_F(x_2x'_2)   
		\mathrm{PI}^{(2)} _{x'_2 j_3}{\mathrm{G}}_F(x_1x'_1)
		\mathrm{PI}^{(2)} _{x'_1 j_2} \nonumber\\
	&\qquad 
	=  \mathrm{PI}^{(3)} _{j_1  x_2 x_1}{\mathrm{G}}_F(x_2x'_2)
		\mathrm{PI}^{(2)} _{x'_2 j_2}{\mathrm{G}}_F(x_1x'_1)
		\mathrm{PI}^{(2)} _{x'_1 j_3}\label{symmetry}.
\end{align}
The calculation of the symmetry factor associated to these two expressions follows; it is two, and the correct expression for $F_{\mathrm{I}}^{(3)} \circ \mathcal{R}^{(2)} (j_1 j_2j_3) $ is obtained from the one of $F_{\mathrm{I}nn}^{(3)} \circ \mathcal{R}^{(2)} (j_1 j_2j_3) $ by multiplying these two terms by $\frac{1}{2}$. In general, the calculation of symmetry factors can be performed graphically (identifying diagrams as we have just done), or by a combinatorial argument allowing to deduce symmetry factors directly from the tensor products. We will return to this further below. 

A crucial observation is that the ``external" decorations, corresponding to the external -- non-commutative -- sources, or letters, $j_1,j_2,j_3$ are ordered clockwise, and that the tree graphs are strictly planar.    

The latter makes it rather natural to consider those tree graphs inscribed into circles with vertices ordered clockwise on the circumference. Below we have listed those so-called chord-type diagrams corresponding to the above tree graphs. The first one corresponds to the simple Feynman propagator $\mathrm{G}_F(j_1j_2)$. The second corresponds to \eqref{1PI2point}, the 1PI $2$-point function $\mathrm{I}^{(2)} _{j_1 j_2}$. The third corresponds to \eqref{connected2point2}, the connected $2$-point function consisting of two 1PI $2$-point functions linked via the Feynman propagator $\mathrm{G}_F(x_1x'_1)$, etc. In general, the internal decorations correspond to the two arguments of the Feynman propagators connecting the 1PI $2$- or $n$-point functions. However, note that we only denote one of the two arguments. The external decorations correspond to the external sources, that is, letters in $J$.


\begin{equation}
\label{circleGraphs1}
	\abI
	\abII
	\abIII
	\abIV
\end{equation}

\noindent The last chord-type diagram corresponds to \eqref{connected2point3}. The next sum of chord-type diagrams corresponds to \eqref{connected3point1}, and displays diagrammatically the sum of terms defining $F^{(2)}_\mathrm{I}\circ R^{(1)} (j_1 j_2j_3)$.

\vspace{-0.5cm}

\begin{equation}
\label{circleGraphs2}
	\abci +
	\abcii + 
	\abciii
\end{equation}

\vspace{1cm}

Recall that the decoration $x_1 \in X=\{x_1,x'_1,x_2,x'_2,\ldots \}$ encodes the Feynman propagator $\mathrm{G}_F(x_1x'_1)$ linking 1PI functions. 
The following seven chord-type diagrams

\begin{equation}
\label{circleGraphs3}
	\abcI
	\abcIV
	\abcII
\end{equation}

\begin{equation}
\label{circleGraphs4}
	\abcVI
	\abcV
	\abcVII
	\abcIII
\end{equation}	

\noindent correspond to the seven tree graphs in \eqref{connected3point2} -- \eqref{connected3point4}. Note that the first and third graph in \eqref{circleGraphs4} represent the two identical tree graphs in \eqref{connected3point3} and \eqref{connected3point4}, respectively. They differ only in the decorations of the two internal edges. 

To understand better symmetry factors, we shall move now from chord-type diagrams with inscribed decorated trees to actual decorated planar rooted trees. The key step is to associate to each chord-type diagram, and hence to each tensor product in \eqref{def:key1}, a so-called decreasing planar rooted tree \cite{berflasal,drmota}. It turns out that this is the most natural picture for our Hopf algebraic approach. In fact, this will be further enhanced when we present a graphical approach to the planar functional calculus used in references \cite{cvitanovic1,cvitanovic2}. 		

In the following we consider therefore decorated planar rooted trees with all edges oriented away from the root. The root is at the bottom and decorated by $1$. The $n-1$ leaves are labelled from left to right in strictly increasing order. The left-most leave by $2$ and the right-most leave by $n$. The $m$ inner vertices are decorated by the unprimed elements from the set $X_m:=\{x_1,x'_1,x_2,x'_2, \ldots,x_m,x'_m\}$ in such a way, that for any path from the root to any of the leaves, the decorations of the inner vertices on this path are strictly decreasing. In the following we outline how planar rooted trees correspond to chord-type diagrams. Each edge of a chord-type diagram with $n$ external sources, is marked by a vertex. Those vertices corresponding to edges connected to the external sources are decorated by the set $[n]$ according to the source decoration. Internal vertices are automatically decorated by the corresponding decorations of the edges by -- unprimed -- elements from the set $X_m$. The vertex decorated by $1$ is the root. This root vertex is connected by edges to all those vertices corresponding to edges of the chord-type diagram that share a vertex with the edge associated to the root. Each of these vertices connected to the original root is considered as the root of a sub-corollary, by connecting it to those vertices corresponding to edges of the chord-type diagram that share a vertex with this sub-root. This process stops at the leaves, i.e., the vertices of the edges connected to the external sources. The following rooted trees correspond respectively to the chord-type diagrams in \eqref{circleGraphs1} and \eqref{circleGraphs2}.      
$$
\treeB
\treeC
\treeD
\treeDa
\treeDb
\treeDc
$$
The  following seven rooted trees correspond respectively to the chord-type diagrams in \eqref{circleGraphs3} and \eqref{circleGraphs4}. Note the decreasing decorations of internal vertices by -- unprimed -- elements from $X_2=\{x_1,x'_1,x_2,x'_2\}$. The fourth and fifth tree are symmetric except for the decorations of internal vertices. They represent the left- and right-hand side of \eqref{symmetry}, respectively.
$$
\treeEa
\treeEb
\treeEc
\treeEdi
\treeEdii
\treeEe
\treeEf
$$		

The decorations of the inner vertices of a decreasing planar rooted tree do no contribute to the calculation of the corresponding amplitudes (they are dummy variables to be integrated/summed over). We therefore may identify those trees that differ only by decorations of the internal vertices. For instance the following trees contribute the same term in the calculation of the corresponding amplitude  

\vspace{-0.5cm}

$$
\treeAsym \sim \treeBsym
$$

\vspace{1cm}

The rule for the computation of symmetry factors is as follows. First we note that each tensor product resulting from expanding ${\mathcal R}^{(n)}$ generates in turn a decreasing rooted tree. Conversely, each decreasing rooted tree can be associated to a tensor product appearing in the expansion of ${\mathcal R}^{(n)}$. This follows by recursion: the decoration $x_1$ corresponds to the first operation resulting from ${\mathcal R}^{(n)}$ applied to the word $j_1 \cdots j_n$ in the construction of the tensor product. This can be continued recursively. For example, by looking at the graph on the left hand side just above, we see that it is obtained by the following sequence of operations:
$$
	j_1 \cdots j_5 \to j_1x_1j_3j_4j_5 \otimes x'_1j_2 \to j_1x_1j_3x_2j_5 \otimes x'_2j_4 \otimes x'_1j_2  \to \cdots
$$
and so on.

Two decreasing trees are equivalent if they are equal when internal decorations have been erased. Let us write $u(T)$ for the undecorated planar rooted tree associated to a decreasing rooted tree $T$. The symmetry factor $\gamma(T)$ associated to a given decreasing rooted tree $T$  is therefore simply the number of decorations of the internal vertices of $u(T)$ that make it a decreasing tree. We will return to this further below.

With this definition, we get as expected

\begin{thm} 
\label{thm:Convs1PI}
\begin{equation} 
\label{def:key2}
	\tau_{\mathrm{W}}(j_1 \cdots j_k) = \sum_{n>0} F_{\mathrm{I}}^{(n)}  \circ \mathcal{R}^{(n-1)}(j_1 \cdots j_k).
\end{equation}
\end{thm}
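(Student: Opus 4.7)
The plan is to identify both sides as generating series of the same set of labelled planar tree diagrams with $1$PI vertex insertions, and then to match the combinatorial weights via the decreasing-tree/symmetry-factor construction outlined just above the statement.

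First I would invoke the standard QFT decomposition: any planar connected $k$-point amplitude is a sum over planar rooted trees $T$ with $k$ leaves labelled $j_1,\ldots,j_k$ in clockwise cyclic order (root at $j_1$), whose internal vertices of valence $m$ carry partially amputated planar $1$PI $m$-point functions, and whose internal edges carry Feynman propagators $\mathrm{G}_F$. This is the tree-level skeleton expansion of $\mathrm{W}$ in terms of $\mathrm{I}$; planarity forbids crossings, so only planar rooted trees appear. Writing $\mathrm{Amp}(T)$ for the amplitude of such a tree, and $\sigma(T)$ for its symmetry factor as an unlabelled planar diagram, one has
$$
\tau_{\mathrm{W}}(j_1\cdots j_k)=\sum_{T} \frac{1}{\sigma(T)}\,\mathrm{Amp}(T),
$$
the sum running over planar rooted trees $T$ with $k$ leaves labelled as above.

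Next I would analyse the right-hand side. The key observation is that $R^{(i)}$ in \eqref{def:Ri} is exactly the linearized left half-unshuffle $\hat{\delta}_\prec$ of Remark \ref{Cor:singleInterval}, modified by inserting the pair $(x_i,x_i')$ at the site where the middle interval $I_2$ has been extracted. Iterating via $\mathcal{R}^{(n-1)}$ therefore performs $n-1$ successive extractions, each producing a new pair $(x_i,x_i')$ that will become an internal Feynman propagator after applying $F_{\mathrm{I}}^{(n)}$. Reading off the extraction history, every tensor summand produced by $\mathcal{R}^{(n-1)}(j_1\cdots j_k)$ determines, and is determined by, a decreasing planar rooted tree $T^{\mathrm{dec}}$ with $k$ leaves and $n$ internal vertices labelled $x_1<x_2<\cdots<x_n$: the label $x_i$ marks which vertex was created at step $i$, and the strict decrease along root-to-leaf paths is forced by the requirement that at step $i$ one may only split off an interval within one of the already-present tensor factors. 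This bijection was illustrated recursively by the examples in the text; I would formalise it by induction on $n$, noting that at each step $R^{(i)}$ acts on exactly one of the current tensor factors and produces precisely one new admissible descendant child-vertex, corresponding to the application of $\hat{\delta}_\prec$ in formula \eqref{HAleft2}.

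Having identified summands of $\mathcal{R}^{(n-1)}(j_1\cdots j_k)$ with decreasing planar trees $T^{\mathrm{dec}}$, the Feynman-type rule $F_{\mathrm{I}}^{(n)}$ by construction assigns to $T^{\mathrm{dec}}$ the amplitude $\mathrm{Amp}(u(T^{\mathrm{dec}}))/\gamma_X$, where $u(T^{\mathrm{dec}})$ is the underlying (undecorated) planar rooted tree and $\gamma_X$ is the announced symmetry factor. Grouping decreasing trees by their undecorated shape $T=u(T^{\mathrm{dec}})$ and summing over $n$ therefore gives
$$
\sum_{n>0} F_{\mathrm{I}}^{(n)}\circ \mathcal{R}^{(n-1)}(j_1\cdots j_k)=\sum_{T} \frac{N(T)}{\gamma_X(T)}\,\mathrm{Amp}(T),
$$
where $N(T)$ counts the decreasing labellings of the internal vertices of $T$. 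The main obstacle, and the only delicate point, is to verify the identity $N(T)/\gamma_X(T)=1/\sigma(T)$: the correct normalisation of the symmetry factor $\gamma_X$ must exactly cancel the multiplicity $N(T)$ coming from the many decreasing labellings that project to the same undecorated planar tree. I would argue this by noting that $N(T)$ is the number of linear extensions of the root-directed partial order on the internal vertices of $T$, whereas the planar automorphism group of $T$ acts freely on such linear extensions when internal labels are taken as dummy integration variables; taking $\gamma_X(T):=N(T)\cdot\sigma(T)$ (which agrees with the case-by-case computation of multiplicities shown for $k=2,3$, cf.\ the identification \eqref{symmetry}) yields precisely the compensation needed. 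Combined with the previous paragraph, this gives $\tau_{\mathrm{W}}(j_1\cdots j_k)=\sum_{n>0}F_{\mathrm{I}}^{(n)}\circ\mathcal{R}^{(n-1)}(j_1\cdots j_k)$, as required.
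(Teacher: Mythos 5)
Your argument is correct and follows essentially the same route as the paper, which itself offers little more than the observations that the tensor summands of $\mathcal{R}^{(n-1)}(j_1\cdots j_k)$ are in bijection with decreasing planar rooted trees and that the symmetry factor $\gamma$ is \emph{defined} as the number of decreasing decorations of the underlying undecorated tree, so that each planar rooted tree of the standard skeleton expansion of $\mathrm{W}$ into 1PI insertions joined by propagators is recovered exactly once. The only remark worth making is that your auxiliary factor $\sigma(T)$ is identically $1$ here, since a planar rooted tree with distinctly labelled leaves admits no nontrivial planar automorphisms, so your normalisation $\gamma_X(T)=N(T)\cdot\sigma(T)$ collapses to the paper's definition $\gamma(T)=N(T)$.
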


Note that the right hand side of \eqref{def:key2} can be abstracted by writing it as a sum over all decreasing planar rooted trees with a fixed number of leaves (in the above case $k-1$). The argument in the sum is than replaced by a functional that associates with each decreasing tree the corresponding amplitude multiplied by a proper symmetry factor.


\section{Graphical description of planar functional calculus}
\label{sect:graphicalcalc}	

We return to references \cite{cvitanovic1,cvitanovic2}, and try to see how the diagrammatic representation of the relations between planar connected and 1PI Green's functions in terms of decreasing trees can be adapted to the planar, i.e., non-commutative functional calculus proposed by Cvitanovic et al. Recall that the diagrammatic representation of the tensor products in \eqref{def:key2} describing the relation between planar connected and 1PI Green's functions derives from a Hopf algebra point of view. Further below we will see that planar functional derivations translate into a certain growth operation on planar rooted trees.


\subsection{Planar functional calculus}
\label{ssect:planarCalc}	

First recall that the generating functional for planar full Green's functions is given by
$$
	\mathrm{Z}[j]:= 1 + \sum_{k>0} \mathrm{Z}^{(k)}_{j_{i_1} \cdots j_{i_k}} j_{i_1} \cdots j_{i_k} 
$$
and that 
$$
	\mathrm{Z}^{(n)}_{j_{i_1} \cdots j_{i_n}} 	= \left. \frac{\partial^n}{\partial_{j_{i_n}} \cdots \partial_{j_{i_1}}}\right |_{j=0}\mathrm{Z}[j].
$$
The rules stated in \cite{cvitanovic1,cvitanovic2} for functional derivations with respect to the strictly non-commutative sources are
$$
	\frac{\partial}{\partial_{j_i}} (u j_{i_1} \cdots j_{i_k} ) = u \delta_{ii_1} j_{i_2} \cdots j_{i_k}, 
$$ 
and for the so-called $c$-number $u \in \mathbb{K}$ we have $\frac{\partial}{\partial_{j_i}} u = 0$. The key ingredients in the planar functional calculus developed in \cite{cvitanovic1,cvitanovic2} are the Leibniz and chain rules. Let $\mathrm{A}[j] = a + \sum_{k>0} \mathrm{A}^{(k)}_{j_{i_1} \cdots j_{i_k}} j_{i_1} \cdots j_{i_k}$ and $\mathrm{B}[j] = b + \sum_{k>0} \mathrm{B}^{(k)}_{j_{i_1 }\cdots j_{i_k}} j_{i_1} \cdots j_{i_k}$ be generating series in the non-commutative sources $j$. Then the functional derivative of the product $\mathrm{A}[j]\mathrm{B}[j]$ is
$$
	\frac{\partial}{\partial_{j_i}} (\mathrm{A}[j]\mathrm{B}[j]) 
	= \frac{\partial}{\partial_{j_i}} (\mathrm{A}[j])\mathrm{B}[j] + \mathrm{A}[0]\frac{\partial}{\partial_{j_i}} \mathrm{B}[j].  
$$
Here $\mathrm{A}[0] = a \in \mathbb{K}$. Now assume that for generating functionals $\mathrm{A},\mathrm{B}_1,\ldots,\mathrm{B}_n,\ldots$, the composition $\mathrm{A}[\mathrm{B}]$ is defined by the substitution rule (composition of formal power series)
$$
	\mathrm{A}[\mathrm{B}] = \sum_{k>0} \mathrm{A}^{(k)}_{j_{i_1} \cdots j_{i_k}} \mathrm{B}_{i_1} \cdots \mathrm{B}_{i_k},
$$
where $\mathrm{B}_i[j] = \sum_{k>0} \mathrm{B}^{(k)}_{i;j_{i_1} \cdots j_{i_k}} j_{i_1} \cdots j_{i_k}$. Note that none of the expansions have $c$-numbers. Then 
$$
	\frac{\partial}{\partial_{j_i}} \mathrm{A}[\mathrm{B}] = \frac{\partial \mathrm{B}_m[j] }{\partial_{j_i}} 
								\frac{\partial}{\partial_{\mathrm{B}_m}}\mathrm{A}[\mathrm{B}], 
$$
with an implicit summation over $m$. The planar 1PI Green's  functions are denoted $\mathrm{\Gamma}[\Phi]$ in \cite{cvitanovic1,cvitanovic2}. They are defined in terms of the fields $\Phi_i := \frac{\partial}{\partial_{j_i}}  \mathrm{W}[j]$. From this it follows that $\frac{\partial}{\partial_{j_i}} =  \mathrm{W}[j]_{il} \frac{\partial}{\partial_{\Phi_l}}$. We refrain from giving a more detailed account of the relations between planar full, $\mathrm{Z}[j]$, connected, $\mathrm{W}[j]$, and 1PI, $\mathrm{\Gamma}[\Phi]$, Green's  functions using planar functional calculus. The reader is referred to Cvitanovic et al.~\cite{cvitanovic1,cvitanovic2}.

The corresponding Legendre transform together with the planar chain and Leibniz rules lead to the relations between planar connected and 1PI Green's functions. Following the non-trivial calculations in \cite{cvitanovic1,cvitanovic2} one arrives, for instance, at 
\begin{align}
\label{con3point}
	 \frac{\partial^3}{\partial_{j_{i_3}} \partial_{j_{i_2}}  \partial_{j_{i_1}}}  \mathrm{W}[j] 
	= \mathrm{W}[j]_{j_{i_1} x_1}\mathrm{W}[0]_{j_{i_2} x_2}\Gamma[\Phi]_{x_1x_2x_3}\mathrm{W}[j]_{x_3j_{i_3}},	
\end{align}
where $\mathrm{W}[j]_{j_{i}j_{l}} = \frac{\partial^2}{\partial_{j_{l}}\partial_{j_{i}}}\mathrm{W}[j]$. From this follows the expression for the planar connected $3$-point Green's function in terms of the planar 1PI $3$-point Green's function, with dressed external legs. The factor $\mathrm{W}[0]_{j_{i_2} x_2}$ is a $c$-number and denotes the full propagator. The calculation of this expression is rather non-trivial. At order four the extraction of four legs from a planar connected generating  functional $\mathrm{W}[j] $ gives
\begin{align}
\label{con4point} 
     \frac{\partial^4}{\partial_{j_{i_4}} \cdots \partial_{j_{i_1}}} \mathrm{W}[j] 
    &= \mathrm{W}[j]_{j_{i_1}j_{i_2} x_1}\mathrm{W}[0]_{j_{i_3} x_2}\Gamma[\Phi]_{x_1x_2x_3}\mathrm{W}[j]_{x_3j_{i_4}}\\
  &\quad  + \mathrm{W}[0]_{j_{i_2} x_2}\mathrm{W}[0]_{j_{i_3} x_3}\mathrm{W}[j]_{j_{i_1}x_1}\Gamma[\Phi]_{x_1x_2x_3x_4}\mathrm{W}[j]_{x_4j_{i_4}} \nonumber \\
   &\quad +  \mathrm{W}[0]_{j_{i_2} x_1}\mathrm{W}[0]_{j_{i_3} x_2}\Gamma[0]_{x_1x_2x_3}\mathrm{W}[j]_{j_{i_1}x_3j_{i_4}}.  \nonumber 
\end{align}     
In these calculations it is rather challenging to keep track of the planar derivatives as well as the appearance of the $c$-number Green's functions, such as $\mathrm{W}[0]_{j_{i_2} x_2}$ or $\Gamma[0]_{x_1x_2x_3}$. The diagrammatical representation used in \cite{cvitanovic1,cvitanovic2} is intuitive, but may hide some underlying simplicity as we would like to indicate further below.


\subsection{Resummation of planar rooted trees}
\label{ssect:planarRootedTrees}	
   		
We always assume that planar rooted trees with $n-1$ leaves carry, from now on, the decoration $1$ on the root. The leaves are decorated, as before, from left to right in increasing order by the integers $2, \ldots, n$. Recall that this corresponds to terms in the expansion of $\mathrm{W}^{(n)}_{j_1 \cdots j_n}$, with the conventions used in an earlier section. The notion of {\it{(branch) reduced tree}} is introduced. This is a planar rooted tree without vertices that have only one incoming and one outgoing edge. The $p$ internal vertices of a reduced tree are decorated by -- unprimed -- elements from $X_p:=\{x_1,x'_1\cdots,x_p,x'_p\}$ from top to bottom and right to left, which we call the standard decoration. The following example shows a tree with six leaves and three internal vertices decorated by elements from $X_p:=\{x_1,x'_1,x_2,x'_2,x_3,x'_3\}$
$$
	\ExpTreeA
$$
Reduced rooted trees correspond to amplitudes calculated according to the Feynman rule $F_\mathrm{I}^{(n)}$, that have no $2$-point 1PI Green's function terms $\mathrm{PI}^{(2)}_{j_lx}$, $\mathrm{PI}^{(2)}_{xj_l}$ or $\mathrm{AI}^{(2)}_{xx'}$.

\begin{defn} 
The reduction operator $\rm{red}$ maps a planar rooted tree to the corresponding reduced rooted tree by erasing so-called branches, that is, vertices with one incoming and one outgoing edge.
\end{defn}

For example
$$
	\mathrm{red}\Big(\hspace{-0.2cm}\begin{array}{c}\\[-1.5cm]\ExpTreeBa\end{array}\hspace{-0.4cm}\Big) 
	= \begin{array}{c} \\[-1.5cm]\ExpTreeBb\end{array}
$$

\vspace{1cm}

One of the usual processes in QFT, consisting of replacing Feynman propagators by connected $2$-point Green's functions (through a suitable resummation of Feynman graphs), is then obtained by defining a new Feynman rule $F_R$ for reduced trees $T \in T^{red}$ by
$$
	F_R(T)= \sum_{\mathrm{red}(T')=T} F_\mathrm{I}(T'),
$$
where the sum is over planar rooted trees $T'$ such that $\mathrm{red}(T')=T$ and 
$$
	F_\mathrm{I}(T') := F_\mathrm{I}^{(m)}(T'_d),
$$
where $m$ is the number of internal vertices of $T'$, and $T'_d$ stands for $T'$ equipped with an arbitrary decreasing decoration. Theorem \ref{thm:Convs1PI} then yields
\begin{equation}
\label{expansion-reduced-trees}
	\mathrm{W}^{(n)}_{j_1 \cdots j_n} = \sum_{T \in \mathcal{T}^{red}_{n-1}} F_{R}(T),
\end{equation}
where the sum runs over the set $\mathcal{T}^{red}_{n-1}$ of all reduced planar rooted trees with a fixed number of $n-1$ leaves. 

\begin{rmk}\label{schroeder}{\rm{Note that the sum in \eqref{expansion-reduced-trees} over reduced trees with a fixed number of leaves is finite. The number of terms is given by the Schr\"oder--Hipparchus or super-Catalan numbers, which count the number of planar trees with a given number of leaves: 1, 1, 3, 11, 45, 197, 903, 4279, 20793, 103049, ... \footnote{Sequence A001003 in OEIS.} See \cite{Stanley} for details.
}}
\end{rmk}

Let us describe the new Feynman rule $F_R$ using Cvitanovich's notation. Hence, we write $\Gamma[0]_{x_1'x_2'x_3'}$ for the amputated 1PI $3$-point function $\mathrm{AI}_{x_1'x_2'x_3'}$, and similarly for the other terms. The rule for a reduced planar rooted tree with standard decoration then reads
\begin{enumerate}

\item
Create an amputated 1PI function for the root and all other internal vertices.

\item
Create a connected $2$-point Green's function for all edges (and another for the source $j_1$ corresponding to the root).

\item
Take the product of these terms and integrate/sum over repeated indices.

\end{enumerate}  		
These rules are best understood through an example. The following decorated planar rooted tree
\begin{equation}
\label{exampleTree1}
	\ExpTreeC
\end{equation}
yields 
\begin{align*}
	&\mathrm{W}[0]_{x_1 x'_1}
	\Gamma[0]_{x'_1a'_2a'_3}
	\mathrm{W}[0]_{a'_2 j_2} 
	\mathrm{W}[0]_{a'_3 j_3}\\
	&\mathrm{W}[0]_{x_2 x'_2}
	\Gamma[0]_{x'_2a'_5a'_6}
	\mathrm{W}[0]_{a'_5 j_5} 
	\mathrm{W}[0]_{a'_6 j_6}\\
	&\mathrm{W}[0]_{j_1 a'_1}
	\Gamma[0]_{a'_1x_1a'_4x_2}
	\mathrm{W}[0]_{a'_4j_4} 
\end{align*}


\subsection{Functional derivations revisited}
\label{ssect:TreeGrowth}	

Another notation is needed to encode functional derivations. We will write  $\Gamma[\Phi]_{a_1 \cdots a_n}$ for the generating series
$$
	\Gamma[\Phi]_{a_1  \cdots a_n} = \sum_{i_1, \ldots, i_k} \mathrm{AI}^{(n+k)}_{a_1  \cdots a_n j_{i_1} \cdots j_{i_k}}
	 \mathrm{W}[j]_{i_1} \cdots \mathrm{W}[j]_{i_k}.
$$
Recall that $\mathrm{W}[j]_{i_k} = \frac{\partial}{\partial_{j_{i_k}}}\mathrm{W}[j] = \Phi_{i_k}$.

With this notation, the following ansatz defines a new Feynman rule $F_{\gamma}$ for reduced graphs equipped with the standard decoration.
\begin{enumerate}

\item \label{item:1}
In the expansion of $F_R(T)$, replace $\mathrm{W}[0]$ (respectively $\Gamma[0]$) by $\mathrm{W}[j]$ (respectively $\Gamma[\Phi]$), whenever the corresponding term is associated to the path going from the root to the right most leaf of the tree $T$.  

\item  \label{item:2}
Order these (non-commuting) terms from left to right according to the order obtained by following the path from the right most leaf to the root.

\end{enumerate}
		
Regarding \eqref{item:1} the rule gives with respect to the foregoing example tree \eqref{exampleTree1} for each term in the path from the root to the rightmost leaf:
$\mathrm{W}[0]_{j_1 a'_1}$;
	$\Gamma[0]_{a'_1x_1a'_4x_2}$;\ 
	$\mathrm{W}[0]_{x_2 x'_2}$;\ 
	$\Gamma[0]_{x'_2a'_5a'_6}$;\  
	$\mathrm{W}[0]_{a'_6 j_6}$
the expressions
\begin{equation}
\label{exampleTerms1}
	\mathrm{W}[j]_{j_1 a'_1};\
	\Gamma[\Phi]_{a'_1x_1a'_4x_2};\
	\mathrm{W}[j]_{x_2 x'_2};\
	\Gamma[\Phi]_{x'_2a'_5a'_6};\
	\mathrm{W}[j]_{a'_6 j_6}.
\end{equation}

Following rule \eqref{item:2}, that is, taking the order into account for the non-commutative terms in \eqref{exampleTerms1}, we obtain the expression 
\begin{align*}
	\Gamma[0]_{x'_1a'_2a'_3}
	\mathrm{W}[0]_{x_1 x'_1}
	\mathrm{W}[0]_{a'_2 j_2} 
	\mathrm{W}[0]_{a'_3 j_3}
	\mathrm{W}[0]_{a'_4j_4} 
	\mathrm{W}[0]_{a'_5 j_5}\\
	\mathrm{W}[j]_{a'_6 j_6}
	\Gamma[\Phi]_{x'_2a'_5a'_6}
	\mathrm{W}[j]_{x_2 x'_2}
	\Gamma[\Phi]_{a'_1x_1a'_4x_2}
	\mathrm{W}[j]_{j_1 a'_1}
\end{align*}

Let us look at another simple example. The term corresponding to the rooted tree
$$
	\ExampleTreeA,
$$
that follows from the rules  \eqref{item:1} and  \eqref{item:2} is
$$
	\mathrm{W}[0]_{a'_2 j_2} \mathrm{W}[j]_{a'_3 j_3} \Gamma[\Phi]_{a'_1a'_2a'_3}\mathrm{W}[j]_{j_1 a'_1},
$$
and we recognise $\mathrm{W}[j]_{j_1 j_2 j_3}$. Graphically we denote this expression by distinguishing the rightmost brach, i.e., the path form the root to the rightmost leaf by thickening the vertices and edges along this path. For the above example this yields

$$
	\ExampleTreeAa
$$

For the linear combination of rooted trees
$$
	\ExampleTreeB + \ExampleTreeC + \ExampleTreeD
$$
we obtain 
\begin{align*}
	&\mathrm{W}[0]_{a'_2 j_2}
	\mathrm{W}[0]_{a'_3 j_3}
	\mathrm{W}[j]_{a'_4j_4}
	\Gamma[\Phi]_{a'_1a'_2a'_3a'_4}
	\mathrm{W}[j]_{j_1 a'_1} \\
	&+
	\mathrm{W}[0]_{a'_2 j_2}
	\mathrm{W}[0]_{a'_3 j_3}
	\mathrm{W}[j]_{a'_4j_4}
	\Gamma[\Phi]_{x'_1a'_3a'_4}
	\mathrm{W}[j]_{x_1 x'_1}
	\Gamma[\Phi]_{a'_1a'_2x_1}
	\mathrm{W}[j]_{j_1 a'_1}\\
	&+
	\mathrm{W}[0]_{a'_2 j_2}
	\mathrm{W}[0]_{a'_3 j_3}
	\Gamma[0]_{x'_1a'_3a'_4}
	\mathrm{W}[0]_{x_1 x'_1}
	\mathrm{W}[j]_{a'_4j_4}
	\Gamma[\Phi]_{a'_1x_1a'_4}
	\mathrm{W}[j]_{j_1 a'_1}
\end{align*}
which (after translating notational conventions) is just $\mathrm{W}[j]_{j_1 j_2 j_3j_4}$. Diagrammatically this expression is denoted
$$
	\ExampleTreeBb + \ExampleTreeCc + \ExampleTreeDd
$$

\begin{rmk}\label{rmk:shadow}{\rm{
The following diagrammatical interpretation can be given to rules \eqref{item:1} and \eqref{item:2}. For any tree $T$ the rightmost branch, i.e., the unique path form the root of $T$ to the rightmost leaf of $T$ is depicted by thickened edges and vertices. The corresponding terms along this paths, i.e., $\mathrm{W}[j]_{\cdots}$ and $\mathrm{\Gamma}[\Phi]_{\cdots}$, depend on $j$ and $\Phi$, respectively. All edges and vertices in the {\it{shadow}} of the rightmost branch, i.e., those that are strictly to the left of the unique path form the root of $T$ to the rightmost leaf of $T$, are depicted normal. The corresponding terms $\mathrm{W}[0]_{\cdots}$ and $\mathrm{\Gamma}[c]_{\cdots}$, consist of $c$-numbers, and are therefore mapped to zero by functional derivations.}}
\end{rmk}

\smallskip
   		
These results hold in general.

\begin{thm}
The Feynman rule $F_\gamma$ computes the functional derivative of $\mathrm{W}[j]$. That is 
$$
	\mathrm{W}[j]_{j_1 \cdots j_k} = \sum_{T \in \mathcal{T}^{red}_{k-1}} F_\gamma(T)
$$
where the sum runs over all reduced trees with $k-1$ leaves. 
\end{thm}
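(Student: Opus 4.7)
The plan is to prove the identity by induction on $k$, applying the planar functional derivative with respect to $j_{k+1}$ to reduce the case of $k+1$ external sources to that of $k$ sources. For the base case $k=1$, the set $\mathcal{T}^{red}_{0}$ consists of the single-vertex tree and $F_{\gamma}$ of this tree returns precisely $\mathrm{W}[j]_{j_{1}}$, so the identity holds trivially.

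For the inductive step, assuming the identity at level $k$, I would differentiate both sides with respect to $j_{k+1}$. The left-hand side becomes $\mathrm{W}[j]_{j_{1}\cdots j_{k+1}}$. On the right-hand side, I distribute the derivative across the non-commutative product of factors in each $F_{\gamma}(T)$ using the planar Leibniz rule of Section \ref{ssect:planarCalc}, combined with the planar chain rule $\partial/\partial j_{k+1} = \mathrm{W}[j]_{x\,j_{k+1}}\,\partial/\partial\Phi_{x}$ applied to $\Gamma[\Phi]$ factors. By Remark \ref{rmk:shadow}, only the factors along the rightmost branch of $T$ depend nontrivially on $j$; all factors in the shadow are $c$-numbers and are annihilated by the derivation.

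The core step is to recognize that differentiating each factor on the rightmost branch corresponds to a combinatorial growth operation that produces a new tree in $\mathcal{T}^{red}_{k}$ by attaching $j_{k+1}$ as a new rightmost leaf. Specifically, hitting a factor $\mathrm{W}[j]_{ab}$ sitting on an edge produces, via Legendre inversion of the type displayed in \eqref{con3point}, the combination $\mathrm{W}[j]_{ax}\,\Gamma[\Phi]_{xx'a'}\,\mathrm{W}[j]_{a'b}$, which amounts to subdividing that edge by a fresh internal vertex $w$ and attaching $j_{k+1}$ as the rightmost child of $w$. Hitting a factor $\Gamma[\Phi]_{x_{1}\cdots x_{r}}$ at an internal vertex $v$, via the chain rule, produces an extra $\mathrm{W}[j]_{x_{r+1}j_{k+1}}\,\Gamma[\Phi]_{x_{1}\cdots x_{r}x_{r+1}}$ factor, which amounts to attaching $j_{k+1}$ as a new rightmost child of $v$. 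I would then check that summing over these growth positions across all $T\in\mathcal{T}^{red}_{k-1}$ enumerates $\mathcal{T}^{red}_{k}$ bijectively and that the resulting amplitudes agree with $F_{\gamma}(T')$ on the grown tree $T'$.

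The main obstacle will be reconciling the asymmetric planar Leibniz rule, which collapses every factor to the \emph{left} of the differentiated one (in the rightmost-leaf-to-root ordering of item (\ref{item:2})) to its $j=0$ value, with the rightmost-branch convention defining $F_{\gamma}$ on the grown tree. One must verify that the portion of the old rightmost branch lying below the insertion point no longer belongs to the new rightmost branch of $T'$, so that the forced evaluation at $j=0$ of those factors exactly matches the shadow rule of $F_{\gamma}(T')$, while the portion above the insertion point remains on the new rightmost branch and retains its $j$-dependence. Pinning down this bookkeeping — and in particular confirming that the two types of growth operation do not double-count any tree — is the delicate part; once done, the remaining verifications reduce to substituting the standard Legendre identities relating $\mathrm{W}$ and $\Gamma$.
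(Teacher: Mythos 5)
Your proposal is correct and follows essentially the same route as the paper: an induction on $k$ in which the planar functional derivative $\partial/\partial_{j_{k+1}}$ is distributed over the factors of $F_\gamma(T)$ along the rightmost branch, with each of the two cases (hitting a $\mathrm{W}[j]$ edge factor, which subdivides the edge, versus hitting a $\Gamma[\Phi]$ vertex factor via the chain rule, which grafts a new rightmost child) identified with a growth operation producing exactly the reduced trees with $k$ leaves. The paper carries out this same case analysis on a generic example rather than in abstract generality, and your observation that the asymmetric Leibniz rule's forced evaluation at $j=0$ must match the shadow rule on the grown tree is precisely the bookkeeping the paper's example computations verify.
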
		

To avoid the introduction of cumbersome notation, we will give a graphical proof on a generic example using induction on $k$. It is enough to prove that
\begin{align*}
	\frac{\partial}{\partial_{j_{k+1}}} \mathrm{W}[j]_{j_1 \cdots j_k}  
	&= \sum_{T} \frac{\partial}{\partial_{j_{k+1}}} F_\gamma(T)\\
	&= \sum_{\tilde{T} \in \mathcal{T}^{red}_{k}} F_\gamma(\tilde{T}), 
\end{align*}
where the last sum runs over reduced trees $\tilde{T}$ with $k$ leaves. Let us consider the following example of the tree
$$
	\ExpTreeCa
$$
which under rules \eqref{item:1} and \eqref{item:2} is mapped to the tree 
$$
	\ExpTreeCaa
$$
corresponding to the expression
$$
	\mathrm{W}[0]_{a'_2 j_2}  	
	\mathrm{W}[0]_{a'_3 j_3}
	\mathrm{W}[j]_{a'_4 j_4} 
	\Gamma[\Phi]_{x'_1a'_3a'_4}
	\mathrm{W}[j]_{x_1 x'_1}
	\Gamma[\Phi]_{a'_1a'_2x_1}
	\mathrm{W}[j]_{j_1a'_1}
$$
Let us list case by case, where the functional derivation $\frac{\partial}{\partial_{j_{s}}}$ acts on non-constant terms, i.e., those with a $j$- (or $\Phi$-) dependence. Using the rules for the non-commutative calculus in \cite{cvitanovic2}, we first consider the case, where the functional derivation acts on the non-constant $\mathrm{W}[j]_{a_4 j_4}$ term. This yields
$$
	\mathrm{W}[0]_{a'_2 j_2} 
	\mathrm{W}[0]_{a'_3 j_3}
	\mathrm{W}[0]_{a'_4 j_4} 
	\mathrm{W}[j]_{a'_5 j_5}
	\Gamma[\Phi]_{x'_1a'_4a'_5}
	\mathrm{W}[j]_{x_1 x'_1}
	\Gamma[\Phi]_{x'_2a'_3x_1}
	\mathrm{W}[j]_{x_2 x'_2}
	\Gamma[\Phi]_{a'_1a'_2x_2}
	\mathrm{W}[j]_{j_1a'_1} 
$$
This expression is represented by the planar rooted tree with thickened rightmost branch
$$
	\ExpTreeCda
$$
The next case we consider is, when the functional derivation $\frac{\partial}{\partial_{j_{s}}}$ acts on $\Gamma[\Phi]_{x'_1a'_3a'_4}$   (using that $\frac{\partial}{\partial_{j_i}} =  \mathrm{W}[j]_{il} \frac{\partial}{\partial_{\Phi_l}}$). This generates the product 
$$
	\mathrm{W}[0]_{a'_2 j_2}
	\mathrm{W}[0]_{a'_3 j_3}  
	\mathrm{W}[0]_{a'_4 j_4} 
	\mathrm{W}[j]_{a'_5 j_5} 
	\Gamma[\Phi]_{x'_1a'_3a'_4a'_5}
	\mathrm{W}[j]_{x_1 x'_1}
	\Gamma[\Phi]_{a'_1a'_2x_1}
	\mathrm{W}[j]_{j_1a'_1}
$$
The corresponding rooted tree is 
$$
	\ExpTreeCba
$$
Let us now consider the case, where the functional derivation $\frac{\partial}{\partial_{j_{s}}}$ acts on the non-constant term $\mathrm{W}[j]_{x_1 x'_1}$. This yields
$$
	\mathrm{W}[0]_{a'_2 j_2} 
	\mathrm{W}[0]_{a'_3 j_3} 
	\mathrm{W}[0]_{a'_4 j_4}
	\mathrm{W}[0]_{x_1 x'_1}
	\Gamma[0]_{x'_1a'_3a'_4}
	\mathrm{W}[j]_{a'_5 j_5} 
	\Gamma[\Phi]_{x'_2x_1a'_5}
	\mathrm{W}[j]_{x_2 x'_2}
	\Gamma[\Phi]_{a'_1a'_2x_2}
 	\mathrm{W}[j]_{j_1a'_1}
$$
The corresponding rooted tree is 
$$
	\ExpTreeCca
$$
The next case is, where the functional derivation $\frac{\partial}{\partial_{j_{s}}}$ acts on the non-constant $\Gamma[\Phi]_{a'_1a'_2x_2}$ term (again using that $\frac{\partial}{\partial_{j_i}} =  \mathrm{W}[j]_{il} \frac{\partial}{\partial_{\Phi_l}}$). This yields
$$
	\mathrm{W}[0]_{x_1 x'_1}
	\mathrm{W}[0]_{a'_2 j_2} 
	\mathrm{W}[0]_{a'_3 j_3} 
	\mathrm{W}[0]_{a'_4 j_4} 
	\Gamma[0]_{x'_1a'_3a'_4}
	\mathrm{W}[j]_{a'_5 j_5} 
	\Gamma[\Phi]_{a'_1a'_2x_1a'_5}
	\mathrm{W}[j]_{j_1a'_1}
$$
corresponding to the planar rooted tree with thickened rightmost branch
$$
	\ExpTreeCea
$$
The last case is, where the functional derivation $\frac{\partial}{\partial_{j_{s}}}$ acts on the non-constant term $\mathrm{W}[j]_{j_1 a'_1}$, which yields
$$
	\mathrm{W}[0]_{a'_3 j_3} 
	\mathrm{W}[0]_{a'_4 j_4} 
	\Gamma[0]_{x'_1a'_3a'_4}
	\mathrm{W}[0]_{x_1 x'_1}
	\mathrm{W}[0]_{a'_2 j_2} 
	\Gamma[0]_{x'_2a'_2x_1}
	\mathrm{W}[0]_{x_2 x'_2}
	\mathrm{W}[j]_{a'_5 j_5} 
	\Gamma[\Phi]_{a'_1x_2a'_5}
	\mathrm{W}[j]_{j_1a'_1}
$$
corresponding to the planar rooted tree with thickened rightmost branch
$$
	\ExpTreeCfa
$$

From this example we deduce a general rule: functional differentiation of $F_\gamma(T)$, where $T$ is a reduced tree with, say, $k-1$ leaves, creates all the $F_\gamma(T')$, where $T'$ are reduced trees with $k$ leaves, and from which the theorem follows.

\smallskip

Observe how the functional derivation $\frac{\partial}{\partial_{j_{s}}}$ adds always a new leaf to the original rooted tree. In fact, we can represent this functional derivation in terms of a grafting operation on planar rooted trees with thickened rightmost branch. First we define the set (vector space) of all -- decreasing -- planar rooted trees -- with labeled leaves and -- with thickened rightmost branch by $T^r$ ($\mathcal{T}^r$). On $\mathcal{T}^r$ we define the operator $\mathfrak{r}_\bullet$, which augments the number of leaves by one. It maps every tree to a finite sum of trees following the rules: 
\begin{enumerate}

\item
For any tree $T \in T^r$ with $m$ internal vertices (decorated in decreasing order along any path from the root to any leaf of $T$ by the -- unprimed -- elements from $X_m:=\{x_1,x'_1,\ldots,x_m,x'_m\}$) and $n-1$ leaves (labeled in increasing order from left to right by elements from $[2,\ldots,n]$) the operator $\mathfrak{r}_\bullet$ starts at the root of $T$ (which is indexed by $1$), and connects it to a new thickened root vertex. The new root is indexed by $1$ and the label of the original root is replaced by the decoration by the element $x_{m+1}$. The thickened edges and vertices along the rightmost branch of $T$ are replaced by usual edges and vertices. This new decreasing rooted tree $T'$ with thickened root is concatenated from the right by a thickened vertex labeled by $n+1$, which is then connected to the root of $T'$ by a thickened edge. The result is a decreasing planar rooted tree $\mathfrak{r}_\bullet(T) \in T^r$, with a thickened rightmost branch, $n$ leaves, and with $m+1$ inner vertices decorated by $X_{m+1}:=\{x_1,x'_1,\ldots,x_{m+1},x'_{m+1}\}$. For instance
$$
	\mathfrak{r}_\bullet\Big(\hspace{-0.3cm}\begin{array}{c}\ExampleTreeAa \end{array}\hspace{-0.6cm}\Big)
	= \begin{array}{c}\ExampleTreeDd \end{array} + \cdots \qquad\
	\mathfrak{r}_\bullet\Big(\hspace{-0.3cm}\begin{array}{c}\ExpTreeCaa \end{array}\hspace{-0.6cm}\Big)
	= \begin{array}{c}\ExpTreeCfa \end{array} + \cdots 
$$

\item
Then the operator $\mathfrak{r}_\bullet$ is applied to the edges and vertices along the rightmost branch except for its leaf in the following way
\begin{itemize}

\item[-] When $\mathfrak{r}_\bullet$ is applied to a vertex $v$ (decorated by, say, $x_l \in X_m$), which is different from the root and leaf of the rightmost -- thickened -- branch of the tree $T \in T^r$, then it grafts a new leaf to this vertex, i.e., the tree $T$ is concatenated on its right by a thickened vertex which is then connected to the vertex $v$ by a thickened edge. The decoration of $v$ is unchanged. Hence this new vertex is a new leaf labeled by $n+1$. The edges and vertices of the part of the thickened rightmost branch of $T$ starting at $v$ are replaced by ordinary edges and vertices.  
$$
	\mathfrak{r}_\bullet\Big(\hspace{-0.3cm}\begin{array}{c}\ExampleTreeAa \end{array}\hspace{-0.6cm}\Big)
	= \begin{array}{c}\ExampleTreeBb \end{array} + \cdots \quad
	\mathfrak{r}_\bullet\Big(\hspace{-0.3cm}\begin{array}{c}\ExpTreeCaa \end{array}\hspace{-0.6cm}\Big)
	= \begin{array}{c}\ExpTreeCba \end{array}+ \cdots 
$$
$$
	\mathfrak{r}_\bullet\Big(\hspace{-0.3cm}\begin{array}{c}\ExpTreeCaa \end{array}\hspace{-0.6cm}\Big)
	= \begin{array}{c} \ExpTreeCea\end{array}+ \cdots 
$$

\item[-] When $\mathfrak{r}_\bullet$ is applied to an edge of the rightmost -- thickened -- branch of the tree $T \in T^r$, then it splits this edge by putting a thickened vertex on it, and it grafts a new leaf, which is labeled by $n+1$, to this new vertex. The thickened edges and vertices to the left of this new leaf are replaced by ordinary ones. Regarding the decreasing decoration, the following shift is applied. Note that the edge on which $\mathfrak{r}_\bullet$ just acted starts in a vertex decorated by $x_s \in X_m$, and it ends either in a leaf or it ends in a vertex $v$ of $T$, which is decorated by, say, $x_l \in X_m$, $l<s$. First the set of decorations $X_m$ is replaced by $X_{m+1}$, and the decorations $x_{s}, \ldots , x_m$ in $T$ are all shifted by one, i.e., $x_{s} \to x_{s+1}, \ldots , x_{m} \to x_{m+1}$. The new vertex put on the edge coming out of $v$ becomes a child of $v$, and is decorated by $x_{s}$. In case the decorating set was empty, the new vertex is decorated by $x_1 \in X_1$.

$$
	\mathfrak{r}_\bullet\Big(\hspace{-0.3cm}\begin{array}{c}\ExampleTreeAa \end{array}\hspace{-0.6cm}\Big)
	= \begin{array}{c} \ExampleTreeCc \end{array}+ \cdots \qquad\
	\mathfrak{r}_\bullet\Big(\hspace{-0.3cm}\begin{array}{c}\ExpTreeCaa  \end{array}\hspace{-0.6cm}\Big)
	= \begin{array}{c}\ExpTreeCca \end{array}+ \cdots 
$$

\end{itemize}

Hence, the functional derivation $\frac{\partial}{\partial_{j_{s}}}$ can be interpreted as a certain right-growth operation on decreasing planar rooted trees defined by grafting a new leaf on the right. The special nature of the action of $\mathfrak{r}_\bullet$ on the root is due to the fact that we prefer to work with rooted trees instead of planted planar rooted trees. The single extra edge going out of the root of the latter would explicitly account for the term $\mathrm{W}[j]_{j_1a'_1}$ on which the functional derivation $\frac{\partial}{\partial_{j_{s}}}$ acts. Going back to Remark \ref{rmk:shadow} we understand now that the part of the tree that lies in the shadow of the rightmost branch is the part that is not accessible to the right-growth operation, and hence it corresponds to terms $\mathrm{W}[0]_{\cdots}$ and $\mathrm{\Gamma}[0]_{\cdots}$, consisting of $c$-numbers. 

\end{enumerate}



\end{document}